\newif\ifANON
\title{Dimension Reduction for Clustering: \\ The Curious Case of Discrete Centers}
\author{Anonymous Authors}
\date{}
\author{%
  Shaofeng H.-C. Jiang%
  \thanks{Email: \texttt{shaofeng.jiang@pku.edu.cn}}
  \\ Peking University
  \and
  Robert Krauthgamer%
  \thanks{The Harry Weinrebe Professorial Chair of Computer Science.
    Work partially supported by the Israel Science Foundation grant \#1336/23.
    Email: \texttt{robert.krauthgamer@weizmann.ac.il}
  }
  \\ Weizmann Institute of Science
  \and
  Shay Sapir%
  \thanks{Email: \texttt{shay.sapir@weizmann.ac.il}}
  \\ Weizmann Institute of Science
  \and
  Sandeep Silwal%
  \thanks{Email: \texttt{silwal@cs.wisc.edu}}
  \\ University of Wisconsin-Madison
  \and
  Di Yue%
  \thanks{Email: \texttt{di\_yue@stu.pku.edu.cn}}
  \\ Peking University
}
\begin{document}

\maketitle

\begin{abstract}
    The Johnson-Lindenstrauss transform is a fundamental method for dimension reduction in Euclidean spaces, that can map any dataset of $n$ points into dimension $O(\log n)$ with low distortion of their distances. This dimension bound is tight in general, but one can bypass it for specific problems. Indeed, tremendous progress has been made for clustering problems, especially in the \emph{continuous} setting where centers can be picked from the ambient space $\mathbb{R}^d$. Most notably, for $k$-median and $k$-means, the dimension bound was improved to $O(\log k)$ [Makarychev, Makarychev and Razenshteyn, STOC 2019].

    We explore dimension reduction for clustering in the \emph{discrete} setting, where centers can only be picked from the dataset, and present two results that are both parameterized by the doubling dimension of the dataset, denoted as $\operatorname{ddim}$. The first result shows that dimension $O_{\epsilon}(\operatorname{ddim} + \log k + \log\log n)$ suffices, and is moreover tight, to guarantee that the cost is preserved within factor $1\pm\epsilon$ for every set of centers. Our second result eliminates the $\log\log n$ term in the dimension through a relaxation of the guarantee (namely, preserving the cost only for all approximately-optimal sets of centers), which maintains its usefulness for downstream applications.

    Overall, we achieve strong dimension reduction in the discrete setting, and find that it differs from the continuous setting not only in the dimension bound, which depends on the doubling dimension, but also in the guarantees beyond preserving the optimal value, such as which clusterings are preserved. 
\end{abstract}

\section{Introduction}
\label{sec:intro}

Oblivious dimension reduction, in the spirit of the Johnson and Lindenstrauss (JL) Lemma~\cite{Johnson1984ExtensionsOL}, 
is a fundamental technique for many Euclidean optimization problems over large, high-dimensional datasets.
It has a strong guarantee: 
there is a random linear map $\pi:\R^d\to \R^t$,
for a suitable target dimension $t=O(\eps^{-2}\log n)$,
such that for every $n$-point dataset $P\subset\R^d$,
with high probability, $\pi$ preserves all pairwise distances in $P$ within factor $1\pm \eps$:
\begin{equation*} 
  \forall x,y\in P, 
  \qquad
  \|\pi(x)-\pi(y)\|\in (1\pm \eps)\|x-y\| ,
\end{equation*}
where throughout $\norm{\cdot}$ is the Euclidean norm.
This guarantee is extremely powerful, particularly for algorithms:
to solve a Euclidean problem on input $P$, 
one can apply the map $\pi$,
solve the same problem on $\pi(P)$, 
which is often more efficient since $\pi(P)$ lies in low dimension, 
and ``lift'' the solution back to the original dimension 
(as discussed further in \cref{sec:comparison_definitions}).

However, many problems require computational resources
that grow exponentially with the dimension (the curse of dimensionality),
and hence even dimension $t=O(\eps^{-2}\log n)$ might be too large. 
Unfortunately, this dimension bound is tight in general,
i.e., for preserving all pairwise distances~\cite{LarsenN17},
but interestingly one may bypass it for \emph{specific} optimization problems,
by showing that the optimal value/solution is preserved 
even when the dimension is reduced beyond the JL Lemma, 
say to dimension $t=O(\eps^{-2})$, which is completely independent of $n$.
This raises an important question:

\begin{tcolorbox}[colback=white,colframe=black,title=]
\begin{center}
For which problems does dimension $o(\eps^{-2}\log n)$ suffice for oblivious dimension reduction?
\end{center}
\end{tcolorbox}
Prior work has revealed an affirmative answer for several key problems, as we discuss below. This paper studies this question for fundamental clustering problems, captured by $(k,z)$-clustering,
which includes the famous $k$-means and $k$-median problems as its special cases. 
In $(k,z)$-clustering, the input is a dataset $P\subset \R^d$, 
and the goal is to find a set of centers $C$ of size $|C|\leq k$ that minimizes 
\[
  \cost^z(P,C)\coloneqq \sum_{p\in P}\dist^z(p,C) ,
  \ \text{ where } \ 
  \dist^z(p,C)\coloneqq \min_{c\in C}\|p-c\|^z.
\]
We can distinguish two variants, differing in their space of potential centers.
In the \emph{continuous} variant, $C$ is a subset of $\R^d$ (the centers lie in the ambient space), 
and in the \emph{discrete} variant, also called sometimes \emph{$k$-medoids},
$C$ is a subset of $P$ (or maybe of a larger set given as input).
A key feature of the discrete version, is that $\pi:P\to \pi(P)$ is invertible, 
hence each potential center in $\pi(P)$ corresponds to a unique potential center in $P$ 
(in contrast, a potential center in the ambient space $\R^t$ has many preimages in $\R^d$). 
Thus, in the discrete version, a set of centers computed for the dataset $\pi(P)$ 
can be mapped back to the higher dimension and serve as centers for the dataset $P$. See \cref{sec:related_works} for a discussion on practical applications of the discrete variant.

The continuous variant is a success story of the ``beyond JL'' program.
A series of papers \cite{DBLP:conf/nips/BoutsidisZD10, DBLP:conf/stoc/CohenEMMP15, DBLP:conf/stoc/BecchettiBC0S19, MakarychevMR19}
has culminated showing that 
target dimension $t=O(\eps^{-2}\log\tfrac{k}{\eps})$,
which is independent of $n$, 
suffices to preserve all the solutions within factor $1\pm\eps$.
Curiously, Charikar and Waingarten~\cite{CW22} observed that 
the discrete variant \emph{behaves very differently}: 
certain instances require $t=\Omega(\log n)$, even for $k=1$
(when using the standard Gaussian-based map $\pi$). 
Counterintuitively, restricting the centers to be data points
makes dimension reduction significantly harder!

To bypass this limitation, we consider the doubling dimension,
which was identified in previous work as a natural parameter 
that is very effective in achieving ``beyond JL'' bounds~\cite{IndykN07,NarayananSIZ21,JKS23,HuangJKY25,GJKSSSW25_maxmatching}.
Formally, the \emph{doubling dimension} of $P$, denoted $\ddim(P)$, 
is the smallest positive number such that every ball in the finite metric $P$
can be covered by $2^{\ddim(P)}$ balls of half the radius.
For several problems, including nearest neighbor~\cite{IndykN07}, 
facility location~\cite{NarayananSIZ21,HuangJKY25}, 
and maximum matching~\cite{GJKSSSW25_maxmatching}, 
target dimension $t=O(\eps^{-2}\log\tfrac{1}{\eps}\cdot \ddim(P))$ suffices.
Note that restricting the doubling dimension does not immediately imply
a better dimension reduction of the JL flavor, 
as there are datasets $P\subset \R^d$ with $\ddim(P)=O(1)$ where no linear map 
can approximately preserve all pairwise distances (see e.g., \cite[Remark 4.1]{IndykN07}).

\subsection{Main results}\label{sec:main_results}
We present the first dimension reduction results for discrete $(k,z)$-clustering, along with matching lower bounds.
Our first result (\Cref{thm:informal_main_1}) provides a strong approximation guarantee, %
but requires a $\log\log n$ term in the target dimension, which we show is necessary. 
Our main result (\Cref{thm:informal_main_2}) avoids this $\log\log n$ term, 
through a relaxation of the guarantee that maintains its algorithmic usefulness,
e.g., it still implies that the optimal value is preserved up to factor $1\pm\eps$.

In all our results, the random linear map $\pi$ is given by 
a matrix $G\in\R^{t\times d}$ of iid Gaussians $N(0,\tfrac{1}{t})$, 
which we refer to as a \emph{Gaussian JL map}.
This is nowadays a standard JL map~\cite{IndykM98,DasguptaG03}, 
and our results may extend to other JL maps, similarly to prior work in this context.
We denote the optimal value of discrete $(k,z)$-clustering by
\[
  \opt^z(P) = \min_{C\subset P, |C|=k} \cost^z(P,C) , 
\]
however for sake of exposition, we omit $z$ and focus on $z=1$ or $z=2$, 
which are discrete $k$-median and $k$-means.
We use the notation $\tilde{O}(f)$ to hide factors that are logarithmic in $f$, 
although below it only hides a $\log\tfrac{1}{\eps}$ factor.

\begin{theorem}[Informal version of \Cref{thm:k_medoid_forall_centers_partitions}]\label{thm:informal_main_1}
    For suitable
    $t=\tilde{O}(\eps^{-2}(\ddim(P)+ \log k + \log\log n))$,
    with probability at least $2/3$,
    \begin{enumerate}
        \item \label{item:opt_ub} $\opt(G(P))\leq (1+\eps)\opt(P)$, and
        \item \label{item:opt_lb} for all $C\subseteq P, |C|\leq k$, we have  $\cost(G(P),G(C))\geq (1-\eps)\cost(P,C)$.
    \end{enumerate}
\end{theorem}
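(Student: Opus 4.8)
The plan is to handle the two items very asymmetrically: Item~1 is essentially a concentration bound for a single clustering, while Item~2 carries all of the difficulty, with the three terms $\ddim$, $\log k$, $\log\log n$ in the target dimension each entering for a distinct reason. For Item~1, fix an optimal $C^*$ with its assignment $\sigma^*$; then $\opt(G(P))\le\cost(G(P),G(C^*))\le\sum_{p\in P}\|G(p-\sigma^*(p))\|^z$, and the right-hand side is a Gaussian-driven sum of $n$ terms whose mean is exactly $\cost(P,C^*)=\opt(P)$. For $z=2$ it equals $\tfrac{1}{t}\sum_{i\le t}g_i^\top M g_i$ with $M=\sum_p(p-\sigma^*(p))(p-\sigma^*(p))^\top$, of variance $O(\tfrac{1}{t})(\operatorname{tr}M)^2$, so Hanson--Wright gives a $(1\pm\eps)$ deviation except with probability $e^{-\Omega(\eps^2 t)}$; for $z=1$ the same follows from Gaussian Lipschitz concentration of each $\|G(p-\sigma^*(p))\|$ together with a sub-exponential bound for the sum. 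Thus $t=O(\eps^{-2})$ already suffices for Item~1 at constant probability.

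For Item~2, rewrite the goal as: for every $C$, $\sum_{p\in\mathrm{Bad}(C)}\dist^z(p,C)\le\eps\cost(P,C)$, where $\mathrm{Bad}(C)=\{p:\exists c\in C,\ \|G(p-c)\|<(1-\eps)\dist(p,C)\}$; indeed every $p\notin\mathrm{Bad}(C)$ satisfies $\dist^z(G(p),G(C))\ge(1-\eps)^z\dist^z(p,C)$. Bucket the points of $P$ by the geometric scale $\sigma_\ell:=(1+\eps)^\ell$ of $\dist(p,C)$. The scales more than $O(\eps^{-1}\log(n/\eps))$ below the top scale carry, in total, at most an $\eps$-fraction of $\cost(P,C)$ (only $n$ points, each contributing a geometrically negligible amount), so after a standard reduction to $\operatorname{poly}(n)$ aspect ratio only $O(\eps^{-1}\log(n/\eps))$ scales are ever relevant; the eventual union bound over these scales produces the $\log\log n$ term. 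It thus suffices to show that for every relevant scale $\ell$ and every $C$, at most an $\eps$-fraction of the points at scale $\ell$ are lost.

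To analyze a fixed scale $\ell$, take a $(\delta\sigma_\ell)$-net $N_\ell\subseteq P$ of $P$ for a small power $\delta$ of $\eps$, and snap every center of $C$ to its nearest net point $\widehat c\in N_\ell$. After conditioning on $G$ not over-contracting the relevant snapping vectors $c-\widehat c$ and on $G$ being boundedly-behaved on $\operatorname{span}(P)$ (so each snapping displacement satisfies $\|G(c-\widehat c)\|\le\eps\sigma_\ell$), the event ``$p$ is lost at scale $\ell$ for $C$'' is implied by the $C$-independent event $E_\ell(p):=\{\exists\,\widehat c\in N_\ell:\ \|p-\widehat c\|\ge(1-\delta)\sigma_\ell,\ \|G(p-\widehat c)\|<(1-\Omega(\eps))\sigma_\ell\}$. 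Since $N_\ell$ is a net, $B(p,2\sigma_\ell)$ meets only $(O(1/\delta))^{O(\ddim)}$ net points, so with the single-pair estimate $\Pr[\|G(v)\|<(1-c\eps)\|v\|]\le e^{-\Omega(\eps^2 t)}$ and a geometric tail over the farther net points, $\Pr[E_\ell(p)]\le(1/\delta)^{O(\ddim)}e^{-\Omega(\eps^2 t)}$ --- already small once $t=\Omega(\eps^{-2}\ddim\log(1/\eps))$, the source of the $\ddim$ term. The $\log k$ term will come from a further union bound over the $\le k$ clusters of $C$.

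The main obstacle is the last step: passing from ``$\Pr[E_\ell(p)]$ is small for each fixed $p$'' to ``for every $C$, at most an $\eps$-fraction of the scale-$\ell$ points lie in $\mathrm{Bad}_\ell:=\{p:E_\ell(p)\}$.'' A naive union bound over the $n$ points --- let alone over a net of all $k$-subsets $C$, which still has size $n^{\Omega(k\,\ddim)}$ --- is hopeless and would force $t=\Omega(\eps^{-2}\log n)$, so some points must be allowed to be lost and their damage controlled structurally. The key step I would aim to establish is that $\mathrm{Bad}_\ell$ is not merely small in expectation but \emph{localized at scale $\sigma_\ell$}: every ball of radius $O(\sigma_\ell)$ should contain at most an $\eps$-fraction of $\mathrm{Bad}_\ell$ relative to the data points it contains, so that $\mathrm{Bad}_\ell$ cannot disproportionately populate the (cluster-wise, ball-shaped) set of scale-$\ell$ points of any $C$; combining this per-cluster with a union over the $\le k$ clusters and the $O(\eps^{-1}\log(n/\eps))$ scales then closes the argument. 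Proving the localization looks like the genuinely hard part: it seems to need a true concentration --- not just an expectation --- bound for the number of $\mathrm{Bad}_\ell$-points inside each ball, which I would attempt via a bounded-influence/martingale argument over the rows of $G$ combined with the net structure, followed by a union bound over the $(O(1/\delta))^{O(\ddim)}$-entropy family of candidate balls --- altogether exactly accounting for the claimed $\tilde{O}(\eps^{-2}(\ddim+\log k+\log\log n))$.
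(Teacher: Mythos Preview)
Your Item~1 argument is fine and matches the paper's. The gap is in Item~2: the ``localization'' claim you flag as the hard step is not just hard but false at the target parameters, and the missing idea is precisely what the paper supplies. The paper does not net $P$ at $C$-dependent distance scales. It anchors everything to an \emph{optimal} solution $C^*=\{c_1^*,\ldots,c_k^*\}$ with clusters $S_1^*,\ldots,S_k^*$: for $i\in[0,O(\log n)]$ set $r_i=r_0/2^i$ and let $N_i$ be an $\eps^3 r_i$-net of $\bigcup_j S_j^*\cap B(c_j^*,r_i)$, so $|N_i|\le k\,\eps^{-O(\ddim)}$ and $\bigl|\bigcup_i N_i\bigr|\le k\,\eps^{-O(\ddim)}\log n$. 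A single union bound then gives (i) all pairwise distances inside $\bigcup_i N_i$ preserved to $1\pm\eps$ and (ii) every snapping displacement bounded, $\|G(p-u_p)\|\le 10\eps^3 r_i$; this is exactly where $\ddim+\log k+\log\log n$ enters. Now for any $C\subseteq P$ and any assignment $f$, both $p$ and $f(p)$ are data points, hence each has a $C$-\emph{independent} level (via its distance to $C^*$) and a net proxy $u_p$, $u_{f(p)}$. The generalized triangle inequality together with $\sum_p r_p^z\le O(\opt(P))$ and $r_{f(p)}^z\le O(\|p-f(p)\|^z+\|p-C^*(p)\|^z)$ finishes the bound, with no $C$-dependent combinatorics whatsoever.

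Your route, by contrast, works with $\dist(p,C)$-scales and must then argue that in every ball of radius $O(\sigma_\ell)$ the fraction of $\mathrm{Bad}_\ell$ points is at most $\eps$. But a ball containing a single data point $p$ with $E_\ell(p)$ already violates this, and with $t=o(\eps^{-2}\log n)$ such $p$ exist with high probability. Any repair must union over all candidate ball centers, not just $k$ of them (those centers depend on $C$), which drags you back to $t=\Omega(\eps^{-2}\log n)$; the martingale-over-rows idea does not obviously help, since the events $E_\ell(p)$ for nearby $p$ are driven by $G$ acting on heavily overlapping sets of net vectors and are strongly correlated. The fix is exactly to make the net structure $C$-independent by building it around $C^*$, after which the total snapping error is globally bounded by $\opt(P)\le\cost(P,C)$ uniformly in $C$, and there is nothing to localize.
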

This theorem has immediate algorithmic applications.
First, it implies that the optimal value is preserved, 
i.e., $\opt(G(P))\in (1\pm\eps)\opt(P)$. 
Second, for every $C\subset P$ and $\beta>1$,
if the set of centers $G(C)$ is a $\beta$-approximate solution for the instance $G(P)$, 
then $C$ is a $(1+O(\eps))\beta$-approximate solution for the instance $P$.
Therefore, the theorem fit into the general paradigm of using oblivious linear maps --- apply the mapping, solve the problem in low dimension, and lift the centers back to the higher dimension.

It is interesting to compare our result with the continuous variant of $(k,z)$-clustering.
On the one hand, to preserve the optimal value in the continuous variant,
we know from~\cite{MakarychevMR19} that
target dimension $O(\eps^{-2}\log\tfrac{k}{\eps})$ suffices, independently of $\ddim(P)$.
On the other hand, \Cref{thm:informal_main_1} further provides a ``for all centers'' guarantee, 
which is not attainable in the continuous version (by any linear map),
by simply considering centers in the kernel of the linear map (see \Cref{thm:LB_continuos_forall_centers}).
We examine and discuss these guarantees more carefully in \cref{sec:comparison_definitions}.

\paragraph{Matching lower bounds.}
The results in \Cref{thm:informal_main_1} are nearly tight for Gaussian JL maps, and likely for all oblivious linear maps. 
It is known that achieving $\opt(G(P))\in (1\pm\eps)\opt(P)$
requires target dimension $t=\Omega(\log k)$,
even for a dataset $P$ of doubling dimension $O(1)$ \cite{NarayananSIZ21},
and another known lower bound is that $t=\Omega(\ddim(P))$, even for $k=O(1)$ \cite{CW22}.
It is easy to tighten these bounds with respect to the dependence on $\eps$,
and we include it in \Cref{sec:tighten_known_lowerbound} for completeness.
We complete the picture, and show in \Cref{thm:lowerbound_loglogn_medoid_forall_c}
the multiplicative approximation of \Cref{thm:informal_main_1}
requires dimension $t=\Omega(\eps^{-2}\log\log n)$, 
even for $k=1$ and a dataset $P$ of doubling dimension $O(1)$.

To get some intuition about the discrete variant, 
we briefly recall the hard instance of \cite{CW22},
taking $z=1$ for simplicity. 
Consider $k=2$, and let $P$ be the first $n$ standard basis vectors,
thus $\ddim(P)=\log n$.
The pairwise distances all equal $\sqrt{2}$, hence $\opt(P)=\sqrt{2}\cdot n$. 
The standard basis vectors form a well-known hard instance for the JL Lemma, hence,
when using target dimension $t=o(\eps^{-2}\log n)$, with high probability, there exists $j_1\in [\tfrac{n}{2}]$ such that $\|Ge_{j_1}\|<1-10\eps$.  Similarly, let $j_2>\tfrac{n}{2}$ be such an index for the last $\tfrac{n}{2}$ standard basis vectors.
Let $Ge_{j_1},Ge_{j_2}$ be the two centers for $G(P)$, 
and assign the first $\tfrac{n}{2}$ basis vectors to $Ge_{j_2}$
and the last $\tfrac{n}{2}$ vectors to $Ge_{j_1}$. 
Now a simple argument using the independence between the two halves,
see \Cref{sec:tighten_known_lowerbound}, shows that 
$\opt(G(P))\leq (1-\eps)\sqrt{2} \cdot n=(1-\eps)\opt(P)$ with probability $2/3$.

\paragraph{A relaxed guarantee.}
Our main result avoids the $\log\log n$ term in \Cref{thm:informal_main_1}
by slightly relaxing the guarantee, while keeping it useful for downstream applications.

\begin{theorem}[Informal version of \Cref{thm:k-medoid_no_loglogn_improved}]\label{thm:informal_main_2}
    For suitable 
    $t=\tilde{O}(\eps^{-2}(\ddim(P)+ \log k))$,
    with probability at least $2/3$,
    \begin{enumerate}
        \item $\opt(G(P))\leq (1+\eps)\opt(P)$, and
        \item for all $C\subseteq P, |C|\leq k$, we have  $\cost(G(P),G(C))\geq \min\{(1-\eps)\cost(P,C),100\opt(P)\}$.
    \end{enumerate}
\end{theorem}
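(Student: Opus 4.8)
The plan is to upgrade the proof of \Cref{thm:informal_main_1}. Part~1, the bound $\opt(G(P))\le(1+\eps)\opt(P)$, is unchanged: fix an optimal center set $C^\star$ for $P$ and control $\cost(G(P),G(C^\star))$ from above. Writing $C(p)$ for the center of a set $C$ nearest to $p$, this is an expansion (upper-tail) statement for a single fixed solution, so a union bound over the $n$ points --- each distance $\norm{p-C^\star(p)}$ expands by more than a $1+\eps$ factor with probability $e^{-\Omega(\eps^2 t)}$, and passing to the nearest center only lowers the cost --- already gives $\opt(G(P))\le\cost(G(P),G(C^\star))\le(1+\eps)\opt(P)$; this step uses neither $\ddim(P)$ nor $k$. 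All the effort goes into Part~2: fix $C\subseteq P$ with $|C|\le k$, prove the stated lower bound on $\cost(G(P),G(C))$ for this $C$ with high probability, and union bound. The organizing idea is a dichotomy driven by $\cost(P,C)$ versus $100\opt(P)$: the center sets whose cost is \emph{fragile} under $G$ --- those that can contract by much more than a $1-\eps$ factor, i.e.\ the instances responsible for the $\log\log n$ term in \Cref{thm:informal_main_1} --- all turn out to have cost far above $\opt(P)$, and there the relaxed guarantee asks for almost nothing.

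\textbf{Expensive regime: $\cost(P,C)>100\opt(P)$.} Here we need only $\cost(G(P),G(C))\ge 100\opt(P)$, and the large multiplicative slack makes this the easy case. One route: for a point $p$ far from $C$ and its center $c=C(p)$, the triangle inequality gives $\norm{Gp-Gc}\ge\norm{Gc-GC^\star(p)}-\norm{Gp-GC^\star(p)}$; the sum of the subtracted terms is at most $(1+\eps)\opt(P)$ by Part~1, while the sum of the terms $\norm{Gc-GC^\star(p)}$ is, up to a constant, at least $\cost(P,C)$ once the few center-to-center quantities that matter are pinned down. Because only a constant-factor estimate is needed, these quantities can be discretized by a net whose size is controlled by $\ddim(P)$ and $k$, crucially not by $n$; a union bound at $t=O(\eps^{-2}(\ddim(P)+\log k))$ then leaves a constant fraction of $\cost(P,C)>100\opt(P)$ intact.

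\textbf{Cheap regime: $\cost(P,C)\le 100\opt(P)$.} Now we must preserve the cost to within $1-\eps$. Write $C_c=\{p:C(p)=c\}$ for the clusters and refine each by distance scale, $C_{c,j}=\{p\in C_c:\norm{p-c}\in[2^j,2^{j+1})\}$, so that $\cost(P,C)=\sum_{c,j}\sum_{p\in C_{c,j}}\norm{p-c}$ and all the terms inside a block $(c,j)$ agree up to a factor~$2$. The core lemma is a one-sided concentration for a single block: conditioning on $Gc$, the vectors $Gp-Gc$ for $p\in C_{c,j}$ behave like essentially independent Gaussian vectors of comparable norm, which yields $\Pr\bigl[\sum_{p\in C_{c,j}}\norm{Gp-Gc}<(1-\eps)\sum_{p\in C_{c,j}}\norm{p-c}\bigr]\le\exp(-\Omega(\eps^2 t\,|C_{c,j}|))$, with a matching upper tail. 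One then splits the blocks into ``robust'' ones --- with $|C_{c,j}|$ large enough that this tail beats a union bound over a net of size $2^{\tilde{O}(\ddim(P))}$ for the center location at the scale-appropriate resolution, times a $k^{O(1)}$ factor recording which block a point lies in --- and ``fragile'' ones. The key point is that when $\cost(P,C)\le 100\opt(P)$, the fragile blocks carry in total only an $O(\eps)$ fraction of $\cost(P,C)$, so their worst-case contraction is absorbed into the $1-\eps$ slack. Since the effective union bound here has size $2^{\tilde{O}(\ddim(P)+\log k)}$, independent of $n$, the target dimension is $t=\tilde{O}(\eps^{-2}(\ddim(P)+\log k))$ with no $\log\log n$; stitching the two regimes together at the threshold $100\opt(P)$ is routine given the constant slack.

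\textbf{The main obstacle.} Two steps carry the difficulty. First, the per-block concentration must cope with the correlation among the distances $\norm{Gp-Gc}$, $p\in C_{c,j}$, induced by the shared image $Gc$; conditioning on $Gc$ (or just on $\norm{Gc}$) and bounding the residual dependence among the $Gp-Gc$, so that their sum still enjoys an $\exp(-\Omega(\eps^2 t\,|C_{c,j}|))$ lower tail, is the technical heart. Second, and this is where the gain over \Cref{thm:informal_main_1} really lives, one must show that in the cheap regime the fragile part of the cost is an $O(\eps)$ fraction of $\cost(P,C)$ and that the relevant configurations admit a net of $n$-independent size; a careless version of this step --- many distance scales, each with a handful of far points, chained together --- is exactly what re-injects a $\log\log n$ factor, and the hypothesis $\cost(P,C)\le 100\opt(P)$ is precisely what caps both the number and the weight of the dangerous scales.
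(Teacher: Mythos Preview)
The central lemma of your cheap-regime argument is false. You claim
\[
\Pr\Bigl[\textstyle\sum_{p\in C_{c,j}}\|Gp-Gc\|<(1-\eps)\sum_{p\in C_{c,j}}\|p-c\|\Bigr]\le\exp\bigl(-\Omega(\eps^2 t\,|C_{c,j}|)\bigr),
\]
arguing that after conditioning on $Gc$ the vectors $G(p-c)$ are ``essentially independent.'' They are not: their joint law is Gaussian with covariance governed by $\langle p-c,\,p'-c\rangle$, and the block definition (common center, common distance scale) says nothing about these inner products. If all $p\in C_{c,j}$ lie in a tiny ball at distance $\approx 2^j$ from $c$, the directions $p-c$ are nearly parallel, the norms $\|G(p-c)\|$ are nearly equal, and the sum has lower-tail probability only $e^{-\Omega(\eps^2 t)}$, independent of $|C_{c,j}|$. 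Since your whole plan is to trade block size for a tail strong enough to beat a net, this sinks the argument. The paper's substitute, \Cref{lemma:preserve_sum}, is of a genuinely different nature: for a \emph{fixed} center $u$ it gives $\sum_{p\in P'}\|Gp-Gu\|\ge(1-\eps)^3\sum_{p\in P'}\|p-u\|-\tfrac{\eps}{k^2}\opt(P)$ simultaneously for \emph{all} $P'\subseteq P$, with failure probability $k^{O(1)}\eps^{-O(1)}e^{-\Omega(\eps^2 t)}$. The additive $\tfrac{\eps}{k^2}\opt$ slack is precisely what absorbs the correlation you ignore, and the lemma itself is derived from the \cite{MakarychevMR19} continuous $1$-median machinery, not from elementary concentration.

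Two further structural gaps. First, your decomposition is indexed by the query $C$: the block $C_{c,j}$ is a Voronoi cell of $C$ sliced by scale, so even a perfect per-block bound does not transfer across center sets by ``netting the center location,'' because the \emph{subset} of points falling in the block still varies over all of $C$. The paper instead builds every net, and crucially a per-cluster \emph{threshold level} $\ell_i$ (chosen so that $|P_{\ell_i}^i|\cdot r_{\ell_i}>\alpha\opt$), around the fixed optimal solution $C^\star$; all union bounds are then over objects determined before $C$ is revealed, and the dichotomy is ``is each $c_i^\star$ covered by $C$ at scale $r_{\ell_i}$'' rather than ``is $\cost(P,C)$ large.'' Second, throughout you lower-bound $\sum_p\|Gp-GC(p)\|$, but $\cost(G(P),G(C))=\sum_p\|Gp-Gf(p)\|$ where $f(p)$ realizes $\dist(Gp,G(C))$ in the \emph{target} space; the paper devotes a separate case analysis (the ``far range'' in Case~2) to the mismatch $f(p)\ne C(p)$, and your proposal does not acknowledge this issue at all.
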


This theorem implies that the optimal value is preserved, 
i.e., $\opt(G(P))\in (1 \pm\eps)\opt(P)$. 
Let us further examine which solutions are preserved under this guarantee:
For all $C\subset P$ and $1<\beta< \tfrac{100}{1+\eps}$, 
if the set of centers $G(C)$ is a $\beta$-approximate solution for the instance $G(P)$, 
then $C$ is a $(1+O(\eps))\beta$-approximate solution for the instance $P$.
Recall that for \Cref{thm:informal_main_1}, we had a similar claim, but without the restriction $\beta< \tfrac{100}{1+\eps}$.
The constant $100$ here is arbitrary, and can be changed to any $\alpha>2$, 
at the cost of increasing the target dimension by an additive $O(\eps^{-2}\log\log \alpha)$ term.

\subsection{Various notions for preserving solutions}
\label{sec:comparison_definitions}

We study several definitions for dimension reduction for $k$-clustering. 
All these definitions require (perhaps implicitly) that $\opt(G(P))\leq (1+\eps)\opt(P)$, 
i.e., that the optimal value has bounded expansion.
This direction is often easy because it suffices to analyze one optimal solution for $P$.
In the other direction, one may naively require that $\opt(G(P))\geq (1-\eps)\opt(P)$,
however this is rather weak, as it does not guarantee that \emph{solutions} are preserved.
Moreover, even requiring that an optimal solution for $G(P)$ is a near-optimal solution for $P$ is quite limited,
because a near-optimal solution for $G(P)$,
say one found by a $(1+\eps)$-approximation algorithm, 
may be lifted to a poor solution for $P$. 
In fact, such a phenomenon was observed for minimum spanning tree (MST) when using target dimension $t=o(\log n)$:
an optimal MST of $G(P)$ is a $(1+\eps)$-approximate MST of $P$,
however a $(1+\eps)$-approximate MST of $G(P)$ may have large cost for $P$ \cite{NarayananSIZ21}.
Ideally, we want the cost of \emph{every solution} to have bounded contraction, 
as it allows to lift any solution for $G(P)$ to a solution for $P$,
and we thus consider several different notions for the set of solutions, as follows.
For simplicity, we present these for $z=1$ in the discrete setting, 
but they extend naturally to all $z\geq 1$ and to the continuous setting.
\begin{enumerate}%
    \item \textbf{Partitions.} 
    A solution is a partition $\P=(P_1,\ldots,P_k)$ of $P$. Its cost is defined as
    $\cost(\P) \coloneqq \sum_{i=1}^k\min_{c\in P_i}\sum_{p\in P_i}\|p-c\|$.
    \item \textbf{Centers.} 
    A solution is a set of centers $C=(c_1,\ldots,c_k)\subseteq P$. Its cost is defined as
    $\cost(P,C) \coloneqq \sum_{p\in P}\dist(p,C)$.
    \item \textbf{Centers and partitions.} 
    A solution is a partition $\P=(P_1,\ldots,P_k)$ of $P$ and a set of centers $C=(c_1,\ldots,c_k)\subseteq P$. Its cost is defined as
    $\cost(\P,C) \coloneqq \sum_{i=1}^k\sum_{p\in P_i}\|p-c_i\|$.
\end{enumerate}
These definitions are fairly natural, and were used in prior work on dimension reduction,
e.g., partition-based solutions were used in \cite{MakarychevMR19} for $k$-means and $k$-median, 
and center-based solutions were used in \cite{JKS23} for $k$-center.
It was observed in \cite{CW22} that not all ``for all'' guarantees are the same; 
in particular, ``for all centers'' and ``for all partitions'' are incomparable.
However,``for all centers and partitions'' is clearly stronger than both.

Next, we define contraction for solutions, capturing the two notions in \Cref{thm:informal_main_1,thm:informal_main_2}.
The notion in \Cref{thm:informal_main_1} is simply of \emph{multiplicative contraction}: A solution $S$ has $(1-\eps)$-contraction if $\cost(G(S))\geq (1-\eps)\cost(S)$.
The notion in \Cref{thm:informal_main_2} is new, at least in the context of dimension reduction, and goes as follows. %
\begin{definition}[Relaxed Contraction]
    A solution $S$ has $\alpha$-relaxed $(1-\eps)$-contraction 
    (for $\alpha>1$, $\eps>0$)
    if $\cost(G(S))\geq \min\{\alpha\opt(P),(1-\eps)\cost(S)\}$.
\end{definition}

Using these definitions, we can restate \Cref{thm:informal_main_1} as having $(1-\eps)$-contraction for all centers, 
and restate \Cref{thm:informal_main_2} as achieving $100$-relaxed $(1-\eps)$-contraction for all centers.
In fact, we can strengthen \Cref{thm:informal_main_1} 
to assert $(1-\eps)$-contraction for all \emph{centers and partitions}.
\begin{theorem}[Strengthened \Cref{thm:informal_main_1}, informal]\label{thm:informal_partitions_and_centers}
    For suitable
    $t=\tilde{O}(\eps^{-2}(\ddim(P)+ \log k + \log\log n))$,
    with probability at least $2/3$, for all partitions $\P=(P_1,\ldots,P_k)$ of $P$ and sets of centers $C=(c_1,\ldots,c_k)\subseteq P$,
    \[
    \cost(G(\P),G(C))\geq (1-\eps)\cost(\P,C).
    \]
\end{theorem}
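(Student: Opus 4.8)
The plan is to recast the statement as a claim about maps and then prove it by a scale-by-scale net argument controlled by the doubling dimension. A partition–centers pair $((P_1,\dots,P_k),(c_1,\dots,c_k))$ carries exactly the same information as a map $f\colon P\to P$ whose image has size at most $k$ (set $f(p)=c_i$ for $p\in P_i$), and with this identification $\cost(\P,C)=\sum_{p\in P}\|p-f(p)\|$ and $\cost(G(\P),G(C))=\sum_{p\in P}\|Gp-Gf(p)\|$. So it suffices to show that for a suitable $t$, with probability at least $2/3$,
\[
  \text{every } f\colon P\to P \text{ with } |f(P)|\le k \text{ has } \ \sum_{p\in P}\|Gp-Gf(p)\|\ \ge\ (1-\eps)\sum_{p\in P}\|p-f(p)\| .
\]
Only this contraction direction is needed; the matching expansion $\opt(G(P))\le(1+\eps)\opt(P)$ is obtained by applying standard JL-type tail bounds to a single optimal solution, as in the discussion after \Cref{thm:informal_main_1}. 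Writing $\mathrm{loss}(f):=\sum_{p}\bigl(\|p-f(p)\|-\|Gp-Gf(p)\|\bigr)_+$, the goal is to prove $\mathrm{loss}(f)\le O(\eps)\cdot\cost(f)$ for all such $f$ simultaneously.

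I would begin with the standard normalization making the aspect ratio $\mathrm{poly}(n)$ (so that only $O(\log n)$ dyadic scales are relevant), e.g.\ by merging points at pairwise distance below $\tfrac{\eps}{n}\opt(P)$, which perturbs every cost — each of which is at least $\opt(P)$ — by at most $\eps\opt(P)$. For each relevant dyadic scale $r=2^j$ I would build greedily an $(\eps' r)$-net $\mathcal N_r\subseteq P$ of $P$, with $\eps'=\mathrm{poly}(\eps)$; this net is an $(\eps' r)$-packing, and by the doubling property of $P$ every point of $P$ has at most $(1/\eps')^{O(\ddim)}$ net points of $\mathcal N_r$ within distance $O(r)$. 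The purpose of $\mathcal N_r$ is to \emph{round} every pair $(p,f(p))$ with $\|p-f(p)\|\in[r,2r)$ to a canonical net pair $(p',c')$, with $p',c'\in\mathcal N_r$ the nearest net points to $p$ and to $c=f(p)$, and then to use $\|Gp-Gf(p)\|\ \ge\ \|Gp'-Gc'\|-\|G(p-p')\|-\|G(c-c')\|$.

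The heart of the argument is a lemma asserting that, for $t=\tilde{O}(\eps^{-2}(\ddim+\log k+\log\log n))$, with probability at least $2/3$: (i) the canonical net pairs $(p',c')$ that can actually support a constant fraction of some solution's cost are all $(1-\eps)$-preserved, i.e.\ $\|Gp'-Gc'\|\ge(1-\eps)\|p'-c'\|$; and (ii) the rounding errors cancel in bulk, $\sum_{p\,:\,\|p-f(p)\|\in[r,2r)}\|G(p-p')\|\le\eps\cdot\cost_r(f)$ for every $f$, where $\cost_r$ is the part of the cost from scale-$r$ pairs. For (ii) one uses Gaussian concentration: for a fixed scale $r$ the map $G\mapsto\sum_{p\in P}\|G(p-p'_r)\|$ is Lipschitz in $G$ with constant $O\bigl(\sum_p\|p-p'_r\|\bigr)=O(\eps' r\cdot n)$ and has mean of the same order, hence concentrates to within a constant factor except with probability $e^{-\Omega(t)}$; a union bound over the $O(\log n)$ scales costs only $+\log\log n$ in the target dimension (this is one place where $\log\log n$ enters). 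For (i) one unions over the relevant canonical pairs — of which, for any \emph{one} solution, there are only $\le k\cdot(1/\eps')^{O(\ddim)}$ per scale (the center-side net points come from the $\le k$ centers, and doubling bounds the point-side net points in each neighbourhood), hence $\le k\cdot(1/\eps')^{O(\ddim)}\cdot O(\log n)$ overall — each bad event having probability $e^{-\Omega(\eps^2 t)}$ by the $\chi^2$ lower tail, which is exactly where $\log k$, $\ddim$, and again $\log\log n$ appear. Granting the lemma, splitting $\cost(f)$ and $\cost(G(f))$ by scale and combining (i) with (ii) and the triangle inequality above yields $\mathrm{loss}(f)\le O(\eps)\cost(f)$.

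The main obstacle is precisely to make part (i) legitimate, i.e.\ to handle the ``for all'' quantifier over adversarially chosen centers and partitions without paying $\log n$: the adversary fixes $(\P,C)$ after seeing $G$, so a literal union bound over all $\binom n2$ point–center pairs — or even a pointwise bound on the expansions $\|G(p-p')\|$ — would force $t=\Omega(\eps^{-2}\log n)$. The resolution is never to argue pointwise but to charge the loss of each contracting pair against the cost of the very solution that contains it, scale by scale. This is immediate when a solution's cost is spread over many pairs at each scale (the concentration in (ii) does the work), so the delicate case is a ``fragile'' solution whose cost is concentrated on a few (possibly a single) pair; such a solution must pack almost all of $P$ tightly around its $\le k$ centers, which — together with the doubling bound on local nets and the $O(\log n)$ relevant scales — confines its few dominant pairs to the small, union-boundable family of canonical pairs used in (i). Turning this dichotomy into a clean, quantitative statement with the stated dependence on $\ddim$, $\log k$, and $\log\log n$ is the technical core of the proof.
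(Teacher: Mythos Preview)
Your outline has the right flavor (dyadic levels, nets, rounding, a union bound over $O(\log n)$ scales producing the $\log\log n$), but it misses the paper's central structural idea and, as written, does not achieve the stated target dimension. The key point is that the nets are anchored at a \emph{fixed} optimal solution $C^*=(c_1^*,\dots,c_k^*)$, not built as generic nets of all of $P$. The paper sets $r_i=r_0/2^i$ and takes $N_i$ to be an $\eps^3 r_i$-net of $\bigcup_j S_j^*\cap B(c_j^*,r_i)$, so $|N_i|\le k\cdot\eps^{-O(\ddim)}$ per level and $\bigl|\bigcup_i N_i\bigr|\le k\cdot\eps^{-O(\ddim)}\cdot O(\log n)$; applying JL to \emph{all} pairs in this union and \Cref{lem:IN07_expansion} at each net point costs exactly $\tilde O(\eps^{-2}(\ddim+\log k+\log\log n))$. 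By contrast, your $\mathcal N_r$ is an $\eps' r$-net of the entire $P$, whose size can be as large as $n$ at fine scales, so the union bound in your step~(i) over ``canonical pairs that can actually arise'' would cost $\Omega(\eps^{-2}\log n)$: the adversary chooses $C$ after seeing $G$, so you must union over every net point that could ever be a rounded center, and without the $C^*$-anchoring there are simply too many. Your ``fragile vs.\ spread'' dichotomy is only a heuristic --- you never exhibit a small, solution-independent family into which the dominant pairs of every fragile solution must fall --- and the paper in fact needs no such case analysis at all.

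The second idea you are missing is how to snap the adversarial center. The paper snaps $f(p)$ not at the scale $\|p-f(p)\|$, but at $f(p)$'s own level $r_{f(p)}$ relative to $C^*$, and then uses the one-line estimate $\tfrac12 r_{f(p)}\le\|f(p)-C^*(f(p))\|\le\|p-f(p)\|+\|p-C^*(p)\|$ (this is \eqref{eq:ineq_rad_p_vs_rad_c_p}) to charge the rounding error of $f(p)$ against $\cost(\P,C)+\opt(P)\le 2\cost(\P,C)$. Combined with $\sum_p r_p^z=O(\opt(P))$ and the \emph{pointwise} expansion bound from \Cref{lem:IN07_expansion} (giving $\|G(p-u_p)\|\le 10\eps^3 r_p$ for every $p$ near every net point, not just on average), the proof is a straight chain of generalized triangle inequalities with no dichotomy. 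Your step~(ii) does not substitute for this: you concentrate $\sum_{p\in P}\|G(p-p'_r)\|$ to $O(n\eps' r)$, but for an adversarial $f$ the scale-$r$ subset can consist of a single pair, so an $O(n\eps' r)$ global bound says nothing about $\eps\cdot\cost_r(f)$.
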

This strengthening is not attainable for \Cref{thm:informal_main_2}, 
as dimension $\Omega(\eps^{-2}\log\log n)$ is needed to get a ``for all centers and partitions'' guarantee, even for relaxed contraction 
(see \Cref{thm:lowerbound_loglogn_medoid_forall_partitions_and_c}).
However, we do not know if a ``for all partitions'' guarantee is possible without the $\log\log n$ term. If it is possible, then a curious phenomenon will occur: we get a ``for all partitions'' and a ``for all centers'' guarantees, but not a ``for all centers and partitions'' guarantee.
All our results are summarized in \Cref{table:results}.

\paragraph{Candidate centers.}
We consider also a more general variant of $k$-clustering, 
where the candidate centers are part of the input (given either explicitly or implicitly):
Given a dataset $P$ and candidate-centers set $Q$, 
the goal is to find $C\subseteq Q$ of size $|C|\leq k$ 
that minimizes $\sum_{p\in P} \dist^z(p,C)$.
When $Q=\R^d$ or $Q=P$, we obtain the continuous and discrete variants, respectively.

We observe a slightly different phenomenon in terms of the attainable contraction: to get $(1-\eps)$-contraction, 
one needs target dimension $\Theta(\eps^{-2}\log |Q|)$, 
and the lower bound holds even when both $P$ and $Q$ are doubling and $k=1$.
We can still obtain claims analogous to \Cref{thm:informal_main_2,thm:informal_partitions_and_centers}, albeit with relaxed contraction:
a ``for all partitions and centers'' using dimension $t=\tilde{O}(\eps^{-2}(\ddim(P\cup Q)+\log k + \log\log n))$, 
and a ``for all centers'' for the same target dimension but without the $\log\log n$ term.
See \Cref{table:results} for references.

\begin{table}[t]
\begin{tabular}{|l|l|l|l|l|l|}
\hline
Problem    & Target dimension                                              & $\forall$ partitions & $\forall$ centers & contraction & Reference \\
\hline
Continuous & $O(\eps^{-2}\log k)$                                          & yes                  & no                 & multiplicative                   & \cite{MakarychevMR19}       \\
           & $\Omega(\eps^{-2}\log k)$                                     & no                   & no                 &          even for value          &    \cite{NarayananSIZ21}       \\
           & $>d-1$                                                      & no                   & yes                 & even for relaxed                  & Thm \ref{thm:LB_continuos_forall_centers}      \\
\hline
Discrete   & $O(\eps^{-2}(\ddim+\log k +\log\log n))$ & yes                  & yes               & multiplicative                    &     Thm \ref{thm:k_medoid_forall_centers_partitions}     \\
           & $O(\eps^{-2}(\ddim+\log k))$             & no                   & yes                & relaxed                  &    Thm \ref{thm:k-medoid_no_loglogn_improved}       \\
           & ?         & yes                  & no                 & any                   & \underline{OPEN}     \\
           & $\Omega( \eps^{-2}\log\log n)$                & yes                  & yes                & even for relaxed                  &     Thm \ref{thm:lowerbound_loglogn_medoid_forall_partitions_and_c}      \\
           & $\Omega(\eps^{-2}\log\log n)$                  & no                   & yes               & multiplicative                    &     Thm \ref{thm:lowerbound_loglogn_medoid_forall_c}     \\
           & $\Omega(\eps^{-2}\log k)$ & no & no & even for value & \cite{NarayananSIZ21} \\
           & $\Omega(\eps^{-2}\ddim)$ & no & no & even for value & \cite{CW22} \\
\hline
Candidate & $O(\epsilon^{-2} \log s)$ & yes & yes & multiplicative & Thm \ref{thm:k_medoids_for_all_partitions_c_candidate_multi} \\
  
 centers &      $O(\epsilon^{-2}(\ddim + \log k + \log \log n))$ & yes & yes & relaxed & Thm \ref{thm:k_medoids_for_all_partitions_c_candidate} \\
 & $O(\epsilon^{-2} (\ddim + \log k))$ & no & yes & relaxed & Thm \ref{thm:k-medoid_no_loglogn_improved} \\
 
 &                    $\Omega(\epsilon^{-2} \log s)$ & no & yes & multiplicative & Thm \ref{thm:lowerbound_logn_medoid_forall_c_candidate} \\
\hline
\end{tabular}
\caption{Summary of our results for dimension reduction for $k$-clustering. 
The notions of ``for all'' centers and/or partitions, and of multiplicative/relaxed contraction are as explained in \Cref{sec:comparison_definitions}. Some lower bounds apply even for preserving the optimal value; for clarity, it is noted in the table they hold ``even for value''.
In the setting of candidate centers, the size of the candidate set is denoted by $s$. Suppressing $\log\tfrac{1}{\eps}$ terms and the dependence on $\alpha$ for $\alpha$-relaxed contraction.}\label{table:results}
\end{table}

\subsection{Other related work}
\label{sec:related_works}

Besides the aforementioned results for ``beyond JL'' for clustering problems, 
there are also several improved bounds for other classes of problems such as Max-Cut \cite{DBLP:conf/wads/LammersenSS09, lammersen2010approximation,DBLP:conf/stoc/ChenJK23}, numerical linear algebra \cite{mahoney2011randomized, Woodruff14, CohenNW16}, 
and other applications \cite{bartal2011dimensionality, GK15, izzo2021dimensionality}. 

The discrete $k$-median problem in Euclidean space was originally shown to be NP-hard by Papadimitriou, even for the case of $d = 2$ \cite{papadimitriou1981worst}. In terms of hardness of approximation, the current state of the art is that one cannot approximate the discrete $k$-means or $k$-median problem beyond $1.07$ and $1.17$, respectively, assuming $\text{P} \ne \text{NP}$ \cite{cohen2019inapproximability, cohen2022johnson}. 
As for upper bounds, the best approximation factors known in polynomial time are $2+\eps$ for any fixed $\eps > 0$ for discrete Euclidean $k$-median \cite{cohen20252+} and $ 5.912$ for discrete Euclidean $k$-means \cite{cohen2022improved}. 
There are also algorithms that achieve $1+\eps$ approximation (again in the discrete case) in time that is doubly exponential in the doubling dimension, see \cite{cohen2021near} for a thorough discussion.

The discrete variant that we study may also be preferred over the continuous version in certain applications.  First, it is thought to be less sensitive to outliers in practice than the continuous version \cite{ParkJ09, kaufman2009finding}. 
Second, in applications where cluster centers are used as data summarization, 
interpretability might require the centers to be part of the dataset. 
For example, in applications based on machine-learning embeddings of objects such as text \cite{xie2016unsupervised}, an arbitrary vector in the embedding space might not represent any actual object. 
A similar issue arises for structured data such as sparse data or images, 
e.g., the ``average image'' is visually random noise \cite{leskovec2020mining, tiwari2020banditpam} or the average of sparse vectors is not necessarily sparse. 
A discrete center, however, represents an actual underlying object, and thus preserves the underlying properties of the input points.

\subsection{Technical overview}
Since the dimension-error tradeoff behaves differently between the discrete and continuous settings,
it is not surprising that our results for the discrete setting require new techniques.
To simplify the discussion,
we focus on the $k$-medoids ($z = 1$) case,
and an alternative guarantee that only preserves the optimal \emph{value}, i.e., 
\begin{equation}\label{eqn:preserve_value}
    \opt(G(P)) \in (1 \pm \epsilon) \opt(P),
\end{equation}
with target dimension bound $t=\tilde{O}(\eps^{-2}(\ddim(P)+ \log k))$ which is the same to that in \Cref{thm:informal_main_2}.
While this is a weaker guarantee than both \Cref{thm:informal_main_1} and \Cref{thm:informal_main_2},
it already introduces major technical challenges,
and the techniques for this claim covers most of our new ideas.

We begin our discussion with the case $k = 1$. 
We first argue that even for this case, a natural framework based on extension theorems
(which has been used in previous works on dimension reduction for clustering) fails in our discrete case.

\paragraph{Failure of extension theorems in the discrete setting.}
To prove \eqref{eqn:preserve_value} (and possibly more general claims),
a natural framework based on \emph{extension theorems} have been widely used in dimension reduction for clustering.
Specifically, given an arbitrary center $v$ in the target space (e.g., $v$ is the optimal $1$-median center of $G(P)$),
one can define an ``inverse image'' $u$ in the original space such that $\cost(P, u) \leq (1 + \epsilon) \cost(G(P), v)$,
and this directly implies $\opt(G(P)) \geq \tfrac{1}{1 + \epsilon} \opt(P)$.
The key step of defining ``inverse image'' is precisely what an extension theorem does.
This framework is widely used in prior works such as~\cite{MakarychevMR19,JKS23},
in the spirit of the classic Kirszbraun extension theorem~\cite{Kirszbraun1934} 
or the robust one-point extension theorem~\cite[Theorem 5.2]{MakarychevMR19}. 
However, such extension theorems are only known to work in the continuous setting,
which require to pick the inverse image $v \in \mathbb{R}^d$ from the entire $\mathbb{R}^d$
and cannot be restricted only to the data points $v \in P$.%
\footnote{
    We note that the Kirszbraun theorem may be adapted to work for the discrete case when the target dimension $t = O(\log n)$, but this dimension bound is too large to be useful.
}

\paragraph{Our techniques.}
We start with $k = 1$ case (a detailed discussion can be found in \Cref{sec:k_eq_1}).
In this case, we first obtain a target dimension bound with an $O(\log \log n)$ factor,
by utilizing the existence of a small movement-based coreset.
A coreset is a small accurate proxy of the dataset, 
and the movement-based coreset additionally requires the existence of a ``local'' mapping such that each data point can be mapped to a nearby coreset point.
The dimension reduction simply preserves the pairwise distance on the coreset,
and \eqref{eqn:preserve_value} is argued via the local mapping.
A conceptually similar coreset-to-dimension-reduction idea has also been employed in~\cite{CW22},
and one main difference is that we also utilize the movement/local property of the coreset.

Then, to remove the $O(\log \log n)$ factor,
we consider a weaker guarantee as in \Cref{thm:informal_main_2},
where we prove the $(1 + \epsilon)$ relative error only for near-optimal solutions,
and for the other solutions we have a flat $100 \opt(P)$ error.
This relaxed guarantee is strong enough for \eqref{eqn:preserve_value} (and many other applications),
which may be of independent interest to further studies.
Our analysis is crucially built on this small vs large cost case,
albeit we also need to consider the middle ground of the mix of the two.

Finally, we discuss the generalization to $k > 1$ in \Cref{sec:k_ge_2},
which introduces several nontrivial technical complications from $k = 1$.

\subsubsection{The $k$ = 1 case}
\label{sec:k_eq_1}

The easy side of \eqref{eqn:preserve_value} 
is the upper bound $\opt(G(P))\leq (1+\eps)\opt(P)$,
even for the general $k$ case.
The reason is that it suffices to preserve the cost w.r.t. an optimal center set $C^*$,
and since $C^*$ is a fixed solution, even a target dimension $t = O(\epsilon^{-2} \log(1/\epsilon))$ will be sufficient.
This is a standard argument also observed in prior works.
The lower bound $\opt(G(P)) \geq (1 - \epsilon) \opt(P)$ is the major challenge.
To prove this inequality, we want to preserve the clustering cost w.r.t. the optimal center set of $G(P)$, denoted by $C$.
Since $C$ is a random set that depends on $G$, preserving its cost is almost the same as preserving the cost of \emph{all} center sets, which is exactly the guarantee \ref{item:opt_lb} of \Cref{thm:informal_main_1,thm:informal_main_2}.

To introduce our new techniques, we first establish a weaker target dimension bound of $O(\epsilon^{-2}(\ddim + \log\log n))$, and this part contains main ideas for proving \Cref{thm:informal_main_1}. 
We then overview the key steps to eliminate the extra $\log \log n$ term, which also reflects how we prove \Cref{thm:informal_main_2}.

\paragraph{The $O(\log\log n)$ bound: from coreset to dimension reduction.}
To prove \eqref{eqn:preserve_value}, we use an approach inspired by the movement-based coreset construction in Euclidean spaces~\cite{Har-PeledM04}.
Roughly speaking, a movement-based coreset%
\footnote{This definition is tailored to our need and may be slightly different to that in the literature.}
is a subset $S \subseteq P$, such that there exists a mapping $\sigma \colon P \to S$ satisfying $\sum_{p \in P} \|p - \sigma(p)\| \leq O(\epsilon) \opt(P)$.
Our framework is summarized as follows: we first construct a movement-based coreset $S$ to compress the dataset $P$. 
Next, we apply the standard JL lemma to preserve pairwise distances in the coreset $S$ within $(1 \pm \epsilon)$, which requires $O(\epsilon^{-2} \log |S|)$ target dimensions.
After this step, the optimal value of $S$ is already preserved, nemely, $\opt(G(S)) \in (1 \pm \epsilon) \opt(S)$.
Finally, it suffices to show 
that the cost of snapping data points to their nearest neighbor in $S$ (i.e., $\sum_{p \in P} \|p - S(p)\|$ and $\sum_{p \in P} \|Gp - GS(p)\|$) is negligible in both original and target spaces.

The construction of the coreset is essentially the same with that in~\cite{Har-PeledM04},
except that~\cite{Har-PeledM04} also assigns weight to the coreset points and here we only need the point set itself.
We review the construction. This construction is based on a sequence of \emph{nets}, a standard tool for discretizing metrics.
Formally, a $\rho$-net of a point set $P$ is a subset $N \subseteq P$, such that 1) the interpoint distances in $N$ are at least $\rho$, and 2) every point in $P$ has a point in $N$ within distance $\rho$. 
(See the more detailed definition in \cref{def:net}).
Denote $c^* \in P$ as an optimal discrete $1$-median center.
We construct nets on a sequence of balls centered at $c^*$ with geometrically decreasing radii.
Denote $r_0 := \opt(P)$ and $r_\ell := r_0/2^\ell$ for $\ell = 1, 2, \dots, \log n$.
Construct the level $\ell$ net $N_\ell$ as an $\epsilon r_\ell$-net on the ball $B(c^*, r_\ell)$, and
denote $N := \bigcup_{\ell = 0}^{\log n} N_\ell$ to be the union of all $\log n$ levels of nets.

By the standard packing property of doubling metrics, each net has size $|N_\ell| \leq O(\epsilon^{-O(\ddim)})$, thus $|N| \leq O(\epsilon^{-O(\ddim)} \log n)$,
which implies a target dimension $t = O(\epsilon^{-2}(\ddim \log \epsilon^{-1} + \log \log n))$.
On the other hand, let $G(c) \in G(P)$ be an optimal discrete $1$-median center of $G(P)$.
Then the total cost of snapping $c$ and all data points to the nearest neighbor in $N$ 
(i.e., $\sum_{p \in P} (\|p - N(p)\| + \|c - N(c)\|)$) can be bounded by $O(\epsilon) (\opt(P) + \cost(P, c))$ in the original space.
Based on results in~\cite{IndykN07}, we further show that this snapping cost in the target space (i.e., $\sum_{p \in P} (\|Gp - GN(p)\| + \|Gc - GN(c)\|)$) can increase by at most a constant factor.

Finally, we note that the above analysis can be applied to obtain the ``for all centers'' guarantee in \Cref{thm:informal_main_1}, or even the stronger ``for all centers and partitions'' guarantee in \Cref{thm:informal_partitions_and_centers}.

\paragraph{Removing the $\log \log n$ term via relaxed guarantee.}
Let us first recall the cause of the $\log \log n$ term.
We  the JL Lemma to $N$, which is a union of $\log n$ nets, each of size $\epsilon^{-O(\ddim)}$.
The $\log \log n$ thus comes from a union bound over all $\log n$ levels.
To bypass this union bound, we use two technical ideas.
First, we avoid touching cross-level pairs and only apply the union bound for each $N_\ell$ separately.
This requires us to always snap $p$ and $c$ to the same level of net when handling each $p \in P$.
Second, for a single level, we analyze its maximum distance distortion which is a random variable, and bound the expectation.
We remark that some levels will be distorted significantly, but the average distortion is $(1+O(\eps))$.
Similar ideas have been used by prior works (e.g.,~\cite{GJKSSSW25_maxmatching}).

Consider the following two extremes.
First, suppose $c$ is the closest point to $c^*$, say, $\forall p \in P, \|c - c^*\| \leq \|p - c^*\|$.
For every $p \in P$, we can snap $p$ to its nearest neighbor in net $N_p$.
Observe that $c$ can also be covered by $N_p$.
The cost of snapping $p$ and $c$ can both be bounded by $O(\epsilon) \cdot \|p - c^*\|$, and we show that \emph{on average}, the cost of snapping $Gp$ and $Gc$ is bounded by $O(\epsilon) \cdot \|p - c^*\|$ as well, which adds up to $O(\epsilon) \opt(P)$.
The other extreme is that $c$ is very far from $c^*$, i.e, $\|c - c^*\| > \opt(P)/10$.
In this case, we can no longer snap $c$ to the same net as $p$ (like the previous case).
We show that in this case, $\cost(G(P), Gc)\geq 100\opt(P)$.

If $c$ does not fall into any of the above two extremes, our analysis is a combination of them.
Indeed, we show the \emph{relaxed} ``for all centers'' guarantee,
\begin{equation}\label{eqn:relaxed_forall_1center}
    \forall c \in P, \qquad \cost(G(P), Gc) \geq \min\{(1 - \epsilon) \cost(P, c), 100 \opt(P)\}.
\end{equation}
Note that this is exactly the same as the guarantee \ref{item:opt_lb} of \Cref{thm:informal_main_2}, and that the two terms in the $\min$ corresponds to the aforementioned two extremes, respectively.
Specifically, we first specify a level $\ell$ and its corresponding radius $r_\ell$.
If $\|c - c^*\| > r_\ell$, then we fall into the second extreme and show that $\cost(G(P), Gc) \geq 100 \opt(P)$.
Otherwise, $\|c - c^*\| \leq r_\ell$, then we handle each $p \in P$ differently, depending on the distance $\|p - c^*\|$.
If $\|p - c^*\| \geq r_\ell$, then we use the same argument as the first extreme --- snapping both $p$ and $c$ to $N_p$, bounding the snapping cost, and analyzing the additive contraction.
If $\|p - c^*\| < r_\ell$, then we snap both $p$ and $c$ to $N_\ell$.
Since $\ell$ is a fixed level, a union bound over $N_\ell$ is affordable and we obtain $\cost(G(P), Gc) \geq (1 - \epsilon) \cost(P, c)$ in this case.

\subsubsection{Generalization to $k > 1$}
\label{sec:k_ge_2}

Instead of directly generalizing \eqref{eqn:relaxed_forall_1center}, 
we first show a weaker guarantee: for target dimension $t = O(\epsilon^{-2} \ddim \log k)$, 
\begin{equation}\label{eqn:wrong_assignment}
    \forall C \subseteq P, |C| = k, \qquad \sum_{p \in P} \|Gp - GC(p)\| \geq \min\{(1 - \epsilon) \cost(P, C), 100 \opt(P)\},
\end{equation}
where $C(p)$ is the center in $C$ closest to $p$.
Note that \eqref{eqn:wrong_assignment} is weaker than what we desire in \Cref{thm:informal_main_2}, for the following two reasons.
First, the target dimension is worse than the $O(\epsilon^{-2} (\ddim + \log k))$ in \Cref{thm:informal_main_2}.
Second, the left hand side of \eqref{eqn:wrong_assignment} can be much larger than $\cost(G(P), G(C))$, since the image of $C(p)$ under $G$ (i.e., $GC(p)$) is not necessarily the nearest neighbor of $Gp$ in $G(C)$.
Nonetheless, the proof of \eqref{eqn:wrong_assignment} already captures most of our key ideas.
In the end of this section, we briefly discuss how we obtain a sharper target dimension bound as well as a stronger guarantee.

Suppose $C^* \subseteq P$ is an optimal solution, which induces a clustering $\mathcal{C}^* = \{S_1^*, S_2^*, \dots, S_k^*\}$.
Our general proof framework is the same as the $k = 1$ case --- considering the ``distance'' between $C$ and $C^*$, if $C$ is ``far from'' $C^*$, then we show $\cost(G(P), G(C)) \geq 100 \opt(P)$; otherwise we show $\cost(G(P), G(C)) \geq (1 - \epsilon) \cost(P, C)$.

However, an immediate issue is how to define that $C$ and $C^*$ are far from or close to each other.
For each $i \in [k]$, we specify a ``threshold level'' of cluster $S_i^*$, denoted by $\ell_i$. 
We say $C$ is ``far from'' $C^*$ if there exists $i \in [k]$, such that $\dist(c_i^*, C) > 10 r_{\ell_i}$.
In this case, the cost of connecting $B(c_i^*, r_{\ell_i})$ to $C$ is already high.
We further prove that $\cost(G(P), G(C)) \geq 100 \opt(P)$, by careful analysis of the randomness of $G$.

Now suppose $C$ is ``close to'' $C^*$, i.e., $\forall i \in [k], \dist(c_i^*, C) \leq 10 r_{\ell_i}$.
Our key observation is that for every $p \in S_i^*$, $C(p)$ should also be close to $c_i^*$, i.e., 
\begin{equation}\label{eqn:key_observation}
    \forall p \in S_i^*, \qquad 
    \|C(p) - c_i^*\| \leq O(\max\{\|p - c_i^*\|, r_{\ell_i}\}).
\end{equation}
As a natural generalization of the $k = 1$ case, we lower bound $\|Gp - GC(p)\|$ for $p \in S_i^*$ differently, depending on the distances $\|C(p) - c_i^*\|$.
If $\|C(p) - c_i^*\| \geq r_{\ell_i}$, then we snap both $p$ and $C(p)$ to the (enlarged) net $N_p$.
(We can do this since \eqref{eqn:key_observation} holds.)
Otherwise, 
we snap both $p$ and $C(p)$ to the (enlarged) net $N_{\ell_i}$.
The snapping cost and the distance contraction are bounded similarly to the $k = 1$ case.
This simply introduces an extra $\log k$ factor in the target dimension.

\paragraph{Decoupling $\ddim$ from $\log k$.}
So far, we only obtain an $O_\epsilon(\ddim \log k)$ bound, instead of $O_\epsilon(\ddim + \log k)$.
This is due to error accumulation:
Recall we handle each (optimal) cluster $S_i^*$ separately, each of which incurs an $O(\epsilon) \opt(P)$ additive error;
hence, we have to rescale $\epsilon$ by a $1/k$ factor to compensate the accumulated error of $k$ clusters, 
resulting in an $O(\epsilon^{-2} \ddim \log k)$ target dimension (na\"ively, that results in $\tilde{O}(\epsilon^{-2} k^2\ddim)$ target dimension, but this is avoided by an easy adaptation).

To decouple these two factors, we need more delicate analysis for the error.
For ``far'' points $p \in S_i^*$ with $\|C(p) - c_i^*\| \geq r_{\ell_i}$, the snapping and distortion error is $O(\epsilon)  \|p - c_i^*\|$ in expectation, which adds up to $O(\epsilon) \opt(P)$ and does not incur any error accumulation.
However, the error accumulation happens for ``close'' points $p$ with $\|C(p) - c_i^*\| < r_{\ell_i}$, where the snapping cost within a single cluster $S_i^*$, 
namely $\sum_{p \in S_i^*} \|p - N_\ell(p)\|$,
is already $O(\epsilon) \opt(P)$, which accumulates to $O(k \epsilon) \opt(P)$.

To reduce the error accumulation, we further divide the close points (i.e., $\|C(p) - c_i^*\| < r_{\ell_i}$) into two ranges,
namely, the \emph{close range} $\|C(p) - c_i^*\| < r_{\ell_i}/k$ 
and the \emph{middle range} $\|C(p) - c_i^*\| \in [r_{\ell_i}/k, r_{\ell_i}]$,
and handle these two ranges differently.
The cost of points in the close range can be bounded by $O(\epsilon/k) \opt(P)$, which adds up to $O(\epsilon) \opt(P)$.
For points in the middle range, we handle them in a point-by-point manner, at the cost of $\poly(k) e^{-\Omega(\epsilon^2 t)}$ per point.
Since there are at most $k \cdot O(\log k)$ levels in the middle range, a union bound over all net points at these levels will be affordable.

\paragraph{Handling nearest neighbor assignment in the target space.}
Recall that \eqref{eqn:key_observation} conerns the cost $\|Gp - GC(p)\|$,
which is the cost in the target space with respect to the nearest neighbor assignment in the \emph{original} space.
However, what we really need is the nearest neighbor assignment in the \emph{target} space.
To capture such misalignment in the original and target spaces, we define a mapping $f$ to be the assignment in the target space, i.e., $f(p)$ is the center in $C$ realizing $\dist(Gp, G(C))$, so that $\cost(G(P), G(C)) = \sum_{p \in P} \|Gp - Gf(p)\|$, and $f(p) = C(p)$ does not hold in general.
We attempt to modify the previous analysis to lower bound each $\|Gp - Gf(p)\|$ instead of $\|Gp - GC(p)\|$.

To lower bound this distance, we attempt to replace every $C(p)$ with $f(p)$ in our previous proof.
The analysis becomes problematic, as our structural observation \eqref{eqn:key_observation} no longer holds if we change $C(p)$ to $f(p)$, and this turns out to be the only place where our analysis does not go through.
To resolve this issue, let us focus on the bad scenario where $f(p)$ is sufficiently far from $c_i^*$, i.e., $\|f(p) - c_i^*\| \gg \max\{\|p - c_i^*\|, r_{\ell_i}\}$.
This implies $f(p)$ is also far from $p$.
We further show that $\|Gp - Gf(p)\| \gg \|p - c_i^*\|$ by careful analysis of $G$'s randomness.
On the other hand, we have $\|p - C(p)\| \leq O(\|p - c_i^*\|)$ by \eqref{eqn:key_observation}.
Therefore, we can directly lower bound $\|Gp - Gf(p)\|$ by $\|p - C(p)\|$ in this case.

\section{Preliminaries}

Consider a point set $P \subset \R^d$. For every $x\in \R^d$, denote by $P(x)$ the point in $P$ closest to $x$ and $\dist(x, P) :=\|x- P(x)\|$ (recall that throughout $\|\cdot\|$ is the Euclidean norm).
Denote $\diam(P) := \max\{\dist(p, q) \colon p, q \in P\}$ as the \emph{diameter} of $P$.
For $x \in \R^d$ and $r > 0$, denote by $B(x, r) := \{y \in \R^d \colon |x-y\| \leq r\}$ the \emph{ball} centered at $x$ with radius $r$.
Recall that
for $k \in \mathbb{N}$ and $z \geq 1$, the $(k, z)$-clustering cost of $P$ w.r.t. center set $C \subset \R^d, |C| \leq k$ is $\cost_k^z(P, C) := \sum_{p \in P} \dist(p, C)^z$.
The optimal discrete $(k, z)$-clustering cost of $P$ w.r.t. a candidate center set $Q \subset \R^d$ is denoted by $\opt_k^z(P, Q) := \min_{C \subseteq Q, |C| \leq k} \cost_k^z(P, C)$,
and by $\opt(P, Q)$ for short when $k, z$ are clear from the context.
Denote $\opt(P) := \opt(P, P)$ and $\optcont(P) := \opt(P, \R^d)$ for simplicity.

We use the following generalized triangle inequalities.
\begin{lemma}[Generalized triangle inequalities~\cite{MakarychevMR19}]
    \label{lemma:triangle}
    Let $(X, \dist)$ be a metric space. 
    Then for every $z \geq 1$, $\epsilon \in (0, 1)$ and $p, q, r \in X$, 
    \begin{align*}
        &\dist(p, q)^z \geq (1 - z \epsilon) \dist(p, r)^z - \epsilon^{-z} \dist(q, r)^z. \\
        &\dist(p, q)^z \leq (1 + \epsilon)^{z-1} \dist(p, r)^z + \left(\frac{1 + \epsilon}{\epsilon}\right)^{z-1} \dist(q, r)^z.
    \end{align*}
\end{lemma}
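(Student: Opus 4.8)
The plan is to establish the two inequalities separately, in both cases by elementary manipulation of the ordinary triangle inequality together with convexity of $t\mapsto t^z$. Write $a=\dist(p,r)$ and $b=\dist(q,r)$, so that the triangle inequality gives $\dist(p,q)\le a+b$ and $\dist(p,r)\le \dist(p,q)+b$. For the upper bound, since $z\ge 1$ the map $t\mapsto t^z$ is nondecreasing, so it suffices to bound $(a+b)^z$. I would write $a+b$ as the convex combination $\lambda\cdot\frac{a}{\lambda}+(1-\lambda)\cdot\frac{b}{1-\lambda}$ with weights $\lambda$ and $1-\lambda$, and apply Jensen's inequality to the convex function $t\mapsto t^z$; this yields $(a+b)^z\le \lambda^{1-z}a^z+(1-\lambda)^{1-z}b^z$ for any $\lambda\in(0,1)$. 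Substituting $\lambda=\frac{1}{1+\epsilon}$ (so that $1-\lambda=\frac{\epsilon}{1+\epsilon}$) produces exactly the constants $(1+\epsilon)^{z-1}$ and $\left(\frac{1+\epsilon}{\epsilon}\right)^{z-1}$ claimed in the statement.

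For the lower bound, I would argue directly by a short case analysis. If $\dist(q,r)\ge\dist(p,r)$, then $(1-z\epsilon)\dist(p,r)^z\le \dist(p,r)^z\le \dist(q,r)^z\le \epsilon^{-z}\dist(q,r)^z$ (using $\epsilon<1$), so the right-hand side is at most $0\le\dist(p,q)^z$ and we are done. Otherwise $\dist(p,q)\ge \dist(p,r)-\dist(q,r)>0$; dividing through by $\dist(p,r)^z$ and setting $x=\dist(q,r)/\dist(p,r)\in[0,1)$, the desired inequality reduces to $(1-x)^z+(x/\epsilon)^z\ge 1-z\epsilon$. This splits into the range $x\le\epsilon$, handled by Bernoulli's inequality $(1-x)^z\ge 1-zx\ge 1-z\epsilon$, and the range $x>\epsilon$, where already $(x/\epsilon)^z\ge 1\ge 1-z\epsilon$. (Alternatively, one can derive the lower bound by applying the upper-bound inequality, with parameter $\epsilon$ and with $r$ as the middle point, to $\dist(p,r)\le\dist(p,q)+\dist(q,r)$, then rearranging and using the two scalar facts $(1+\epsilon)^{1-z}\ge 1-z\epsilon$ and $\epsilon^{1-z}\le\epsilon^{-z}$.)

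There is essentially no serious obstacle here; the only point requiring a moment's care is the regime $z\epsilon>1$, where $1-z\epsilon$ is negative and the right-hand side of the first inequality becomes vacuous, so in each branch one should first dispatch that case by noting the right-hand side is then nonpositive while the left-hand side is a power of a nonnegative quantity. Since in our applications $z\in\{1,2\}$ and $\epsilon\in(0,1)$ we are never in that degenerate regime, but stating the lemma for general $z\ge 1$ costs nothing extra. I would also verify the two scalar inequalities invoked above, namely Bernoulli's $(1-x)^z\ge 1-zx$ for $z\ge1$ and $x\in[0,1]$, and (for the alternative route) $(1+\epsilon)^{1-z}\ge 1-z\epsilon$, the latter following when $z\epsilon<1$ from $(z-1)\log(1+\epsilon)\le(z-1)\epsilon\le z\epsilon\le-\log(1-z\epsilon)$.
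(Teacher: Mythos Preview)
Your proof is correct. The paper does not actually prove this lemma; it is stated with a citation to \cite{MakarychevMR19} and used as a black box throughout. Your convexity/Jensen argument for the upper bound and the case split via Bernoulli for the lower bound are both sound, and your handling of the degenerate regime $z\epsilon>1$ is appropriate.
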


\subsection{Doubling dimension and nets}
\begin{definition}[Doubling dimension~\cite{GKL03}]
    \label{def:ddim}
    The \emph{doubling dimension} of a set $P\subseteq \R^d$, denoted $\ddim(P)$,
    is the minimum $m > 0$, such that $\forall r > 0$, every ball in $P$ with radius $r$ can be covered by at most $2^m$ balls of radius $r/2$.
\end{definition}

Our proof uses $\rho$-nets for doubling sets, whose definition and key properties are described here.

\begin{definition}[$\rho$-net]
    \label{def:net}
    Let $P\subseteq \R^d$
    and $\rho > 0$.
    A subset $N \subseteq P$ is called a \emph{$\rho$-packing} of $P$ if $\forall u, v \in N, \ \|u- v\| > \rho$.
    The subset $N$ is called a \emph{$\rho$-covering} of $P$ if $\forall x \in P$, there exists $u \in N$ such that $x \in B(u, \rho)$.
    The subset $N$ is called a \emph{$\rho$-net} of $P$ if $N$ is both a $\rho$-packing and $\rho$-covering of $P$.
\end{definition}

\begin{lemma}[Packing property~\cite{GKL03}]
    \label{lemma:packing}
    Let $P\subseteq \R^d$ and $N \subseteq P$ be a $\rho$-packing of $P$. 
    Then $|N| \leq (\diam(P) / \rho)^{O(\ddim(P))}$.
\end{lemma}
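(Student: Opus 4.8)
The plan is to run the standard iterated-covering argument for doubling metrics. If $|N|\le 1$ the bound is trivial, so I would assume $|N|\ge 2$; then two distinct points of $N$ lie at distance more than $\rho$, which in particular forces $\diam(P)>\rho$.

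Next, fix any $v\in N$. Every point of $P$ is within distance $\diam(P)$ of $v$, so $P=B(v,\diam(P))\cap P$ is a ball of radius $\diam(P)$ in the metric space $P$. I would then invoke \Cref{def:ddim} iteratively: a ball of radius $r$ in $P$ is covered by $2^{\ddim(P)}$ balls of radius $r/2$, so by induction on $j$ the set $P$ is covered by $2^{j\cdot\ddim(P)}$ balls of radius $\diam(P)/2^{j}$, for every integer $j\ge 0$. Choosing $j:=\lceil\log_2(2\diam(P)/\rho)\rceil$ makes the radius of these final balls at most $\rho/2$.

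The crucial observation is that each ball of radius at most $\rho/2$ contains at most one point of $N$: two such points would be at distance at most $\rho$ by the triangle inequality, contradicting that $N$ is a $\rho$-packing (all pairs at distance $>\rho$). Hence $|N|$ is at most the number of balls in the final cover,
\[
  |N| \;\le\; 2^{\,\ddim(P)\cdot\lceil\log_2(2\diam(P)/\rho)\rceil} \;=\; \bigl(\diam(P)/\rho\bigr)^{O(\ddim(P))},
\]
where the last step uses $\diam(P)/\rho>1$ (and in the regime $\diam(P)/\rho\ge 2$ used throughout the paper it is completely immediate).

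I do not expect a genuine obstacle, as this is a textbook fact. The only points requiring a little care are (i) stopping the iteration at radius $\le\rho/2$, so that a final ball, having diameter $\le\rho$, cannot contain two $\rho$-separated points — here the strictness of the inequality $\|u-v\|>\rho$ in the definition of a packing is exactly what makes the boundary case work — and (ii) folding the ceiling in the exponent and the factor $2$ inside the base into the $O(\ddim(P))$.
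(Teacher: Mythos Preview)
Your argument is the standard iterated-covering proof and is correct; the paper itself does not prove this lemma at all but simply cites \cite{GKL03}, so there is nothing to compare against beyond noting that what you wrote is precisely the textbook proof behind that citation. Your remarks about the edge case $1<\diam(P)/\rho<2$ are apt: the $O(\cdot)$ in the exponent cannot absorb the constant when the base is arbitrarily close to $1$, but every application in the paper has $\diam(P)/\rho=\Theta(\epsilon^{-3})\ge 2$, so this is harmless.
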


\subsection{Dimension reduction}
For simplicity, we only consider random linear maps defined by a matrix of iid Gaussians, which are known to satisfy the JL Lemma~\cite{IndykM98,DasguptaG03}.
\begin{definition}
A Gaussian JL map is a $t\times d$ matrix with i.i.d. entries drawn from $N(0,\tfrac{1}{t})$. 
\end{definition}

Recall the following concentration bound~\cite[Eq. (7)]{IndykN07} (see also \cite[Eq. (5)]{NarayananSIZ21}), from which one can deduce the JL lemma.
\begin{lemma}[{\cite[Eq. (7)]{IndykN07}}]\label{lem:gaussian_concentration}
    Let $x\in \R^d, \eps>0$ and a Gaussian JL map $G\in\R^{t\times d}$. We have
    \[
    \Pr(\|Gx\|\notin (1\pm \eps)\|x\|)\leq \exp(-\eps^2 t/8).
    \]
\end{lemma}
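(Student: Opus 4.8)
The plan is to reduce the statement to a standard tail estimate for a chi-squared random variable. By linearity of $G$ we may assume $\|x\|=1$. Write $g_1,\dots,g_t\in\R^d$ for the rows of $G$; then the $i$-th coordinate of $Gx$ equals $\langle g_i,x\rangle$, which is distributed as $N(0,\|x\|^2/t)=N(0,1/t)$, and these $t$ coordinates are mutually independent because the rows of $G$ are independent. Hence $Z:=t\,\|Gx\|^2$ is a sum of $t$ independent squared standard Gaussians, i.e.\ $Z\sim\chi^2_t$, and the event $\|Gx\|\notin(1\pm\eps)$ coincides with the event $Z\notin t(1\pm\eps)^2$.

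To control $Z$ I would use the Chernoff method together with the closed-form moment generating function $\mathbb{E}[e^{\lambda Z}]=(1-2\lambda)^{-t/2}$, valid for $\lambda<\tfrac12$. For the right tail, Markov's inequality applied to $e^{\lambda Z}$ and optimizing over $\lambda$ gives, for every $\beta>1$, $\Pr(Z\ge\beta t)\le\exp\!\big(\tfrac{t}{2}(\ln\beta-\beta+1)\big)$; applying the same idea to $e^{-\mu Z}$ with $\mu>0$ gives, for every $\beta\in(0,1)$, $\Pr(Z\le\beta t)\le\exp\!\big(\tfrac{t}{2}(\ln\beta-\beta+1)\big)$ as well. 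Substituting $\beta=(1+\eps)^2$ into the first bound and $\beta=(1-\eps)^2$ into the second, and using elementary inequalities such as $\ln(1+\eps)\le\eps$ and $\ln(1-\eps)\le-\eps-\tfrac{\eps^2}{2}$, each of these simplifies to an exponential tail of the form $\exp(-c\eps^2 t)$ on each side for an explicit constant $c$; combining the two disjoint tails and using that $\tfrac18$ is a deliberately generous choice of constant (one could equivalently invoke the Laurent--Massart deviation inequalities for $\chi^2_t$ and specialize them) yields $\Pr(\|Gx\|\notin(1\pm\eps)\|x\|)\le\exp(-\eps^2 t/8)$ for all $\eps\in(0,1)$.

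Nothing here is deep: the only point requiring any care is the book-keeping of the constant, since we need a two-sided statement and the left tail of $\chi^2_t$ is the slightly heavier one, so the Taylor estimates must be arranged so that the final (loose) constant $\tfrac18$ dominates the contributions of both tails simultaneously. Everything else --- the independence of the coordinates of $Gx$, the chi-squared identification, and the optimization over $\lambda$ --- is routine.
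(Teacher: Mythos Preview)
The paper does not supply its own proof of this lemma: it is quoted verbatim as a known concentration bound from \cite{IndykN07} (and the surrounding discussion attributes the JL property of Gaussian matrices to \cite{IndykM98,DasguptaG03}). Your argument---reduce to $\|x\|=1$, identify $t\|Gx\|^2$ with a $\chi^2_t$ random variable, and apply the Chernoff method with the closed-form MGF $(1-2\lambda)^{-t/2}$---is exactly the standard derivation of this inequality and is correct. The only small caveat is that your final step (``combining the two tails yields $\exp(-\eps^2 t/8)$'') implicitly uses that when $\eps^2 t$ is small the claimed bound is close to $1$ and hence trivially holds; once that regime is handled separately, the union of the two Chernoff tails $2\exp(-\eps^2 t/2)$ is indeed dominated by $\exp(-\eps^2 t/8)$.
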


The following two lemmas regard Gaussian JL maps when applied to doubling sets.

\begin{lemma}[{\cite[Lemma 4.2]{IndykN07}}]
\label{lem:IN07_expansion}

There exist universal constants $A_1,A_2>0$ such that
for every subset $P\subset B(\vec{0},1)$ of the Euclidean unit ball in $\R^d$, 
$t>A_1 \cdot \ddim(P) +1, D \geq 10$, and a Gaussian JL map $G\in\R^{t\times d}$, 
\[
\Pr (\exists x\in P, \|Gx\| >D) \leq e^{-A_2 t D^2}.
\]
\end{lemma}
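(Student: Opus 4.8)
The plan is to establish the expansion bound of \cref{lem:IN07_expansion} via a standard net-chaining argument over the doubling set $P$, combining a packing bound with the Gaussian tail of \cref{lem:gaussian_concentration}. First I would fix a geometric sequence of nets: for each integer $j\geq 0$, let $N_j\subseteq P$ be a $2^{-j}$-net of $P$, and note by the packing property (\cref{lemma:packing}, applied with $\diam(P)\leq 2$ since $P\subset B(\vec 0,1)$) that $|N_j|\leq 2^{O(j\cdot\ddim(P))}$. Every $x\in P$ can be written as a telescoping sum $x = x_0 + \sum_{j\geq 0}(x_{j+1}-x_j)$, where $x_j\in N_j$ is a net point within distance $2^{-j}$ of $x$ (and $x_0\in N_0$ has $\|x_0\|\leq 1+1 = O(1)$), so that each increment satisfies $\|x_{j+1}-x_j\|\leq 3\cdot 2^{-j}$. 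Applying $G$ and the triangle inequality gives $\|Gx\|\leq \|Gx_0\| + \sum_{j\geq 0}\|G(x_{j+1}-x_j)\|$, which reduces the problem to controlling each of the $O(1)$-many vectors $Gx_0$ and the increment vectors $G(x_{j+1}-x_j)$ uniformly.

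Next I would set up the union bound at each scale. There are at most $|N_j|\cdot|N_{j+1}| \leq 2^{O((j+1)\ddim(P))}$ candidate increment vectors at level $j$, each of norm at most $3\cdot 2^{-j}$. I want to show that with the claimed failure probability, every level-$j$ increment vector $v$ has $\|Gv\| \leq 3\cdot 2^{-j}\cdot D_j$ for a sequence of thresholds $D_j$ chosen so that $\sum_j 3\cdot 2^{-j} D_j \leq D/2$ (say) while still keeping each per-level union bound small. A natural choice is to let $D_j$ grow at most polynomially in $j$, e.g. $D_j = c(D + j)$ for a constant $c$: then $\sum_j 2^{-j}D_j = O(D)$ as needed. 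Applying \cref{lem:gaussian_concentration} to a fixed increment vector $v$ of norm $\leq 3\cdot 2^{-j}$, the probability that $\|Gv\| > 3\cdot 2^{-j}D_j$ is at most $\exp(-\Omega(D_j^2 t))$ once $D_j\geq 10$ (using the tail in the regime where the deviation dominates the mean). Multiplying by the $2^{O((j+1)\ddim(P))}$ vectors at level $j$ and summing over $j$, the total failure probability is at most $\sum_j \exp\big(O(j\cdot\ddim(P)) - \Omega((D+j)^2 t)\big)$; provided $t > A_1\ddim(P)+1$ for a suitably large constant $A_1$, the quadratic term $\Omega(j^2 t)$ dominates the linear $O(j\,\ddim(P))$ term for all $j$, and the $j=0$ term $\exp(-\Omega(D^2 t))$ dominates the sum, yielding the bound $e^{-A_2 t D^2}$. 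The base vectors $Gx_0$ (norm $O(1)$, at most $|N_0| = 2^{O(\ddim(P))}$ of them) are handled the same way and absorbed into this estimate.

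The main obstacle I anticipate is the careful bookkeeping of the threshold sequence $\{D_j\}$: it must simultaneously (i) have $\sum_j 3\cdot 2^{-j}D_j \leq D$ so the telescoped bound gives $\|Gx\|\leq O(D)$ (then rescale constants), (ii) keep each $D_j \geq 10$ so the sub-Gaussian tail applies in its clean form, and (iii) grow slowly enough relative to $\sqrt{t/\ddim(P)}$ that the packing-induced entropy $O(j\,\ddim(P))$ at level $j$ is beaten by $\Omega(D_j^2 t)$. The interplay between (i) and (iii) is the delicate point — if $D_j$ grows too fast the sum in (i) blows up, and if it grows too slowly the union bound at large $j$ fails — but the choice $D_j \asymp D + j$ threads this needle, since $2^{-j}$ decay kills (i) while $j^2$ growth of the exponent beats linear entropy growth in (iii). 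Everything else is routine: the net construction, the telescoping, and the final geometric summation. One should double-check that the constant in front of $D$ can be normalized back to the statement's $D$ (versus $O(D)$) by replacing $D$ with $D/c$ at the outset, which is harmless since the hypothesis $D\geq 10$ only needs to be adjusted to $D \geq 10c$, again absorbed into constants.
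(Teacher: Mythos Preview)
The paper does not supply its own proof of this lemma; it is quoted directly from \cite[Lemma~4.2]{IndykN07} as a black box, so there is nothing in the present paper to compare your proposal against. That said, your net-chaining argument is correct and is precisely the approach used in the original Indyk--Naor proof: build $2^{-j}$-nets of $P$, telescope each $x$ through the nets, and union-bound the Gaussian tail over the $2^{O(j\,\ddim(P))}$ level-$j$ increments with thresholds $D_j$ growing like $D+j$ so that the geometric decay in scale beats the entropy growth. Your anticipated bookkeeping concern about the threshold sequence is the only real content, and the choice $D_j\asymp D+j$ indeed works for exactly the reasons you give.
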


\begin{lemma}[{\cite[Lemma 3.21]{HuangJKY25}}]\label{lem:IN07_variant_contraction}
    There exists universal constants $A_1,A_2,L>1$, such that for every 
    $P\subset \R^d\setminus B(\vec{0},1), \eps>0, t>A_1 \ddim(P)$, and a Gaussian JL map $G\in\R^{t\times d}$, 
    \[
    \Pr(\exists x\in P, \|Gx\|<\tfrac{1}{L})\leq e^{-A_2 t}.
    \]
\end{lemma}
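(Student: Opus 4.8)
The plan is to follow the proof of \Cref{lem:IN07_expansion} --- a multiscale net argument --- adapted from expansion to contraction, with extra care for the fact that $P$ may have unbounded diameter. First I would reduce to the case of a finite $P$ by a standard limiting argument (a subset of $P$ has doubling dimension at most $2\ddim(P)$, which gets absorbed into the constant $A_1$). Then I would split $P$ by norm into dyadic annuli $P_j := \{x \in P : 2^j \le \|x\| < 2^{j+1}\}$ for $j \ge 0$, and bound $\Pr[\exists x \in P_j : \|Gx\| < 1/L]$ for each $j$ in isolation, aiming for $e^{-\Omega(t)}$ when $j=0$ and $e^{-\Omega(t)}\cdot 2^{-j}$ when $j \ge 1$, so that summing over $j$ yields the claimed $e^{-\Omega(t)}$. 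Since $\|x\| \ge 2^j$ on $P_j$, we have $\|Gx\| = \|x\|\cdot\|G\hat x\| \ge 2^j\,\|G\hat x\|$ for $\hat x := x/\|x\|$ on the unit sphere, so it suffices to prove $\|G\hat x\| \ge 2^{-j}/L$ for every $\hat x$ in the direction set $\hat P_j := \{x/\|x\| : x \in P_j\}$.

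The core is a multiscale net argument on $\hat P_j$. A crucial point is that even though the doubling dimension of $\hat P_j$ may far exceed $\ddim(P)$, its covering numbers remain controlled: radial projection is $\tfrac{2}{2^j}$-Lipschitz on $P_j$, so a maximal $\rho$-packing of $P_j$ projects to a $\tfrac{2\rho}{2^j}$-covering of $\hat P_j$; combined with \Cref{lemma:packing} applied to $P_j$ (diameter $< 2^{j+1}$, doubling dimension $\le 2\ddim(P)$), every scale-$\delta$ covering of $\hat P_j$ can be chosen of size $(O(1)/\delta)^{O(\ddim(P))}$. I would fix the base scale $\delta_0 := 2^{-j}/(10L)$, take $(\delta_0 2^{-k})$-nets $\hat N_k$ of $\hat P_j$ (of size $2^{O(\ddim(P)\cdot(j+k))}$), and for each $\hat x$ pick ancestors $\hat v^{(k)} \in \hat N_k$ with $\|\hat x - \hat v^{(k)}\| \le \delta_0 2^{-k}$, so that $\hat v^{(k)} \to \hat x$, $\|\hat v^{(k)} - \hat v^{(k+1)}\| < 2\delta_0 2^{-k}$, and
\[
 \|G\hat x\| \ \ge\ \bigl\|G\hat v^{(0)}\bigr\| \ -\ \sum_{k \ge 0} \bigl\|G\bigl(\hat v^{(k)} - \hat v^{(k+1)}\bigr)\bigr\| .
\]

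For the tail I would use the standard Gaussian upper-tail bound $\Pr[\|Gw\| > \lambda\|w\|] \le e^{-\Omega(t\lambda^2)}$ for $\lambda \ge 2$ (a strengthening of \Cref{lem:gaussian_concentration}): with expansion thresholds $\Lambda_k := C\bigl(1 + \sqrt{\ddim(P)\cdot(j+k)/t}\bigr)$ for a sufficiently large constant $C$, a union bound over $\hat N_k$ shows that, except with probability $e^{-\Omega(t)}2^{-j-k}$, every scale-$(\delta_0 2^{-k})$ difference expands by at most $\Lambda_k$; then $\sum_k \|G(\hat v^{(k)} - \hat v^{(k+1)})\| \le 2\delta_0\sum_k \Lambda_k 2^{-k} = O\bigl(2^{-j}\sqrt{j+1}/L\bigr)$ using $t > A_1\ddim(P)$, and summing the failure probabilities over $k$ leaves a refinement failure of $O(e^{-\Omega(t)}2^{-j})$. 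It remains to lower bound $\|G\hat v^{(0)}\|$ for every base-net point: by the above it suffices that $\|G\hat v^{(0)}\| \ge \tau_j := \Theta(2^{-j}\sqrt{j+1}/L)$ (the threshold that still leaves $\|G\hat x\| \ge 2^{-j}/L$ after subtracting the tail). Since $\|\hat v^{(0)}\| = 1$, this forbids contracting a unit vector by the large factor $1/\tau_j$, so by the Gaussian lower tail $\Pr[\|G\hat v^{(0)}\| < \tau_j] \le (e\tau_j^2)^{t/2}$; because $4^{-j}(j+1) \le 2^{-j}$ for $j \ge 1$, this is $e^{-\Omega(t)}\cdot 2^{-\Omega(jt)}$, and a union bound over the $2^{O(\ddim(P)\cdot j)}$ points of $\hat N_0$ (using $t > A_1\ddim(P)$ with $A_1$ large) gives a base failure of $e^{-\Omega(t)}$ for $j=0$ and $e^{-\Omega(t)}2^{-j}$ for $j \ge 1$. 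Adding base and refinement failures and summing over $j$ completes the argument.

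The hard part, and the only genuine departure from the bounded case of \Cref{lem:IN07_expansion}, is the unboundedness of $P$: a direct union bound over a single net of $P$ would cost a $\log\diam(P)$ factor, which the statement forbids. The resolution is the observation that, after rescaling to the sphere, a point in a far annulus $P_j$ needs only the extremely weak guarantee $\|G\hat x\| \ge 2^{-j}/L$, whose failure is a contraction of a \emph{unit} vector by a factor $\Omega(2^j)$ --- a $2^{-\Omega(jt)}$ event that beats the $2^{O(\ddim(P)\cdot j)}$ net size precisely because $t > A_1\ddim(P)$. The technical heart is then the bookkeeping that balances the base threshold $\tau_j$ against the telescoped error (both scaling like $2^{-j}\sqrt{j+1}$ rather than $2^{-j}$), together with the two $\chi^2$ tail estimates; I expect this, rather than any single step, to be where the real work lies.
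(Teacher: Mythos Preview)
The paper does not prove this lemma; it is quoted verbatim from \cite{HuangJKY25} (their Lemma~3.21) and used as a black box. So there is no in-paper proof to compare against.

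That said, your proposal is sound and is essentially the natural Indyk--Naor chaining argument adapted to contraction. The key points all check out: the Lipschitz bound $2\cdot 2^{-j}$ for radial projection on $P_j$ gives the covering-number control $(O(1)/\delta)^{O(\ddim(P))}$ for $\hat P_j$ without needing $\hat P_j$ itself to be doubling; the telescoping with thresholds $\Lambda_k \asymp 1+\sqrt{\ddim(P)(j+k)/t}$ correctly balances the $2^{O(\ddim(P)(j+k))}$ union bound against the Gaussian upper tail; and the small-ball estimate $(e\tau_j^2)^{t/2}$ together with $4^{-j}(j+1)\le 2^{-j}$ delivers the crucial $2^{-\Omega(jt)}$ that beats the base-net size $2^{O(\ddim(P)\,j)}$ once $t>A_1\ddim(P)$. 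One minor bookkeeping point: your net-size bound should read $2^{O(\ddim(P)\cdot(j+k+1))}$ rather than $2^{O(\ddim(P)\cdot(j+k))}$ (the constant $10L$ in $\delta_0$ contributes an $O(\ddim(P))$ additive term), but this is absorbed the same way. The overall structure---annulus decomposition, per-annulus chaining on the sphere, summable $e^{-\Omega(t)}2^{-j}$ failure---is exactly what one would expect the cited proof to do.
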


\section{The first upper bound}\label{sec:wth_loglogn}

We prove \Cref{thm:informal_main_1} (a.k.a \Cref{thm:informal_partitions_and_centers}) in this section, formally stated below.
\begin{theorem}\label{thm:k_medoid_forall_centers_partitions}
    Let $\eps>0$, $z \geq 1$ and $d,\ddim,k\in \N$ and a Gaussian JL map $G\in\R^{t\times d}$ with suitable $t=O(z^2 \eps^{-2}(\ddim\log(z/\epsilon)+ \log k + \log\log n))$.
    For every set $P\subseteq\R^d$ with $\ddim(P)\leq \ddim$, with probability at least $2/3$,
    \begin{enumerate}
        \item $\opt_k^z(G(P)) \leq (1 + \epsilon) \opt_k^z(P)$, and
        \item for all centers $C = (c_1,\ldots,c_k)\subseteq P$ and all partitions $\P=(S_1,\ldots,S_k)$ of $P$,\[\cost_k^z(G(\P),G(C))\geq (1-\eps)\cost_k^z(\P,C),\]
        where $\cost_k^z(\P,C)=\sum_{i=1}^k\sum_{p\in S_i}\|p-c_i\|^z$.
    \end{enumerate}
    
\end{theorem}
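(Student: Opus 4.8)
The plan is to prove both items via the coreset-to-dimension-reduction framework sketched in the technical overview, working directly with $z\ge 1$ and paying the $\log\tfrac1\epsilon$, $z$ factors as we go. For item~1 (expansion of the optimal value), I would fix one optimal center set $C^*$ and its induced partition $\mathcal{P}^*$; since this is a single fixed solution of $n$ points, a union bound over the $O(nk)$ point-center distances combined with \Cref{lem:gaussian_concentration} shows that $\cost_k^z(G(P),G(C^*))\le (1+\epsilon)\cost_k^z(P,C^*)=(1+\epsilon)\opt_k^z(P)$ whenever $t=\Omega(z^2\epsilon^{-2}\log n)$; but in fact we only need to control the $n$ distances $\|p-C^*(p)\|$, which needs only $t=\Omega(z^2\epsilon^{-2}\log\tfrac1\delta)$ for failure probability $\delta$ per point — here we can afford a crude bound since item~2 will dominate the dimension requirement. (More carefully, to avoid a $\log n$ term even here, one should not union bound naively; but since item~2 already forces $t=\Omega(\epsilon^{-2}\log\log n)$, and since expansion of a \emph{fixed} solution is a standard and robust statement, I would invoke the standard argument used in prior work.) So the real content is item~2.

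For item~2, fix a target partition $\mathcal{P}=(S_1,\dots,S_k)$ and centers $C=(c_1,\dots,c_k)$; we must show $\sum_i\sum_{p\in S_i}\|Gp-Gc_i\|^z \ge (1-\epsilon)\sum_i\sum_{p\in S_i}\|p-c_i\|^z$ simultaneously for all such choices, with probability $2/3$. The key point is that since $G$ is invertible on $P$, it suffices to prove a \emph{distance} lower bound $\|Gp-Gc_i\|\ge (1-\epsilon')\|p-c_i\|$ that holds for all \emph{pairs} we could ever need — but that would be the full JL guarantee and costs $\Omega(\epsilon^{-2}\log n)$. Instead I would follow the net-based construction from the overview: let $c^*\in P$ be an optimal discrete $1$-median-type anchor (more precisely, work per optimal cluster $S_i^*$ with anchor $c_i^*=C^*(\cdot)$ as in \Cref{sec:k_ge_2}), set $r_0=\opt_k^z(P)^{1/z}$-scaled radii $r_\ell=r_0/2^\ell$ for $\ell=0,\dots,O(\log n)$, and build an $\epsilon r_\ell$-net $N_\ell$ of $B(c_i^*,r_\ell)\cap P$; by \Cref{lemma:packing} each $|N_\ell|\le \epsilon^{-O(\ddim)}$, so $|N|\le \epsilon^{-O(\ddim)}\cdot k\log n$ and preserving all pairwise distances within $N$ plus all distances from $N$ to $C$ costs $t=O(z^2\epsilon^{-2}(\ddim\log(z/\epsilon)+\log k+\log\log n))$ via \Cref{lem:gaussian_concentration} and a union bound. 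The snapping argument then goes: for each $p\in S_i$, with anchor $c_i^*=C^*(p)$, route $p$ and $c_i$ through a common net point (at level roughly $\log(r_0/\max\{\|p-c_i^*\|,\text{relevant scale}\})$), using the generalized triangle inequality \Cref{lemma:triangle} to pass the $O(\epsilon)$ snapping error into an $O(\epsilon)$ relative error plus an $O(\epsilon)\|p-c_i^*\|$ additive term; summing the additive terms over $p$ gives $O(\epsilon)\opt_k^z(P)\le O(\epsilon)\cost_k^z(\mathcal{P},C)$ — wait, this last step needs care, so see below. To control the \emph{target-space} snapping cost $\sum_p\|Gp-GN(p)\|$ without a $\log n$ union bound over all net pairs, I would invoke \Cref{lem:IN07_expansion}: each $B(c_i^*,r_\ell)$ rescaled to the unit ball is a doubling set, so with probability $1-e^{-\Omega(tD^2)}$ no point in it expands by more than $D=O(1)$ under $G$, giving that the snapping cost in the target space is at most a constant times the snapping cost in the original space, level by level.

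The step I expect to be the main obstacle is ensuring the accumulated additive error stays $O(\epsilon)\cdot\cost_k^z(\mathcal{P},C)$ rather than $O(\epsilon)\cdot\opt_k^z(P)$ — because a fixed partition $\mathcal{P}$ may have cost arbitrarily smaller than… no, actually $\cost_k^z(\mathcal{P},C)\ge\opt_k^z(P)$ always, so the additive error $O(\epsilon)\opt_k^z(P)\le O(\epsilon)\cost_k^z(\mathcal{P},C)$ is fine; the genuine obstacle is the \emph{per-cluster} error accumulation: handling cluster $S_i^*$ separately each contributes $O(\epsilon)\opt_k^z(P)$ additive error, so naively we get $O(k\epsilon)\opt_k^z(P)$, forcing a rescaling $\epsilon\to\epsilon/k$ and hence $t=O(\epsilon^{-2}\ddim\log k)$ — acceptable for \emph{this} theorem (which allows $\log k$ multiplicatively inside, but the stated bound has $\ddim\log k$ only through $\ddim\log(z/\epsilon)+\log k$ additively, so actually we \emph{do} need the decoupling). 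Resolving this requires the refined three-range analysis (close/middle/far) from \Cref{sec:k_ge_2}: far points contribute $O(\epsilon)\|p-c_i^*\|$ each (no accumulation), close points ($\|p-c_i^*\|<r_{\ell_i}/k$) contribute $O(\epsilon/k)\opt_k^z(P)$ per cluster (accumulates to $O(\epsilon)\opt$), and middle-range points are handled pointwise at cost $\poly(k)e^{-\Omega(\epsilon^2 t)}$ with a union bound over the $O(k\log k)$ middle levels. Also, matching the \emph{target-space} nearest-neighbor assignment $f(p)$ to the original-space assignment $C(p)$ (equivalently $c_i$): if $f(p)$ is far from $c_i^*$ then it is far from $p$ and \Cref{lem:IN07_variant_contraction} forces $\|Gp-Gf(p)\|$ large enough to beat $\|p-c_i\|$ directly; otherwise $f(p)$ is close and the net argument applies to $f(p)$ in place of $c_i$. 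Stitching these cases together, taking a union bound over the $O(1)$-many failure events each of probability $\le 1/10$, yields probability $2/3$ and completes the proof.
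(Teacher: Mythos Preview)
Your core approach---build nets $N_\ell$ of size $\epsilon^{-O(\ddim)}$ at $O(\log n)$ scales around each optimal center, JL-preserve all pairwise distances in $\bigcup_\ell N_\ell$, and bound the snapping cost via \Cref{lem:IN07_expansion}---is exactly what the paper does, and your dimension count $t=O(z^2\epsilon^{-2}(\ddim\log(z/\epsilon)+\log k+\log\log n))$ from the union bound over $|N|\le k\cdot\log n\cdot\epsilon^{-O(\ddim)}$ net points is correct.

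However, you then second-guess yourself into a false obstacle. You claim that handling each optimal cluster $S_i^*$ separately incurs $O(\epsilon)\opt$ error per cluster, accumulating to $O(k\epsilon)\opt$ and forcing the three-range (close/middle/far) decoupling analysis. This is a conflation with the proof of \Cref{thm:k-medoid_no_loglogn_improved}. For \emph{this} theorem, the $\log\log n$ budget lets you preserve all distances across \emph{all} levels and clusters in one global union bound, so the snapping error is a single global quantity: the paper simply observes $\sum_{p\in P} r_p^z \le O(2^z)\opt$, and for the center side uses the pointwise bound $r_{f(p)}^z \le 2^{2z}(\|p-f(p)\|^z + \|p-C^*(p)\|^z)$ (a consequence of \Cref{lemma:triangle}), which sums to $O(2^{2z})(\cost(\mathcal{P},C)+\opt)$ with no factor of $k$. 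No per-cluster bookkeeping, no three-range analysis, no $\epsilon\to\epsilon/k$ rescaling is needed. You should drop that entire detour.

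Relatedly, your discussion of the target-space nearest-neighbor $f(p)$ versus $C(p)$ is moot here: in the ``for all centers \emph{and} partitions'' guarantee, the assignment is dictated by the given partition $\mathcal{P}$, so $f(p)=c_i$ for $p\in S_i$ is fixed and deterministic. The mismatch between target-space and original-space nearest neighbors is a complication exclusive to the ``for all centers'' guarantee of \Cref{thm:k-medoid_no_loglogn_improved}, not this theorem. Finally, for item~1 your union-bound over $n$ distances would introduce an unwanted $\log n$; the paper instead uses an expectation-plus-Markov argument (\Cref{lemma:fixed_centers_partiton}) that needs only $t=O(z^2\epsilon^{-2}\log\epsilon^{-1})$.
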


We use the following lemma to bound the clustering cost of a fixed set of centers and partition of $P$.
The proof is deferred to \Cref{appendix:fixed_centers_partition}.

\begin{restatable}{lemma}{lemmafixedcenterspartitions}
    \label{lemma:fixed_centers_partiton}
    Let $\eps>0$, $z \geq 1$ and $d, k\in \N$ and a Gaussian JL map $G\in\R^{t\times d}$ with suitable $t=O(z^2 \eps^{-2} \log \epsilon^{-1})$.
    For every set $P\subseteq\R^d$, every set of centers $(c_1, \dots, c_k) \subset \R^d$ and every partition $\P = (S_1, \dots, S_k)$ of $P$, with probability at least $9/10$,
    \[\cost_k^z(G(\P),G(C)) \leq (1+\eps)\cost_k^z(\P,C).\]
\end{restatable}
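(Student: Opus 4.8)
The plan is to prove the one-directional (expansion) bound for a \emph{fixed} partition and center set by analyzing each cluster independently and then summing. For a single cluster $S_i$ with center $c_i$, the quantity $\sum_{p\in S_i}\|p-c_i\|^z$ is a deterministic sum, and after applying $G$ we want $\sum_{p\in S_i}\|Gp-Gc_i\|^z$ to not exceed $(1+\eps)$ times it. The key point is that we only want a one-sided bound (expansion), so we do not need a union bound over points: it suffices to control the expectation (or a high-probability bound on the sum, not on each term). First I would observe that by \Cref{lem:gaussian_concentration}, for each fixed pair $(p,c_i)$ we have $\|Gp-Gc_i\|\le(1+\eps')\|p-c_i\|$ except with probability $e^{-\Omega(\eps'^2 t)}$, and more usefully, the random variable $\|Gp-Gc_i\|^z/\|p-c_i\|^z$ has all moments bounded (its distribution is that of $\|g\|^z/\sqrt{t}^{\,z}$ for a $t$-dimensional standard Gaussian $g$ scaled appropriately, which has expectation $1+O(z^2/t)$ for the relevant normalization, and bounded variance).

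The main step is to bound $\E\big[\sum_{p\in S_i}\|Gp-Gc_i\|^z\big]$. Using the generalized triangle inequality (\Cref{lemma:triangle}) or directly the concentration bound, $\E[\|Gp-Gc_i\|^z]\le(1+\tfrac{\eps}{2})\|p-c_i\|^z$ once $t=\Omega(z^2\eps^{-2}\log\tfrac1\eps)$; summing over all $p$ and all $i$ gives $\E[\cost_k^z(G(\P),G(C))]\le(1+\tfrac{\eps}{2})\cost_k^z(\P,C)$. Then I would apply Markov's inequality: $\Pr[\cost_k^z(G(\P),G(C))>(1+\eps)\cost_k^z(\P,C)]\le \frac{1+\eps/2}{1+\eps}$, which is bounded below $1$ but not by $1/10$. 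To boost this to probability $9/10$, the standard fix is to bound the second moment as well — show $\mathrm{Var}[\cost_k^z(G(\P),G(C))]$ is small relative to $\big(\cost_k^z(\P,C)\big)^2$ by exploiting that the per-point contributions, while not independent (they all share the same $G$), have covariances that are controlled because $G$ concentrates; then Chebyshev gives the $9/10$ probability. Alternatively, and more cleanly, one can directly use the sub-exponential tail of $\|Gx\|^z$ to argue that the sum concentrates around its mean with the desired probability via a Bernstein-type / Laplace-transform argument, which is the route I would actually take since it avoids delicate covariance estimates.

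The step I expect to be the main obstacle is precisely obtaining a \emph{high-probability} (rather than constant-probability, or expectation-only) upper bound on the weighted sum $\sum_{p\in S_i}\|Gp-Gc_i\|^z$. Because all terms depend on the same Gaussian matrix $G$, they are positively correlated, so one cannot simply invoke independence; one must argue via the moment generating function of $\|Gx\|^z$ (which for $z=1,2$ is explicit, and for general $z$ one controls using the sub-Weibull / sub-exponential norm of a Gaussian chaos) that $\E\big[\exp\big(\lambda\sum_p\|Gp-Gc_i\|^z\big)\big]$ is bounded appropriately, or else decompose the sum dyadically by the magnitude $\|p-c_i\|$ and handle each scale. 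Given the target dimension $t=O(z^2\eps^{-2}\log\eps^{-1})$ has no $\log n$ term, we genuinely cannot afford a union bound over the $n$ points, so the concentration of the sum must come from the averaging effect, which is the technical heart. Once this is in place, summing the per-cluster guarantees (there is no loss from summing since the bound is on the total cost, and the $9/10$ probability is for the whole sum at once, not per-cluster) completes the proof. I would finish by noting that the same argument, applied with $C$ being an optimal center set and $\P$ its induced partition, yields item~1 of \Cref{thm:k_medoid_forall_centers_partitions}.
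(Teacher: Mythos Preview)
Your diagnosis is right that Markov on the raw sum $\sum_p\|Gp-Gc_i\|^z$ is too weak, but the fix you propose (variance/Chebyshev or an MGF/Bernstein argument to handle the correlations through the shared matrix $G$) is unnecessary, and you leave that step as an acknowledged obstacle rather than a completed argument. The paper sidesteps the whole issue with a one-line trick: apply Markov not to the raw sum but to the \emph{excess}. Concretely, \cite[Eq.~(5)]{MakarychevMR19} (which is just the integrated upper tail of \Cref{lem:gaussian_concentration}) gives
\[
\E\big[\max\{0,\ \|Gp-Gc_i\|^z-(1+\epsilon)^z\|p-c_i\|^z\}\big]\ \le\ e^{-\Omega(\epsilon^2 t)}\,\|p-c_i\|^z
\]
for each pair. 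Summing over all $p$ and $i$ (linearity of expectation needs no independence), the expected total excess is at most $e^{-\Omega(\epsilon^2 t)}\cost_k^z(\P,C)$. Now Markov on this nonnegative random variable yields, with probability at least $9/10$, that the total excess is at most $10\,e^{-\Omega(\epsilon^2 t)}\cost_k^z(\P,C)$, whence $\cost_k^z(G(\P),G(C))\le\big((1+\epsilon)^z+O(e^{-\Omega(\epsilon^2 t)})\big)\cost_k^z(\P,C)$, and a rescaling of $\epsilon$ finishes.

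The point is that the excess already has exponentially small mean, so plain Markov gives the $9/10$ probability directly; there is no ``technical heart'' involving covariances or Laplace transforms. Your Chebyshev route can in fact be pushed through (in the worst case all $p-c_i$ are parallel and the sum equals $\cost_k^z(\P,C)\cdot(\chi^2_t/t)^{z/2}$, whose variance is $O(z^2/t)\cost_k^z(\P,C)^2$), but it is strictly more work for the same conclusion.
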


\begin{proof}[Proof of \Cref{thm:k_medoid_forall_centers_partitions}]
    Consider an optimal discrete $k$-median of $P$. Denote by $C^*=\{c_1^*,\ldots,c_k^*\}\subseteq P$ and by $S_1^*,\ldots, S_k^*$ the centers and clusters (respectively) in that solution.
    Applying \Cref{lemma:fixed_centers_partiton} to the optimal center set $C^*$ and the partition $\P^* = (S_1^*, \dots, S_k^*)$,
    we have that with probability at least $9/10$,
    \[\opt(G(P)) \leq \cost(G(\P^*), G(C^*)) \leq (1 + \epsilon) \cost(\P^*, C^*) = (1 + \epsilon) \opt(P), \] 
    concluding the first part of the theorem.
    
    Denote by $r_0$ the largest radius of any cluster $S_i^*$.
    Pick a suitable $m=O(\log n)$ such that $2^m=n^{10}$.
    For $i\in [0,m]$ and $j\in [k]$, set $r_i = r_0/2^i$, and $P_{ij} = S_j\cap B(c_j,r_i)$, i.e., for every cluster, we have a sequence of geometrically decreasing balls.
    Additionally, let $N_i$ be an $\eps^3 r_i$-net of $\cup_j P_{ij}$.
    By \Cref{lemma:packing}, $|N_i|\leq k\eps^{-O(\ddim(P))}$.

    For each $x,y\in \cup_{i\in [0,m]} N_i$, by \Cref{lem:gaussian_concentration}, 
    \[\Pr(\|Gx-Gy\|> (1+\epsilon)\|x-y\|)\leq \exp(-\eps^2 t/8)\leq \frac{\eps^{\Omega(\ddim(P))}}{k^2m^2}.\]
    Thus, by a union bound, w.p. at least $9/10$,
    \begin{equation}\label{eq:JL_kcenters}
        \forall x,y\in \cup_{i\in [0,m]} N_i, \qquad \|Gx-Gy\|\leq (1+\epsilon)\|x-y\|.
    \end{equation}
    Furthermore, for each $i\in [0,m], y\in N_i$, by \Cref{lem:IN07_expansion},
    \[
    \Pr(\exists p\in P\cap B(y,\epsilon^3 r_i) \ s.t. \ \|G(p-y)\|> 10 \eps^3 r_i)\leq \exp(-\Omega(t)).
    \]
    By a union bound, w.p. at least $9/10$,
    \begin{equation}\label{eq:IN_kcenters}
        \forall i\in [0,m], y\in N_i, p\in P\cap B(y,\epsilon^3 r_i), \qquad \|G(p-y)\|\leq 10 \eps^3 r_i.
    \end{equation}
    By another union bound, \Cref{eq:JL_kcenters,eq:IN_kcenters} hold with probability at least $2/3$.
    
    We are now ready to prove the second part of the theorem.
    Let $C=\{c_1,\ldots,c_k\}\subseteq P$ and let a partition $\P=(S_1,\ldots,S_k)$ of $P$.
    For every $p\in P$ we denote by $u_p$ the nearest net-point to $p$ in the level such that $P_i\setminus P_{i+1}$ contains $p$, and the radius of that level is denoted $r_p$.
    Denote by $f(p)$ the center in $C$ assigned to $p$ according to the partition $\P$.
    Recall that $C^*(p)$ is a point in $C^*$ that is nearest to $p$.
    Observe that
    \[
    \sum_{p \in P} r_p^z \leq n\cdot \left(\frac{r_0}{n^{10}}\right)^z +\sum_{j=1}^k \sum_{i=0}^{m-1} \sum_{p \in P_{i,j}\setminus P_{i+1,j}} (2\|p-c_j^*\|)^z 
    = O(2^z) \cdot \opt(P),
    \]
    and
    \begin{align}
        (\tfrac{1}{2} r_{f(p)})^z 
        &\leq  \|f(p) - C^*(f(p))\|^z && \text{by definition} \nonumber\\
        &\leq \|f(p) - C^*(p)\|^z && \text{$C^*(f(p))$ is nearest to $f(p)$ from $C^*$} \nonumber\\
        &\leq 2^{z-1} \|p-f(p)\|^z + 2^{z-1} \|p - C^*(p)\|^z && \text{by \Cref{lemma:triangle}.} 
        \label{eq:ineq_rad_p_vs_rad_c_p}
    \end{align}
    Therefore,
    \begin{align*}
        & \quad \cost(G(\P), G(C)) \\
        & \equiv \sum_{p \in P} \|Gp - Gf(p)\|^z \\
        & \geq \sum_{p \in P} (1 - z \epsilon) \|Gu_p - Gu_{f(p)}\|^z - \epsilon^{-z} \|Gp - Gu_p\|^z - \epsilon^{-z} \|Gf(p) - Gu_{f(p)}\|^z && \text{by \Cref{lemma:triangle}}\\
        & \geq \sum_{p \in P} (1 - z \epsilon)(1 - \epsilon)^z \|u_p - u_{f(p)}\|^z - \epsilon^{-z} (10 \epsilon^3 r_p)^z - \epsilon^{-z} (10 \epsilon^3 r_{f(p)})^z &&\text{by \eqref{eq:JL_kcenters} and \eqref{eq:IN_kcenters}}\\
        & \geq \sum_{p \in P} (1 - z \epsilon)^2 (1 - \epsilon)^z \|p - f(p)\|^z - O(\epsilon)^z r_p^z - O(\epsilon)^z r_{f(p)}^z &&\text{by \Cref{lemma:triangle}} \\
        & \geq \sum_{p \in P} (1 - 3z \epsilon) \|p - f(p)\|^z 
        - O(\epsilon)^z r_p^z - O(\epsilon)^z 2^{2z-1}(\|p - f(p)\|^z + \|p - C^*(p)\|^z) &&\text{by \eqref{eq:ineq_rad_p_vs_rad_c_p}} \\
        & \geq (1 - O(z \epsilon)) \cost(\P, C) - O(\epsilon) \cdot \opt(P).
    \end{align*}
    Rescaling $\eps \to \epsilon/z$ concludes the proof.
\end{proof}

\section{General candidate centers}
We now consider a generalization of \Cref{thm:k_medoid_forall_centers_partitions}, to the setting where the centers are from a (possibly different than the input) candidate set $Q$.
Unfortunately, to obtain multiplicative contraction in this setting, we have to pay $\Theta(\epsilon^{-2} \log |Q|)$ in the target dimension.
We prove the upper bound below, and the matching lower bound is provided in \Cref{thm:lowerbound_logn_medoid_forall_c_candidate}. 

\begin{theorem}\label{thm:k_medoids_for_all_partitions_c_candidate_multi}
    Let $\eps>0$, $z \geq 1$ and $d, k, s\in \N$ and a Gaussian JL map $G\in\R^{t\times d}$ with suitable $t=O(z^2 \eps^{-2}(\log s + z \log(z / \epsilon)))$.
    For every set $P \subseteq\R^d$ and every candidate center set $Q \subseteq \R^d$ with $|Q| = s \geq k$,
    with probability at least $2/3$,
    \begin{enumerate}
        \item $\opt_k^z(G(P), G(Q)) \leq (1 + \epsilon) \opt_k^z(P, Q)$, and
        \item for every $C=(c_1,\ldots,c_k)\subseteq Q$ and every partition $\P=(S_1,\ldots,S_k)$ of $P$, 
        \[
        \cost_k^z(G(\P), G(C)) \geq (1 - \epsilon) \cost_k^z(\P, C),
        \]
        where $\cost_k^z(\P,C)=\sum_{i=1}^k\sum_{p\in S_i}\|p-c_i\|^z$.
    \end{enumerate}
\end{theorem}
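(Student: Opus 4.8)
The plan for part 1 is immediate. Apply \Cref{lemma:fixed_centers_partiton} to one optimal solution: the optimal center set $C^* = \{c_1^*, \dots, c_k^*\} \subseteq Q$ together with the clustering $\P^* = (S_1^*, \dots, S_k^*)$ it induces on $P$. Since this is a single fixed solution, target dimension $t = O(z^2 \epsilon^{-2} \log \epsilon^{-1})$ already suffices for $\cost_k^z(G(\P^*), G(C^*)) \le (1+\epsilon)\cost_k^z(\P^*, C^*) = (1+\epsilon)\opt_k^z(P, Q)$ with probability at least $9/10$, whence $\opt_k^z(G(P), G(Q)) \le \cost_k^z(G(\P^*), G(C^*))$.

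For part 2, note that since almost surely $G$ is injective on the finite set $P \cup Q$ we have $\cost_k^z(G(\P),G(C)) = \sum_{i}\sum_{p\in S_i}\|Gp - Gc_i\|^z$ for every partition $\P = (S_1,\dots,S_k)$ of $P$ and every $C = (c_1,\dots,c_k) \subseteq Q$, so it suffices to lower-bound each term $\|Gp - Gc_i\|^z$. The cleanest route is to use JL (\Cref{lem:gaussian_concentration}) and a union bound to preserve all pairwise distances among $P \cup Q$ within factor $1 \pm \epsilon/z$: then $\|Gp-Gc\|^z \ge (1-\epsilon/z)^z\|p-c\|^z \ge (1-\epsilon)\|p-c\|^z$ for all $p \in P$, $c \in Q$, and summing gives $\cost_k^z(G(\P),G(C)) \ge (1-\epsilon)\cost_k^z(\P,C)$; this costs $t = O(z^2\epsilon^{-2}\log(|P| + s))$. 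To match the claimed bound $O(z^2\epsilon^{-2}(\log s + z\log(z/\epsilon)))$ — i.e.\ to remove the $\log|P|$ contribution — I would keep the union bound only over the $\binom{s}{2}$ pairs inside $Q$ and handle the dataset through the single fixed map $p \mapsto Q(p)$, the point of $Q$ nearest to $p$: \Cref{lemma:fixed_centers_partiton}, applied with $k$ replaced by $s$ to the center set $Q$ with this nearest-point partition of $P$, bounds the displacement $\sum_{p\in P}\|Gp - GQ(p)\|^z \le (1+\epsilon)\sum_{p\in P}\dist(p,Q)^z \le (1+\epsilon)\opt_k^z(P,Q)$. One then lower-bounds $\|Gp - Gf(p)\|^z$, where $f\colon P \to C$ is the assignment given by $\P$, by routing through $GQ(p)$ with the generalized triangle inequalities of \Cref{lemma:triangle}: for a parameter $\delta \in (0,1)$,
\[
  \|Gp - Gf(p)\|^z \ge (1-z\delta)\|GQ(p) - Gf(p)\|^z - \delta^{-z}\|Gp - GQ(p)\|^z,
\]
and $\|GQ(p) - Gf(p)\|^z \ge (1-\epsilon)\|Q(p) - f(p)\|^z \ge (1-\epsilon)\bigl((1-z\delta)\|p-f(p)\|^z - \delta^{-z}\dist(p,Q)^z\bigr)$; summing over $p$, and using $\cost_k^z(\P,C) \ge \cost_k^z(P,C) \ge \opt_k^z(P,Q)$ to absorb the $O(\delta^{-z})\opt_k^z(P,Q)$ error terms, then yields $(1-\epsilon)$-contraction after a final rescaling $\epsilon \to \epsilon/z$.

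I expect the absorption step to be the main obstacle. The factor $\delta^{-z}$ multiplying the displacement $\sum_p \dist(p,Q)^z$, which can be as large as $\opt_k^z(P,Q)$ for an adversarial $Q$, is too large to be dominated by $\epsilon\cdot\cost_k^z(\P,C)$ when $(\P,C)$ is near-optimal — in contrast to the net-based proof of \Cref{thm:k_medoid_forall_centers_partitions}, where nets at scale $\epsilon^{O(1)}$ times the local radius make the analogous term genuinely negligible, at the cost of a $\ddim$ factor that is not available here. The fix is to split $P$ according to whether $\|p-f(p)\| \ge \tfrac{C_0 z}{\epsilon}\dist(p,Q)$ for a suitable constant $C_0$: for such ``far'' points, routing through $GQ(p)$ is essentially lossless with $\delta$ chosen polynomially in $\epsilon/z$, and the errors sum to $O(\epsilon)\opt_k^z(P,Q)$; for the remaining ``near'' points, $f(p)$ is within a factor $O(z/\epsilon)$ of the best candidate for $p$, so one controls the pair $(p,f(p))$ via preserved distances directly — which is where a union bound reminiscent of $\log|P|$ re-enters, and where one invokes the candidate-centers convention that $|P|$ is polynomially bounded in $s = |Q|$, so that $\log(|P|\cdot s) = O(\log s)$. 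Past this split I anticipate only routine bookkeeping: combining the two ranges of points, tracking the constants from \Cref{lemma:triangle}, and the $\epsilon/z$ rescaling — no new idea beyond \Cref{lemma:fixed_centers_partiton}, \Cref{lemma:triangle}, and the union bound over $Q$.
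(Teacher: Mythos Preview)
Your handling of part 1 is fine and matches the paper's argument.

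For part 2, your proposal has a genuine gap. The ``candidate-centers convention that $|P|$ is polynomially bounded in $s = |Q|$'' that you invoke for the near-point range does not exist: the theorem is stated for \emph{every} set $P \subseteq \R^d$, and the target dimension $t = O(z^2\epsilon^{-2}(\log s + z\log(z/\epsilon)))$ has no dependence on $|P|$ whatsoever. So your near-point argument---a direct JL union bound over the pairs $(p,f(p))$---would cost $\Omega(\epsilon^{-2}\log|P|)$ and does not close. Your far-point routing through $Q(p)$ is plausible (with somewhat more care on the $\delta^{-z}\sum_p\|Gp-GQ(p)\|^z$ term than you indicate, since \Cref{lemma:fixed_centers_partiton} only gives a sum bound and the ``far'' subset depends on the solution), but the split cannot be completed without controlling the near points, and your proposed mechanism for that is unavailable.

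The paper takes an entirely different route that sidesteps the $|P|$ issue. The key tool is \Cref{lemma:preserve_sum}: for a \emph{single fixed} center $c \in \R^d$, with probability $1-\epsilon^{-O(z)}k^2e^{-\Omega(\epsilon^2 t)}$, one has simultaneously for \emph{all} $P' \subseteq P$ that $\sum_{p\in P'}\|Gp-Gc\|^z \geq (1-\epsilon)^{3z}\sum_{p\in P'}\|p-c\|^z - \tfrac{\epsilon}{k^2}\optcont_k^z(P)$. This is proved by a central-symmetry reduction to the continuous $1$-median bound of~\cite[Theorem 3.4]{MakarychevMR19}, whose failure probability is independent of $|P|$. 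One then union-bounds \Cref{lemma:preserve_sum} over the $s$ centers in $Q$, applies it with $c = c_i$ and $P' = S_i$ for each $i \in [k]$, sums over $i$, and absorbs the additive $\epsilon\cdot\optcont_k^z(P) \leq \epsilon\cdot\opt_k^z(P,Q) \leq \epsilon\cdot\cost_k^z(\P,C)$ into the multiplicative loss. No routing through $Q(p)$, no far/near split, and no dependence on $|P|$ enters.
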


The proof of \Cref{thm:k_medoids_for_all_partitions_c_candidate_multi} uses the following lemma, whose proof is provided in \Cref{sec:proof_lem_preserve_sum}.

\begin{restatable}{lemma}{lemmapreservesum}
    \label{lemma:preserve_sum}
    There exists universal constant $A_2>1$, such that
    for every $P\subset \R^d,\eps>0, z \geq 1, k\in \N$,  $c\in \R^d$, and a Gaussian JL map $G\in\R^{t\times d}$, with probability $1-\epsilon^{-O(z)} k^2 e^{-A_2 \eps^{2}t}$,
    \[
    \forall P'\subseteq P, \qquad \sum_{p\in P'} \|Gp-Gc\|^z \geq (1- \eps)^{3z} \sum_{p\in P'} \|p-c\|^z - \frac{\eps}{k^2} \optcont_k^z(P),
    \]
    where $\optcont_k^z(P)$ is the optimal continuous $(k, z)$-clustering value of $P$.
\end{restatable}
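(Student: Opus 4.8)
The plan is to strip the ``for all subsets'' quantifier down to a single deterministic pointwise estimate that holds on a high-probability event for $G$, and to use the optimal continuous clustering of $P$ only as a device to produce the additive slack $\tfrac{\eps}{k^2}\optcont_k^z(P)$. Concretely, fix an optimal continuous $(k,z)$-clustering of $P$: centers $c_1^*,\dots,c_k^*\in\R^d$ inducing a partition $P=P_1\cup\dots\cup P_k$ by nearest center, so that $\sum_{j=1}^k\sum_{p\in P_j}\|p-c_j^*\|^z=\optcont_k^z(P)$. I will produce, on a good event, nonnegative quantities $\{\mathrm{err}_p\}_{p\in P}$ with $\sum_{p\in P}\mathrm{err}_p\le\tfrac{\eps}{k^2}\optcont_k^z(P)$ and $\|Gp-Gc\|^z\ge(1-\eps)^{3z}\|p-c\|^z-\mathrm{err}_p$ for every $p$; summing over an arbitrary $P'\subseteq P$ and discarding the nonnegative terms for $p\notin P'$ then gives the lemma.

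The good event has two parts. First, by \Cref{lem:gaussian_concentration} and a union bound over the at most $\binom{k+1}{2}\le k^2$ pairs, with probability $\ge1-k^2e^{-\eps^2t/8}$ the map $G$ preserves all pairwise distances among $\{c,c_1^*,\dots,c_k^*\}$ within $1\pm\eps$; in particular $\|Gc_j^*-Gc\|\ge(1-\eps)\|c_j^*-c\|$ for each $j$. Second, for each cluster $P_j$ we need an \emph{aggregate} control on the within-cluster displacements, of the form $\sum_{p\in S}\|Gp-Gc_j^*\|^z\le(1+\eps)^z\sum_{p\in S}\|p-c_j^*\|^z$ for the relevant subsets $S$ of $P_j$; this fails with probability only $\eps^{-O(z)}e^{-\Omega(\eps^2t)}$, which is the source of the remaining terms in the claimed failure probability $\eps^{-O(z)}k^2e^{-A_2\eps^2t}$.

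Now fix $p\in P_j$ and write $\Delta_p=\|p-c_j^*\|$, $D_j=\|c_j^*-c\|$, $\rho_p=\|p-c\|$, so $|D_j-\Delta_p|\le\rho_p\le D_j+\Delta_p$. I split into three regimes. (i) \emph{Charged}: $\rho_p\le(\eps/k^2)^{1/z}\Delta_p$; set $\mathrm{err}_p=\|p-c\|^z$, for which the estimate is trivial, and $\sum_{\text{(i)}}\mathrm{err}_p\le\tfrac{\eps}{k^2}\sum_p\Delta_p^z\le\tfrac{\eps}{k^2}\optcont_k^z(P)$. (ii) \emph{Aligned}: $\Delta_p\le\eps\rho_p$, hence $\rho_p\in(1\pm\eps)D_j$; then $\|Gp-Gc\|\ge\|Gc_j^*-Gc\|-\|Gp-Gc_j^*\|\ge(1-\eps)^2\rho_p-\|Gp-Gc_j^*\|$, and raising to the $z$-th power via \Cref{lemma:triangle} exposes a genuine $(1-\eps)^{O(z)}\rho_p^z$ term minus an $\eps^{-O(z)}\|Gp-Gc_j^*\|^z$ error, which the aggregate bound lets me absorb into $\mathrm{err}_p$ (the gap $\Delta_p\le\eps\rho_p$ keeps the total within budget). (iii) In the residual range $\eps\rho_p<\Delta_p<(k^2/\eps)^{1/z}\rho_p$, the quantities $\Delta_p,\rho_p,D_j$ agree up to a $\mathrm{poly}(k/\eps)$ factor; I group these points into $O(z\log(k/\eps))$ dyadic shells in $\Delta_p$ and handle each shell by the same generalized-triangle-plus-aggregate mechanism as in (ii), union-bounding over the $\mathrm{poly}(k/\eps)$ shells (this is where the $\eps^{-O(z)}$ factor enters). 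Tracking the three multiplicative losses — the $c_j^*$-to-$c$ JL bound, the triangle step $D_j\ge(1-\eps)\rho_p$, and the generalized-triangle step — accounts for the $(1-\eps)^{3z}$.

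The main obstacle is exactly steps (ii)--(iii): I must control the displacement images $\|Gp-Gc_j^*\|^z$ for all $p$ simultaneously while avoiding both a union bound over $P$ (which would force $t=\Omega(\log n)$) and a covering net of $P_j$ (which would force a $\ddim(P)$ dependence, which is absent from the statement). The resolution is that a per-point bound is not needed — only a bound on the aggregate contracted weight $\sum_p\big((1-\eps)^{3z}\|p-c\|^z-\|Gp-Gc\|^z\big)^+$ — and the aggregate expansion $\sum_{p\in S}\|Gp-Gc_j^*\|^z\le(1+\eps)^z\sum_{p\in S}\|p-c_j^*\|^z$ concentrates with probability $1-\eps^{-O(z)}e^{-\Omega(\eps^2t)}$ because its worst case is the rank-one (collinear) configuration of displacements, for which a single Gaussian-norm tail bound suffices, independently of $|S|$ and of the doubling dimension. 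Making this ``rank-one is extremal'' step rigorous for general $z\ge1$ (it is immediate for $z\in\{1,2\}$ via a chi-squared / Hanson--Wright estimate) is the one place I expect to need genuine care.
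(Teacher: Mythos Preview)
Your framework of producing pointwise errors $\mathrm{err}_p\ge 0$ with $\sum_p\mathrm{err}_p\le\tfrac{\eps}{k^2}\optcont$ is the right reduction for the ``for all $P'$'' quantifier, but the analysis you sketch for regimes (ii) and (iii) does not deliver that budget. In regime (ii) you apply the generalized triangle inequality through $c_j^*$ and obtain $\|Gp-Gc\|^z\ge(1-\eps)^{3z}\rho_p^z-\eps^{-O(z)}\|Gp-Gc_j^*\|^z$, so your per-point error is $\mathrm{err}_p=\eps^{-O(z)}\|Gp-Gc_j^*\|^z$. Even granting your aggregate expansion bound, this sums to $\eps^{-O(z)}\sum_p\|Gp-Gc_j^*\|^z\le\eps^{-O(z)}(1+\eps)^z\optcont$, which is $\eps^{-O(z)}\optcont$, not $\tfrac{\eps}{k^2}\optcont$. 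The observation ``$\Delta_p\le\eps\rho_p$ keeps the total within budget'' does not help: it only gives $\sum_{(ii)}\Delta_p^z\le\eps^z\sum_{(ii)}\rho_p^z$, and $\sum_{(ii)}\rho_p^z$ is a piece of $\cost(P,c)$, which can be arbitrarily large compared to $\optcont$ (take $c$ far from $P$). More conceptually, the slack you gain from the third $(1-\eps)$ factor is $\Theta(\eps)\rho_p\ge\Theta(1)\Delta_p$ in regime (ii), which is only of the same order as $\|Gp-Gc_j^*\|\approx\Delta_p$, not smaller; so the positive part $\max\{0,\|Gp-Gc_j^*\|-(1-\eps)^2\Delta_p\}$ has expectation $\Theta(\eps)\Delta_p$ rather than $o(\eps)\Delta_p$. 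Regime (iii) is worse, since there $D_j$ and $\rho_p$ need not even be close, so the first step $\|Gp-Gc\|\ge(1-\eps)D_j-\|Gp-Gc_j^*\|$ can be vacuous.

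The paper's proof avoids a pointwise argument entirely via a symmetrization trick. It takes the worst violating subset $\tilde P\subseteq P$ and forms the centrally symmetric set $\tilde X=\tilde P\cup\{2c-p:p\in\tilde P\}$; by central symmetry, $c$ is the optimal continuous $(1,z)$-center of $\tilde X$ and, since $G$ is linear, $Gc$ is the optimal center of $G(\tilde X)$. This turns $\sum_{p\in\tilde P}\|p-c\|^z$ and $\sum_{p\in\tilde P}\|Gp-Gc\|^z$ into (half of) the respective optimal $1$-median values, to which one applies \cite[Theorem~3.4]{MakarychevMR19} as a black box; the $\optcont_{2k}^z$ of the symmetrized ambient set is at most $2\optcont_k^z(P)$. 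The point is that MMR19's result already handles arbitrary (even $G$-dependent) subsets for the \emph{optimal} center, and symmetrization forces the fixed $c$ to be optimal. Your route through the optimal centers $c_j^*$ cannot recover this, because the triangle inequality converts the tight concentration of $\|G(p-c)\|$ into a crude difference of two comparable quantities.
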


\begin{remark}
    There is a similar statement in \cite{MakarychevMR19}, but w.r.t. the optimal center of $P'$.
    In contrast, here the center is fixed.
\end{remark}

\begin{proof}[Proof of \Cref{thm:k_medoids_for_all_partitions_c_candidate_multi}]
    The first guarantee is the same as \Cref{thm:k_medoid_forall_centers_partitions}, so we omit its proof and focus on the second guarantee.
    By \Cref{lemma:preserve_sum} and a union bound over $Q$, we have that with probability $1 - s \cdot \epsilon^{-O(z)} k^2 e^{-A_2 \epsilon^2 t} \geq 2/3$,
    all centers $c \in Q$ satisfy that
    \begin{equation}\label{eqn:sum_is_preserved}
        \forall P'\subseteq P, \qquad \sum_{p\in P'} \|Gp-Gc\|^z \geq (1- \eps)^{3z} \sum_{p\in P'} \|p-c\|^z - \frac{\eps}{k^2} \optcont_k^z(P).
    \end{equation}

    Consider an arbitrary center set $C = (c_1, c_2, \dots, c_k) \subseteq Q$ and a partition $\P = (S_1, S_2, \dots, S_k)$ of $P$.
    For $i \in [k]$, pick $c = c_i$ and $P' = S_i$ in \eqref{eqn:sum_is_preserved}; we have
    \begin{align*}
        \forall i \in [k], \qquad \sum_{p \in S_i} \|Gp - Gc_i\|^z \geq (1 - \epsilon)^{3z} \sum_{p \in S_i} \|p - c_i\|^z - \frac{\epsilon}{k^2} \optcont_k^z(P).
    \end{align*}
    Summing over $i \in [k]$, we obtain 
    \begin{align*}
        \cost_k^z(G(\P), G(C)) 
        &= \sum_{i = 1}^k \sum_{p \in S_i} \|Gp - Gc_i\|^z \\
        & \geq (1 - \epsilon)^{3z} \sum_{i = 1}^k \sum_{p \in S_i} \|p - c_i\|^z - \epsilon \optcont_k^z(P) \\
        & \geq (1 - \epsilon)^{3z} \sum_{i = 1}^k \sum_{p \in S_i} \|p - c_i\|^z - \epsilon \opt_k^z(P, Q) \\
        & \geq (1 - 3z \epsilon) \cost_k^z(\P, C) - \epsilon \opt_k^z(P, Q) \\
        & \geq (1 - O(z \epsilon)) \cost_k^z(\P, C).
    \end{align*}
    Rescaling $\epsilon$ by a factor of $1/z$ completes the proof.
\end{proof}

To bypass the $O(\epsilon^{-2} \log |Q|)$ barrier in the target dimension, we consider relaxed contraction, and prove the following.

\begin{theorem}\label{thm:k_medoids_for_all_partitions_c_candidate}
    Let $\eps>0$, $z \geq 1$ and $d,\ddim,k\in \N$ and a Gaussian JL map $G\in\R^{t\times d}$ with suitable $t=O(z^2 \eps^{-2}(\ddim\log(z/\eps) + \log k + \log \log \alpha + \log\log n))$.
    For every $n$-point set $P \subseteq\R^d$ and every candidate center set $Q \subseteq \R^d$ with $\ddim(P \cup Q)\leq \ddim$, with probability at least $2/3$,
    \begin{enumerate}
        \item $\opt_k^z(G(P), G(Q)) \leq (1 + \epsilon) \opt_k^z(P, Q)$, and
        \item for every $C=(c_1,\ldots,c_k)\subseteq Q$ and every partition $\P=(S_1,\ldots,S_k)$ of $P$, 
        \[
        \cost_k^z(G(\P), G(C)) \geq \min\{\alpha \cdot \opt_k^z(P, Q), (1 - \epsilon) \cost_k^z(\P, C)\},
        \]
        where $\cost_k^z(\P,C)=\sum_{i=1}^k\sum_{p\in S_i}\|p-c_i\|^z$.
    \end{enumerate}
\end{theorem}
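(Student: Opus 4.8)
The plan is to adapt the proof of \Cref{thm:k_medoid_forall_centers_partitions} to the candidate-center setting, where the only genuinely new difficulty is that a point of $Q$ may lie far from $P\cup C^*$ and hence cannot be snapped to any net. The first guarantee, $\opt_k^z(G(P),G(Q))\le(1+\eps)\opt_k^z(P,Q)$, I would obtain verbatim as there: fix an optimal solution $C^*\subseteq Q$ with its induced partition $\P^*$, apply \Cref{lemma:fixed_centers_partiton} to the single solution $(\P^*,C^*)$, and use $\opt_k^z(G(P),G(Q))\le\cost_k^z(G(\P^*),G(C^*))$.

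For the contraction guarantee, write $\opt:=\opt_k^z(P,Q)$, let $C^*=\{c_1^*,\dots,c_k^*\}\subseteq Q$ be optimal with clusters $S_1^*,\dots,S_k^*$ and maximal cluster radius $r_0$, and fix the threshold $T:=L(\alpha\,\opt)^{1/z}$, where $L$ is the constant of \Cref{lem:IN07_variant_contraction}. I would build a net hierarchy centered at $C^*$ as in \Cref{thm:k_medoid_forall_centers_partitions}, with two changes: (i) for each level $\ell$ and $j\in[k]$, $N_{j,\ell}$ is an $\eps^3 r_\ell$-net of $(P\cup Q)\cap B(c_j^*,r_\ell)$ — over $P\cup Q$, not just $P$, so candidate centers are covered as well; and (ii) the geometric grid of radii $r_\ell$ is extended to span all scales from $\Theta((\eps\,\opt/n)^{1/z})$ up to $T$. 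This grid has $O(\log n+\log\alpha)$ levels (for constant $z$), each net has $\eps^{-O(\ddim)}$ points by \Cref{lemma:packing}, so the hierarchy has $k\cdot\eps^{-O(\ddim)}\cdot O(\log n+\log\alpha)$ points, and the stated target dimension makes the following three events hold simultaneously with probability at least $2/3$: \emph{(a)} all pairwise distances among net points are preserved within $1\pm\eps$ (\Cref{lem:gaussian_concentration} and a union bound — this is the step producing the $\ddim\log(z/\eps)+\log k+\log\log n+\log\log\alpha$ terms, using $\log(\log n+\log\alpha)=O(\log\log n+\log\log\alpha)$); \emph{(b)} $\|G(x-u)\|\le 10\eps^3 r_\ell$ for all $u\in N_{j,\ell}$ and $x\in(P\cup Q)\cap B(u,\eps^3 r_\ell)$ (\Cref{lem:IN07_expansion}); and \emph{(c)} $\|Gp-Gq\|\ge T/L$ for all $p\in P,q\in Q$ with $\|p-q\|>T$, obtained by applying \Cref{lem:IN07_variant_contraction} to $\{(p-q)/T:p\in P,q\in Q,\ \|p-q\|>T\}$, which lies outside the unit ball and has doubling dimension $O(\ddim(P\cup Q))$ since the difference set of a doubling set is doubling.

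Now fix $C=\{c_1,\dots,c_k\}\subseteq Q$ and a partition $\P=(S_1,\dots,S_k)$, and let $f(p)\in C$ be the center $\P$ assigns to $p$. If some $p$ has $\|p-f(p)\|>T$, then by \emph{(c)}, $\cost_k^z(G(\P),G(C))\ge\|Gp-Gf(p)\|^z\ge(T/L)^z=\alpha\,\opt$ and we are done. Otherwise every $f(p)$ lies within $T$ of $p$, hence within $T+r_0$ of $C^*$ (since $p$ is within $r_0$ of its optimal center), so $f(p)$ is covered by the net at the level whose radius is $\approx\dist(f(p),C^*)$, and the computation in the proof of \Cref{thm:k_medoid_forall_centers_partitions} carries over essentially verbatim: snap each $p$ to $u_p$ at the level matching $\|p-C^*(p)\|$ and each $f(p)$ to $u_{f(p)}$ at the level matching $\dist(f(p),C^*)$, then combine \Cref{lemma:triangle} with \emph{(a)}, \emph{(b)}, the crude bound $\sum_p r_p^z=O(2^z)\opt$, and $(\tfrac12 r_{f(p)})^z\le\|f(p)-C^*(f(p))\|^z\le 2^{z-1}(\|p-f(p)\|^z+\|p-C^*(p)\|^z)$, to get $\cost_k^z(G(\P),G(C))\ge(1-O(z\eps))\cost_k^z(\P,C)-O(\eps)\,\opt$. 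Since $\cost_k^z(\P,C)\ge\opt$ always, the additive term is absorbed, and rescaling $\eps\to\Theta(\eps/z)$ gives $\ge(1-\eps)\cost_k^z(\P,C)$; combining the two cases yields the $\min\{\alpha\,\opt,(1-\eps)\cost_k^z(\P,C)\}$ lower bound.

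The crux — and the reason the statement settles for relaxed rather than multiplicative contraction — is exactly that a candidate center can sit arbitrarily far from $P\cup C^*$ and so cannot be snapped, while JL-preserving all distances to $Q$ would cost $\Theta(\eps^{-2}\log|Q|)$ (matching \Cref{thm:lowerbound_logn_medoid_forall_c_candidate}). The remedy is to invoke the doubling contraction lemma only through the single hard threshold $T\sim(\alpha\,\opt)^{1/z}$: an assignment reaching past $T$ already forces cost $\ge\alpha\,\opt$, and all ``moderately far'' centers at distance $\le T$ from $C^*$ are absorbed by merely extending the net hierarchy, which is where the $\log\log\alpha$ term enters. Two bookkeeping points need care: unlike in the discrete case, $f(p)\in Q$ is not a data point and lies in no optimal cluster, so its snapping level must be chosen from $\dist(f(p),C^*)$ rather than from cluster membership; and one should verify that the coarsest- and finest-net contributions to $\sum_p r_{f(p)}^z$ are $O(\opt)$ so they, too, are absorbed into the additive error. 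I expect step \emph{(c)} — confirming that the relevant difference set is doubling with dimension $O(\ddim(P\cup Q))$ and that the hard-threshold reduction loses nothing — to be the only ingredient that is genuinely new relative to \Cref{thm:k_medoid_forall_centers_partitions}.
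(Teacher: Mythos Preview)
Your overall strategy matches the paper's: split into a ``far'' case where the projected cost already exceeds $\alpha\,\opt$, and a ``near'' case handled by snapping to a net hierarchy over $P\cup Q$ extended by $O(\log\alpha)$ extra levels. The near case is essentially correct and mirrors the paper closely (the paper also takes $r_0=\opt^{1/z}$ rather than the maximal cluster radius, but this is cosmetic).

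The genuine gap is step \emph{(c)}. Your assertion that ``the difference set of a doubling set is doubling'' is false. Take $P=\{2^ie_i:i\in[n]\}$ and $Q=\{2^ie_i+e_{n+i}:i\in[n]\}$; a short computation shows $\ddim(P\cup Q)=O(1)$, yet $P-Q\supseteq\{-e_{n+i}:i\in[n]\}$, a set with all pairwise distances equal to $\sqrt2$ and hence doubling dimension $\Theta(\log n)$. Applying \Cref{lem:IN07_variant_contraction} to $\{(p-q)/T:p\in P,\,q\in Q\}$ would therefore require target dimension growing with $\log n$ independently of $\ddim(P\cup Q)$, and your event \emph{(c)} does not hold at the stated dimension.

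The paper avoids the difference set entirely. Its case split is on whether some \emph{center} $c_j$ (with $S_j\not\subseteq\{c_j\}$) satisfies $\dist(c_j,C^*)>10L\alpha\,r_0$. If so, pick any $p\in S_j\setminus\{c_j\}$ and let $c_i^*=C^*(p)$; then apply \Cref{lem:IN07_expansion} and \Cref{lem:IN07_variant_contraction} \emph{centered at each of the $k$ optimal centers} to get $\|Gp-Gc_i^*\|\le 10r_0$ and $\|Gc_j-Gc_i^*\|\ge 10\alpha r_0$, and conclude $\|Gp-Gc_j\|^z\ge\alpha\,\opt$ by the triangle inequality. This costs only $k$ invocations (hence a $\log k$ term) and uses only $\ddim((P\cup Q)-c_i^*)=\ddim(P\cup Q)$, which is where your argument needs to be repaired. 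Once you replace your event \emph{(c)} with these two per-optimal-center events, the rest of your proof goes through.
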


\begin{proof}
    The first guarantee is the same as \Cref{thm:k_medoid_forall_centers_partitions}, so we omit its proof and focus on the second guarantee.
    Consider an optimal discrete $k$-median of $P$. Denote by $C^*=\{c_1^*,\ldots,c_k^*\}\subseteq P$ and by $S_1^*,\ldots, S_k^*$ the centers and clusters (respectively) in that solution.
    Denote $r_0 := \opt_k^z(P, Q)^{1/z}$.
    Pick a suitable $m=O(\log n)$ such that $2^m=n^{10}$.
    Denote $L$ as the same (sufficiently large) constant in \Cref{lem:IN07_variant_contraction}.
    For $i\in [-\log(10L \alpha), m]$ and $j\in [k]$, set $r_i = r_0/2^i$, and $P_{ij} = S_j^* \cap B(c_j,r_i)$, i.e., for every cluster, we have a sequence of geometrically decreasing balls.
    Additionally, let $N_i$ be an $\eps^3 r_i$-net of $\cup_j P_{ij}$.
    As in the proof of \Cref{thm:k_medoid_forall_centers_partitions}, we have $\sum_{p\in P}r_p^z=O(2^z) \opt$.

    By \Cref{lem:gaussian_concentration,lem:IN07_expansion,lem:IN07_variant_contraction} and a union bound, the following hold with probability at least $2/3$,
    \begin{align}
        &\forall x,y\in \cup N_i,  &&\|Gx-Gy\|\leq (1+\epsilon)\|x-y\| \label{eq:JL_kcenters_eps} \\
        &\forall i\in [-\log(10 L \alpha), m], y\in N_i, p\in P\cap B(y,\epsilon^3 r_i), &&\|G(p-y)\|\leq 10 \eps^3 r_i\label{eq:IN_kcenters_eps} \\
        &\forall i \in [k], \forall y \in (P \cup Q) \cap B(c_i^*, r_0), && \|Gy - G c_i^*\| \leq 10  r_0 \label{eq:IN_kcenters_expansion} \\
        &\forall i \in [k], \forall y \in (P \cup Q) \setminus B(c_i^*, 10L \alpha \cdot r_0), && \|Gy - G c_i^*\| \geq 10\alpha \cdot r_0\label{eq:IN_kcenters_contraction}
    \end{align}
    \Cref{eq:JL_kcenters_eps,eq:IN_kcenters_eps} are the same as \Cref{eq:JL_kcenters,eq:IN_kcenters}, and hold with probability at least $9/10$. 
    \Cref{eq:IN_kcenters_expansion,eq:IN_kcenters_contraction} each hold w.p. $9/10$ directly by \Cref{lem:IN07_expansion,lem:IN07_variant_contraction}, respectively. A union bound yields the desired success probability $>2/3$.

    We are now ready to prove the theorem.
    Let $C=\{c_1,\ldots,c_k\}\subseteq Q$ and let a partition $\P=(S_1,\ldots,S_k)$ of $P$.
    For $p \in P$, denote $f(p) \in C$ to be the center to which $p$ is assigned.
    Consider the following cases.

    \paragraph{Case 1, $\exists j \in [k]$, s.t. $\|c_j - C^*(c_j)\| \geq 10 L \alpha \cdot r_0$ and $S_j\not\subseteq \{c_j\}$.}
    By assumption, there exists a point $p \in S_j, p\neq c_j$. 
    Then $\|c_j - C^*(p)\| \geq \|c_j - C^*(c_j)\| \geq 10L \alpha \cdot r_0$.
    By \eqref{eq:IN_kcenters_contraction}, $\|G c_j - G C^*(p)\| \geq 10\alpha r_0$.
    On the other hand, $\|p - C^*(p)\| \leq r_0$.
    By \eqref{eq:IN_kcenters_expansion}, $\|Gp - GC^*(p)\| \leq 10 r_0$. Hence, 
    \begin{align*}
        \cost(G(\P), G(C)) &\geq \|Gp - Gc_j\|^z \\
        & \geq (\|G c_j - G C^*(p)\| - \|Gp - GC^*(p)\|)^z \\
        &\geq (10 \alpha r_0 - 10 r_0)^z \\ 
        &> \alpha r_0^z \\
        &= \alpha \opt(P, Q).
    \end{align*}

    \paragraph{Case 2, $\forall j \in [k], \ \|c_j - C^*(c_j)\| \leq 10 L \alpha \cdot r_0$ or $S_j\subseteq \{c_j\}$.}
    Without loss of generality, we can assume that for all $j\in [k], S_j\not\subseteq \{c_j\}$. That is since whenever $S_j\subseteq \{c_j\}$, we have $\cost(S_j,c_j)=\cost(G(S_j),Gc_j)=0$.
    
    Therefore, every center in $C$ is covered by the union of nets $\bigcup_i N_i$. 
    For every $p\in P \cup Q$ we denote by $u_p$ the nearest net-point to $p$ in the level such that $P_i\setminus P_{i+1}$ contains $p$, and the radius of that level is denoted $r_p$.
    Same as the proof of \Cref{thm:k_medoid_forall_centers_partitions}, we are able to establish \eqref{eq:ineq_rad_p_vs_rad_c_p} for every $r_{f(p)}$. Then 
    \begin{align*}
        & \quad \cost(G(\P), G(C)) \\
        & \equiv \sum_{p \in P} \|Gp - Gf(p)\|^z \\
        & \geq \sum_{p \in P} (1 - z \epsilon) \|Gu_p - Gu_{f(p)}\|^z - \epsilon^{-z} \|Gp - Gu_p\|^z - \epsilon^{-z} \|Gf(p) - Gu_{f(p)}\|^z && \text{by \Cref{lemma:triangle}}\\
        & \geq \sum_{p \in P} (1 - z \epsilon)(1 - \epsilon)^z \|u_p - u_{f(p)}\|^z - \epsilon^{-z} (10 \epsilon^3 r_p)^z - \epsilon^{-z} (10 \epsilon^3 r_{f(p)})^z &&\text{by \eqref{eq:JL_kcenters_eps} and \eqref{eq:IN_kcenters_eps}}\\
        & \geq \sum_{p \in P} (1 - z \epsilon)^2 (1 - \epsilon)^z \|p - f(p)\|^z - O(\epsilon)^z r_p^z - O(\epsilon)^z r_{f(p)}^z &&\text{by \Cref{lemma:triangle}} \\
        & \geq \sum_{p \in P} (1 - 3z \epsilon) \|p - f(p)\|^z 
        - O(\epsilon)^z r_p^z - O(\epsilon)^z 2^{2z-1}(\|p - f(p)\|^z + \|p - C^*(p)\|^z) &&\text{by \eqref{eq:ineq_rad_p_vs_rad_c_p}} \\
        & \geq (1 - O(z \epsilon)) \cost(\P, C) - O(\epsilon) \cdot \opt(P).
    \end{align*}
    Rescaling $\eps \to \epsilon/z$ concludes the proof.
\end{proof}

\subsection{Proof of \Cref{lemma:preserve_sum}}\label{sec:proof_lem_preserve_sum}

\lemmapreservesum*
Our proof of \Cref{lemma:preserve_sum} is based on \cite{danos2021thesis}, and is by reducing to the central symmetric case.
We say a point set $X \subset \R^d$ is \emph{central symmetric} with center $c \in \R^d$, if for every point $x \in X$, it holds $2c - x \in X$.
The following lemma shows that the (continuous) $1$-median center of a central symmetric point set coincides with its center of symmetry.

\begin{lemma}\label{lemma:central_symmetric}
    Let $z \geq 1$ and $X \subset \R^d$ be a central symmetric point set centered at point $c \in \R^d$. 
    Then $c$ is an optimal (continuous) $(1, z)$-clustering center of $X$.
\end{lemma}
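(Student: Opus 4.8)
The plan is to show that for any point $y \in \R^d$, the cost $\cost^z(X, y) = \sum_{x \in X} \|x - y\|^z$ is minimized at $y = c$, by exploiting the pairing symmetry $x \leftrightarrow 2c - x$. First I would pair up the points of $X$: since $X$ is central symmetric about $c$, we can write $X$ as a disjoint union of pairs $\{x, 2c-x\}$ (with the point $c$ itself, if present in $X$, paired with itself). It then suffices to prove that for each such pair and every $y$, we have $\|x - y\|^z + \|(2c - x) - y\|^z \geq 2\|x - c\|^z$, because summing this inequality over all pairs gives $\cost^z(X, y) \geq \cost^z(X, c)$.

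For a single pair, write $u = x - c$ and $v = y - c$, so the two summands are $\|u - v\|^z$ and $\|{-u} - v\|^z = \|u + v\|^z$, and we want $\|u - v\|^z + \|u + v\|^z \geq 2\|u\|^z$. This is a consequence of convexity: the function $w \mapsto \|w\|^z$ is convex on $\R^d$ for every $z \geq 1$ (as the composition of the convex nondecreasing map $s \mapsto s^z$ on $[0,\infty)$ with the norm), so by the midpoint convexity inequality applied to the points $u - v$ and $u + v$, whose midpoint is $u$, we get $\tfrac12\|u-v\|^z + \tfrac12\|u+v\|^z \geq \|u\|^z$, which is exactly what is needed. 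Multiplying by $2$ and summing over all pairs completes the argument, showing $c$ is an optimal continuous $(1,z)$-center of $X$.

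There is essentially no hard obstacle here; the only mild subtlety is bookkeeping the decomposition of $X$ into symmetric pairs cleanly (handling the possible fixed point $c \in X$, and noting that the decomposition into unordered pairs is well-defined since $x \ne 2c - x$ unless $x = c$). Once that is set up, the per-pair inequality is an immediate application of convexity of $\|\cdot\|^z$, and no estimates involving $\eps$ or the Gaussian map enter at all — this lemma is a purely deterministic geometric fact used later as a building block for \Cref{lemma:preserve_sum}.
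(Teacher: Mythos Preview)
Your proposal is correct and is essentially the same approach as the paper's proof. Both exploit the pairing $x \leftrightarrow 2c-x$ and reduce to the per-pair inequality $\|u-v\|^z + \|u+v\|^z \geq 2\|u\|^z$; the paper derives this via convexity of $t\mapsto t^z$ plus the triangle inequality, while you invoke convexity of $w\mapsto\|w\|^z$ directly, and the paper sidesteps your pairing bookkeeping by summing over all $x\in X$ (using that $x\mapsto 2c-x$ is a bijection on $X$) rather than decomposing into unordered pairs.
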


\begin{proof}
    For $x \in X$, denote $x^- := 2c - x$.
    Let $c^* \in \R^d$ be an optimal $1$-median center of $X$. Then 
    \begin{align*}
        \cost(X, c^*) &= \sum_{x \in X} \|x - c^*\|^z \\
        &= \frac{1}{2} \sum_{x \in X} (\|x - c^*\|^z + \|x^- - c^*\|^z) \\
        &\geq \frac{1}{2^z} \sum_{x \in X} (\|x - c^*\| + \|x^- - c^*\|)^z  \\
        &\geq \frac{1}{2^z} \sum_{x \in X}\|x - x^-\|^z.
    \end{align*}
\end{proof}

Denote $\optcont_k^z(P)$ as the optimal continuous $(k, z)$-clustering value of $P$.
The following lemma is a restatement of~\cite[Theorem 3.4]{MakarychevMR19}.

\begin{lemma}[{\cite[Theorem 3.4]{MakarychevMR19}}]
\label{lemma:1median_mmr}
    Consider a point set $X \subset \R^d$. 
    Let $G$ be a random linear map and $C$ be a random subset of $X$ (which may depend on $G$).
    Then with probability at least $1 - O(\epsilon^{-O(z)}k^2 e^{-\Omega(\epsilon^{-2} t)})$,
    \begin{equation*}
        \optcont_1^z(G(C)) \geq (1 - \epsilon)^{3z} \optcont_1^z(C) - \frac{\epsilon}{k^2} \optcont_k^z(X).
    \end{equation*}
\end{lemma}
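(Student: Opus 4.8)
First, I would note that this lemma is, up to a cosmetic change of notation, precisely \cite[Theorem 3.4]{MakarychevMR19}: there one preserves the optimal continuous $1$-clustering cost of a (possibly $G$-dependent) subset of the ground set, with the same $(1-\epsilon)^{3z}$ contraction factor and an additive slack proportional to $\optcont_k^z$ of the ground set, so strictly speaking nothing new needs to be proved and one could simply invoke it. For a self-contained account I would follow their argument, whose plan I sketch below.

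The one-line summary of the difficulty is that $C$ may depend on $G$, so we cannot union-bound over a single set; the inequality must instead be established \emph{simultaneously for all subsets} $C\subseteq X$, and the term $\frac{\epsilon}{k^2}\optcont_k^z(X)$ is exactly the universal slack that pays for this. For a fixed $C$, the easy direction is the upper bound $\optcont_1^z(G(C))\le(1+\epsilon)\optcont_1^z(C)$, obtained by projecting one optimal center and using \Cref{lemma:triangle}; the content is the contraction bound. For that, given the optimal center $v\in\R^t$ of $G(C)$, I would invoke the robust one-point extension theorem \cite[Theorem 5.2]{MakarychevMR19} to produce a point $u\in\R^d$ with $\sum_{p\in C}\|p-u\|^z\le(1+\epsilon)^{O(z)}\sum_{p\in C}\|Gp-v\|^z$ up to an additive slack; since $\optcont_1^z(C)\le\sum_{p\in C}\|p-u\|^z$ and $\sum_{p\in C}\|Gp-v\|^z=\optcont_1^z(G(C))$, rearranging and folding the $(1\pm\epsilon)^{O(z)}$ factors into $(1-\epsilon)^{3z}$ by a final rescaling of $\epsilon$ yields the bound for that $C$. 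Along the way it is convenient, as in \cite{MakarychevMR19}, to symmetrize $C$ about its optimal center (cf.\ \Cref{lemma:central_symmetric}) so that the relevant center is pinned down and the $1$-median value becomes a plain sum of distances around a fixed point, which simplifies the extension step.

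The step I expect to be the genuine obstacle --- and the technical core of \cite{MakarychevMR19} --- is making all of the above hold \emph{uniformly over the exponentially many subsets $C$} while keeping the failure probability at the stated $\epsilon^{-O(z)}k^2 e^{-\Omega(\epsilon^{-2}t)}$. The resolution is not a brute-force union bound but a charging scheme: one discretizes $\R^t$ by a net so that ``the optimal center of $G(C)$'' ranges over a controlled family; one groups subsets by the scale of their optimal cost; and one shows that the subsets whose extension could fail by more than $\frac{\epsilon}{k^2}\optcont_k^z(X)$ can be charged, with geometrically decaying total probability, against an optimal $k$-clustering of $X$ --- which is where the $k^2$ factor and the $\optcont_k^z(X)$ term enter. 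Everything else --- the extension theorem's single-set guarantee, the symmetrization normalization, and the $(1\pm\epsilon)^{O(z)}$ triangle-inequality bookkeeping --- is routine relative to this uniformization.
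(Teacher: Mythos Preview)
You are correct that the paper treats this lemma as a black box, citing \cite[Theorem 3.4]{MakarychevMR19} without proof; your first paragraph already matches the paper's ``proof'' exactly. The self-contained sketch you add is not present in the paper---one minor remark is that in this paper the symmetrization via \Cref{lemma:central_symmetric} runs in the \emph{opposite} direction (it is used to deduce \Cref{lemma:preserve_sum} \emph{from} the present lemma, not as an ingredient in establishing it).
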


\begin{proof}[Proof of \Cref{lemma:preserve_sum}]

Let $\tilde{P} \subseteq P$ be a subset that maximizes 
\[(1 - \epsilon)^{3z} \sum_{p \in \tilde{P}} \|p - c\|^z - \sum_{p \in \tilde{P}} \|Gp - Gc\|^z.\]
Then $\tilde{P}$ is a random subset of $P$ that depends on $G$.
Denote $X := P \cup \{2c - p \colon p \in P\}$ and $\tilde{X} := \tilde{P} \cup \{2c - p \colon p \in \tilde{P}\}$. 
Then both $X$ and $\tilde{X}$ are symmetric with center $c$.
By \Cref{lemma:1median_mmr}, with $k'=2k$, with probability $1 - O(\epsilon^{-O(z)} k^2 e^{- \Omega(\epsilon^2 t)})$,
\[
\optcont_1^z(G(\tilde X)) \geq (1 - \epsilon)^{3z} \optcont_1^z(\tilde X) - \frac{\epsilon}{4k^2} \optcont_{2k}^z(X).
\]
By the linearity of $G$, $G(\tilde{X})$ is symmetric with center $Gc$.
Hence, by \Cref{lemma:central_symmetric}, we have
\begin{align*}
    \sum_{x \in \tilde{X}} \|Gx - Gc\|^z \geq (1 - \epsilon)^{3z} \sum_{x \in \tilde X} \|x - c\|^z - \frac{\epsilon}{4k^2} \optcont_{2k}^z(X).
\end{align*}
Note that $\optcont_{2k}^z(X) \leq 2 \optcont_k^z(P)$.
Then 
\begin{align*}
    \sum_{p \in \tilde{P}} \|Gp - Gc\|^z \geq (1 - \epsilon)^{3z} \sum_{p \in \tilde P} \|p - c\|^z - \frac{\epsilon}{2 k^2} \optcont_{k}^z(P).
\end{align*}
By the definition of $\tilde{P}$, we conclude that with probability $1 - O(\epsilon^{-O(z)} k^2 e^{- \Omega(\epsilon^2 t)})$,
\begin{align*}
    \forall P' \subseteq P, \qquad \sum_{p \in P'} \|Gp - Gc\|^z \geq (1 - \epsilon)^{3z} \sum_{p \in P'} \|p - c\|^z - \frac{\epsilon}{2 k^2} \optcont_{k}^z(P).
\end{align*}

\end{proof}
\section{Improved upper bound: Removing the $\log \log n$ term}
\label{sec:forall_centers_relaxed}

We prove \Cref{thm:informal_main_2} in this section. In fact, we prove the following for the more general candidate centers setting, and get \Cref{thm:informal_main_2} by setting $Q=P$.

\begin{theorem}\label{thm:k-medoid_no_loglogn_improved}
    Let $0<\eps<\tfrac{1}{2}$, $z \geq 1$, $\alpha > 2$ and $d,\ddim\in \N$ and a Gaussian JL map $G\in\R^{t\times d}$ with suitable 
    $t = O(z^2 \epsilon^{-2} (\ddim \log(z/\eps) + z \log(z/\eps) + \log k + \log \log \alpha))$, 
    the following holds.
    For every $P, Q\subseteq\R^d$ with $\ddim(P \cup Q)\leq \ddim$, 
    with probability at least $2/3$,
    \begin{enumerate}
        \item $\opt_k^z(G(P), G(Q)) \leq (1 + \epsilon) \opt_k^z(P, Q)$, and
        \item $\forall C \subseteq Q, |C| = k$,
        \[
        \cost_k^z(G(P),G(C)) \geq \min\{\alpha \cdot \opt_k^z(P, Q), (1-\eps)\cost_k^z(P,C)\}.
        \]
    \end{enumerate}
\end{theorem}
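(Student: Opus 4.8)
The plan is to reuse the scheme from the proof of Theorem~\ref{thm:k_medoids_for_all_partitions_c_candidate}, but to replace the union bound over all $O(\log n)$ net levels --- the sole source of the $\log\log n$ term there --- by a level-by-level analysis in which each net's worst-case distortion is controlled only \emph{in expectation}. The easy direction $\opt_k^z(G(P),G(Q))\le(1+\eps)\opt_k^z(P,Q)$ is unchanged: apply Lemma~\ref{lemma:fixed_centers_partiton} to one fixed optimal solution, which needs only target dimension $O(z^2\eps^{-2}\log(z/\eps))$. So fix an optimal discrete solution $C^*=\{c_1^*,\dots,c_k^*\}$ (restricted to data points) with clusters $S_1^*,\dots,S_k^*$, set $r_0:=\opt_k^z(P,Q)^{1/z}$ and $r_\ell:=r_0/2^\ell$, and for each $i$ pick a \emph{threshold level} $\ell_i$ so that $B(c_i^*,r_{\ell_i})$ is ``heavy'', namely $|S_i^*\cap B(c_i^*,r_{\ell_i})|\cdot r_{\ell_i}^z=\Omega(\alpha\cdot\opt_k^z(P,Q))$, whenever cluster $i$ is large enough to admit such a level (the at most $k$ clusters too small for this are handled directly; in particular, if every cluster is small then $n=O(\alpha k)$ and the would-be $\log\log n$ is already $O(\log(\alpha k))$). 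Let $N_\ell$ be an $\eps^3 r_\ell$-net of $\bigcup_j\bigl(S_j^*\cap B(c_j^*,r_\ell)\bigr)$; by Lemma~\ref{lemma:packing} $|N_\ell|\le k\eps^{-O(\ddim)}$, and since $P\cup Q$ is doubling we may pass to slightly enlarged nets that also cover candidate centers of $Q$ near the $c_i^*$.

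\emph{The ``far'' case.} Fix $C\subseteq Q$, $|C|=k$, and let $f\colon P\to C$ realize the nearest-center assignment \emph{in the target space}, so $\cost_k^z(G(P),G(C))=\sum_{p\in P}\|Gp-Gf(p)\|^z$. Call $C$ far if $\dist(c_i^*,C)>10\,r_{\ell_i}$ for some $i$. Then in the original space the points of $B(c_i^*,r_{\ell_i})\cap S_i^*$ already pay $\Omega(|S_i^*\cap B(c_i^*,r_{\ell_i})|\cdot r_{\ell_i}^z)=\Omega(\alpha\opt)$ to reach $C$; applying Lemma~\ref{lem:IN07_expansion} to the doubling ball $B(c_i^*,O(r_{\ell_i}))$ (each such point stays within $O(r_{\ell_i})$ of $Gc_i^*$) and Lemma~\ref{lem:IN07_variant_contraction} to $(P\cup Q)\setminus B(c_i^*,\Omega(r_{\ell_i}))$ (images of distant centers stay at distance $\Omega(\dist(c_i^*,C))$ from $Gc_i^*$), a short triangle-inequality computation gives $\cost_k^z(G(P),G(C))\ge\alpha\cdot\opt_k^z(P,Q)$. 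The probability loss here comes only from Lemmas~\ref{lem:IN07_expansion}--\ref{lem:IN07_variant_contraction} plus a union bound over the $k$ clusters and the $O(\log\log\alpha)$ relevant scales, which is exactly the $\log k+\log\log\alpha$ in $t$, with no $\log\log n$.

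\emph{The ``close'' case.} Now $\dist(c_i^*,C)\le10\,r_{\ell_i}$ for every $i$, and the structural estimate~\eqref{eqn:key_observation} applies: for $p\in S_i^*$, $\|C(p)-c_i^*\|\le O(\max\{\|p-c_i^*\|,r_{\ell_i}\})$. This lets us snap \emph{both} $p$ and its center to the \emph{same} net level --- the level $N_p$ at $p$'s own scale if $\|C(p)-c_i^*\|\ge r_{\ell_i}$, and the level $N_{\ell_i}$ otherwise --- so no net pair ever spans two levels and the union bound over levels is unnecessary. For a single fixed level $\ell$, the maximum pairwise distortion of $N_\ell$ under $G$ is a random variable whose expectation is $1+O(\eps)$ by Lemma~\ref{lem:gaussian_concentration} and $|N_\ell|\le k\eps^{-O(\ddim)}$ (some levels distort badly, but their expected contribution is negligible). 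Combining this with the snapping costs --- $O(\eps)\|p-c_i^*\|$ per point in the $\|C(p)-c_i^*\|\ge r_{\ell_i}$ regime (summing to $O(\eps)\opt$ with no $k$-blowup), and for the complementary regime a further split into a \emph{close range} $\|C(p)-c_i^*\|<r_{\ell_i}/k$ (cost $O(\eps/k)\opt$ per cluster, hence $O(\eps)\opt$ overall) and a \emph{middle range} $\|C(p)-c_i^*\|\in[r_{\ell_i}/k,r_{\ell_i}]$ (handled point-by-point, at cost $\poly(k)e^{-\Omega(\eps^2 t)}$ per point, with a union bound over only the $O(k\log k)$ levels occurring there) --- yields $\cost_k^z(G(P),G(C))\ge(1-O(z\eps))\cost_k^z(P,C)-O(\eps)\opt$. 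This three-way range split is what decouples $\ddim$ from $\log k$, giving $t=O_\eps(\ddim+\log k)$ instead of $O_\eps(\ddim\log k)$.

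\emph{Target-space assignment, and the main obstacle.} All of the above bounded $\sum_{p\in P}\|Gp-GC(p)\|^z$ with $C(p)$ the nearest center in the \emph{original} space, whereas $\cost_k^z(G(P),G(C))=\sum_{p\in P}\|Gp-Gf(p)\|^z$. Replacing $C(p)$ by $f(p)$ throughout, the only step that breaks is~\eqref{eqn:key_observation}, when $\|f(p)-c_i^*\|\gg\max\{\|p-c_i^*\|,r_{\ell_i}\}$; but then $f(p)$ is far from $p$ too, and a dedicated use of Lemmas~\ref{lem:IN07_expansion}--\ref{lem:IN07_variant_contraction} gives $\|Gp-Gf(p)\|\gg\|p-c_i^*\|\ge\Omega(\|p-C(p)\|)$, so in that regime $\|Gp-Gf(p)\|^z$ is lower-bounded directly by a constant multiple of $\|p-C(p)\|^z$. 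Finally rescale $\eps\to\eps/z$ and set $Q=P$ to recover Theorem~\ref{thm:informal_main_2}. I expect the ``close'' case to be the main obstacle: the point is to get a clean $1+O(\eps)$ \emph{expected} distortion per level while stopping the $O(\eps)\opt$ additive errors from accumulating over the $k$ clusters --- such accumulation would force rescaling $\eps\to\eps/k$ and reintroduce a $\log k$ factor multiplying $\ddim$. The per-level expectation bound and the far/close/middle split are the heart of the argument; the far case and the target-space correction are comparatively routine once the randomness lemmas (\ref{lem:gaussian_concentration}, \ref{lem:IN07_expansion}, \ref{lem:IN07_variant_contraction}) are in hand.
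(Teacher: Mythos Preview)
Your proposal correctly captures the paper's overall architecture: threshold levels $\ell_i$, the far/close split on $\dist(c_i^*,C)$, the per-level expected-distortion analysis (the paper's random variables $\beta_\ell,\gamma_\ell$ in event~\ref{item:bound_a} of Lemma~\ref{lemma:good_events}) to avoid the $\log\log n$ union bound, the three-way range split to decouple $\ddim$ from $\log k$, and the $f(p)$-versus-$C(p)$ correction via Lemmas~\ref{lem:IN07_expansion}--\ref{lem:IN07_variant_contraction}. This is indeed the paper's approach.

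There is one real gap, however. In the close case you propose to snap \emph{both} $p$ and its center to the fixed level $N_{\ell_i}$ whenever $\|C(p)-c_i^*\|<r_{\ell_i}$, but nothing forces $p$ itself to lie in the coverage ball of $N_{\ell_i}$: one can have $\|p-c_i^*\|\gg r_{\ell_i}$ while $\|C(p)-c_i^*\|<r_{\ell_i}$ (and the same for $f(p)$ once you pass to the target-space assignment). So the snap-to-a-common-level idea does not handle the close and middle ranges by itself. The paper's fix is different: for those two ranges it does \emph{not} snap $p$ to any net. Instead it invokes Lemma~\ref{lemma:preserve_sum} (event~\ref{item:bound_d} in Lemma~\ref{lemma:good_events}), which says that for each fixed net point $u$ at a buffer level, with probability $1-\eps^{-O(z)}k^2e^{-\Omega(\eps^2 t)}$, \emph{simultaneously for every subset} $P'\subseteq P$ one has $\sum_{p\in P'}\|Gp-Gu\|^z\ge(1-\eps)^{3z}\sum_{p\in P'}\|p-u\|^z-\tfrac{\eps}{k^2}\opt$. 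This ``for all $P'$'' guarantee is what handles the random (depending on $G$) subsets $T_i\cap S_j$ and $R_i\cap S_j$; it is obtained nontrivially from the continuous $1$-median bound of~\cite{MakarychevMR19} via a symmetrization trick. Your ``point-by-point, at cost $\poly(k)e^{-\Omega(\eps^2 t)}$'' phrasing matches the failure probability of this lemma, but you should state the lemma explicitly and note that the per-level expected distortion ($\beta_\ell$) is used \emph{only} in the far-range sub-case (the paper's Case~2.3.2), not in the close or middle ranges.
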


\medskip

Consider an optimal discrete $k$-median of $P$ w.r.t. candidate center set $Q$. 
Denote by $C^*=\{c_1^*,\ldots,c_k^*\}\subseteq Q$ and by $S_1^*,\ldots, S_k^*$ the centers and clusters (respectively) in that solution.
Denote $r_0 := \opt(P, Q)^{1/z}$.
For $\ell \in \mathbb{N}$ and $i\in [k]$, set $r_\ell = r_0/2^\ell$, and $P_\ell^i = S_i^* \cap B(c_i^*,r_\ell)$, i.e., for every cluster, we have a sequence of geometrically decreasing balls.
Additionally, let $N_\ell^i$ be an $\eps^3 r_\ell$-net of $(P \cup Q) \cap B(c_i^*, r_{\ell - \log \epsilon^{-1}})$.
Let $N_\ell := \bigcup_i N_\ell^i$.

For $p \in P \cup Q$, recall $C^*(p)$ is the closest center to $p$ in $C^*$.
Let $j_p \in \mathbb{N}$ be the level satisfying $r_{j_p + 1} \leq \|p - C^*(p)\| \leq r_{j_p}$.
Denote $r_p := r_{j_p}$ for simplicity.
We have the following claim.

\begin{lemma}\label{lemma:sum_of_rp}
    $\sum_{p \in P} r_p^z \leq 2^z \opt_k^z(P, Q)$.
\end{lemma}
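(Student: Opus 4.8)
The plan is to bound $\sum_{p\in P} r_p^z$ by comparing $r_p$ to the actual distance $\|p - C^*(p)\|$ from $p$ to its optimal center, and then summing. By definition of $j_p$, we have $r_{j_p+1}\le \|p-C^*(p)\|\le r_{j_p}=r_p$, and since consecutive radii satisfy $r_{j_p} = 2 r_{j_p+1}$, this immediately gives $r_p \le 2\,\|p - C^*(p)\|$ for every $p$. Raising to the $z$-th power yields $r_p^z \le 2^z \|p-C^*(p)\|^z$.

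Summing this inequality over all $p\in P$ gives
\[
  \sum_{p\in P} r_p^z \;\le\; 2^z \sum_{p\in P} \|p - C^*(p)\|^z \;=\; 2^z \cost_k^z(P, C^*) \;=\; 2^z \opt_k^z(P,Q),
\]
where the last equality uses that $C^*$ is an optimal discrete $k$-clustering of $P$ with centers from $Q$. This completes the proof.

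One point to be careful about is whether $j_p$ is always well-defined, i.e. whether there is always a level $\ell\in\mathbb{N}$ with $r_{\ell+1}\le \|p-C^*(p)\|\le r_\ell$. Since $r_0 = \opt(P,Q)^{1/z} \ge \|p - C^*(p)\|$ for every $p\in P$ (as a single point's cost is at most the total cost — assuming $z\ge 1$ and recalling $r_0 = (\sum_{q} \|q-C^*(q)\|^z)^{1/z}$), the distance $\|p-C^*(p)\|$ lies in $(0, r_0]$, and since $r_\ell = r_0/2^\ell \to 0$, such a level exists (if $p\in C^*$ then $\|p-C^*(p)\|=0$ and we may take $r_p := 0$, for which the bound is trivial). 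So there is essentially no obstacle here; the statement is a routine consequence of the geometric spacing of the radii $r_\ell$ together with optimality of $C^*$. The only mild subtlety worth spelling out in the write-up is the well-definedness of $j_p$ and the degenerate case $p\in C^*$.
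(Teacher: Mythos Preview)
Your proof is correct and is exactly the argument the paper has in mind: the lemma is stated without proof there, but the same inequality $r_p \le 2\|p-C^*(p)\|$ (from $r_{j_p+1}\le \|p-C^*(p)\|$ and $r_{j_p}=2r_{j_p+1}$) is used explicitly in the analogous computation in the proof of \Cref{thm:k_medoid_forall_centers_partitions}. Your handling of well-definedness of $j_p$ and the degenerate case $p\in C^*$ is fine and more careful than the paper itself.
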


For $C \subseteq Q$ and $p \in P$, recall we denote by $C(p)$ the point closest to $p$ in $C$.
We have the following lemma that upper bounds the distance from $C(p)$ to $C^*(p)$ (and also the distance from $C(p)$ to $p$).

\begin{lemma}\label{lemma:bound_dist_p_cp}
    Let $C \subseteq Q$.
    Then for every $i \in [k]$ and $p \in S_i^*$, it holds that $\|C(p) - c_i^*\| \leq 4 \max\{r_p, \|c_i^* - C(c_i^*)\|\}$.
\end{lemma}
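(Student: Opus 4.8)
The plan is to derive the bound from two applications of the triangle inequality, exploiting that $c_i^*$ is a nearest center of $p$ in $C^*$, while $C(c_i^*)$ is an admissible center for $p$ in $C$.

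First I would record the elementary relation between $r_p$ and $\|p-c_i^*\|$: since $p\in S_i^*$ we have $\dist(p,C^*)=\|p-c_i^*\|$, and by the choice of the level $j_p$ (namely $r_{j_p+1}\le \|p-C^*(p)\|\le r_{j_p}=r_p$) this gives $\|p-c_i^*\|\le r_p$. Next, since $C(c_i^*)\in C$, it is a feasible competitor for the nearest point of $p$ in $C$, so
\[
\|p-C(p)\|=\dist(p,C)\le \|p-C(c_i^*)\|\le \|p-c_i^*\|+\|c_i^*-C(c_i^*)\|\le r_p+\|c_i^*-C(c_i^*)\|,
\]
and in particular $\|p-C(p)\|\le 2\max\{r_p,\|c_i^*-C(c_i^*)\|\}$, which is the ``distance from $C(p)$ to $p$'' bound alluded to in the text preceding the lemma. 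Finally, a triangle inequality through $p$ yields
\[
\|C(p)-c_i^*\|\le \|C(p)-p\|+\|p-c_i^*\|\le \bigl(r_p+\|c_i^*-C(c_i^*)\|\bigr)+r_p\le 3\max\{r_p,\|c_i^*-C(c_i^*)\|\}\le 4\max\{r_p,\|c_i^*-C(c_i^*)\|\},
\]
which is exactly the claimed inequality.

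I do not expect a real obstacle here: the statement is a pure metric computation, and the only points to be careful about are the relation $\|p-c_i^*\|\le r_p$ coming from the definition of $j_p$, and the fact that $C(c_i^*)$ lies in $C$ and hence is a valid candidate for $C(p)$. The gap between the constant $3$ obtained above and the stated constant $4$ simply absorbs this slack, and would also accommodate a slightly looser rounding convention for $r_p$ (for instance when $\|p-c_i^*\|$ falls below the smallest radius in the net hierarchy).
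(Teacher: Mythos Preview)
Your proof is correct and follows essentially the same route as the paper's: triangle inequality through $p$, use $C(c_i^*)$ as a competitor for $C(p)$, triangle inequality through $c_i^*$, and then $\|p-c_i^*\|\le r_p$. The paper's four-line chain is identical up to the order of the steps and likewise yields the constant $3$ before relaxing to $4$.
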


\begin{proof}
\begin{align*}
    \|C(p)-c_i^*\|&\leq \|C(p)-p\|+\|p-c_i^*\| && \text{by triangle inequality}\\
    &\leq \|C(c_i^*)-p\|+\|p-c_i^*\| && \text{$C(p)$ is the point closest to $p$ in $C$} \\
    &\leq \|C(c_i^*)-c_i^*\| + \|c_i^*-p\|+\|p-c_i^*\| && \text{by triangle inequality}\\
    &\leq 4 \max\{r_p, \|c_i^* - C(c_i^*)\|\}. 
\end{align*}
\end{proof}

\begin{proof}[Proof of \Cref{thm:k-medoid_no_loglogn_improved}]
The first guarantee is the same as \Cref{thm:k_medoid_forall_centers_partitions}, so we omit its proof and focus on the second guarantee.
For a generic solution $C \subseteq Q, |C| = k$, denote $C = \{c_1, c_2, \dots, c_k\}$.
Denote $f(p) := G^{-1}(GC(Gp))$, i.e., $f(p)$ is a center in $C$ realizing $\dist(Gp,G(C))$.
For $j \in [k]$, denote $S_j := \{p \in P \colon f(p) = c_j\}$ as the cluster induced by $c_j$.

For every $i \in [k]$, define the ``threshold level'' of cluster $i$ as 
\begin{equation}\label{eqn:threshold_level}
    \ell_i := \max\{\ell \colon |P_\ell^i| \cdot r_\ell^z > \alpha \opt(P, Q)\}.
\end{equation}
We also define the $i$-th ``buffer'' as $I_i := [\ell_i - \log(2000 L^2), \ell_i + \log(\alpha k)]$, where $L$ is the (sufficiently large) constant in \Cref{lem:IN07_variant_contraction}.

For $0 \leq \ell \leq m$, denote random variable $\beta_\ell$ to be the minimum real, such that $\forall u, v \in N_\ell, \ \|Gu - Gv\| \geq (1 - \epsilon - \beta_\ell \epsilon) \|u - v\|$.
Denote random variable $\gamma_\ell$ to be the minimum real, such that 
$\forall u \in N_\ell, v \in B(u, \epsilon^3 r_\ell), \ \|Gu - Gv\| \leq \gamma_\ell \epsilon^3 r_\ell$.
For $p \in P \cup Q$, write $\beta_p := \beta_{j_p}$ and $\gamma_p := \gamma_{j_p}$ for simplicity.

In the following lemma, we define our good events and bound their success probability.
The proof is deferred to \Cref{sec:prob_good_events}.
\begin{restatable}{lemma}{lemmagoodevents}
\label{lemma:good_events}
With probability at least $0.99$, the following events happen simultaneously.
\begin{enumerate}[(a)]
    \item \label{item:bound_a}
    $\sum_{p \in P} \beta_p r_p^z \leq e^{-\Omega(\epsilon^2 t)} \cdot \opt(P, Q)$, 
    and $\sum_{p \in P} \gamma_p^z r_p^z \leq 10^z \cdot O(\opt(P, Q))$.
    \item \label{item:bound_c}
    $\forall i \in [k], \forall \ell \in I_i, \ \forall u \in N_{\ell}, v \in B(u, \epsilon^3 r_{\ell}), \ \|Gu - Gv\| \leq 10 \epsilon^3 r_{\ell}$. 
    \item \label{item:bound_d}
    $\forall i \in [k], \forall \ell \in I_i $, every net point $u \in N_\ell$ satisfies that $\forall P' \subseteq P$, 
    \[\sum_{p \in P'} \|Gp - Gu\|^z \geq (1 - \epsilon)^{3z} \sum_{p \in P'} \|p - u\|^z - \frac{\epsilon}{k^2} \opt(P, Q). \]
    \item \label{item:bound_e}
    $\forall i \in [k], \forall y \in B(c_i^*, 40L \cdot r_{\ell_i}), \ \|Gy - Gc_i^*\| \leq 400 L \cdot r_{\ell_i}$.
    \item \label{item:bound_f}
    $\forall i \in [k], \forall y \in (P \cup Q) \setminus B(c_i^*, 2000 L^2 \cdot r_{\ell_i}), \ \|Gy - Gc_i^*\| > 2000 L \cdot r_{\ell_i}$.
    \item \label{item:bound_g}
    For $p \in P$, denote by random variable $\xi_p := \min_{y \colon \|y - p\| > 9L \cdot r_{\ell_i}} \|Gy - Gp\|$.
    Then 
    $\forall i \in [k]$,
    \[\sum_{p \in P_{\ell_i}^i} \xi_p^z > \alpha \opt(P, Q).\]
    \item \label{item:bound_h}
    For $p \in P$, denote $\eta_p := \min_{y \colon \|y - p\| > 9L r_p} \|Gy - Gp\|$.
    Then
    $\forall i \in [k]$,
    \[\sum_{p \in S_i^*} \max\{0, (9 r_p)^z - \eta_p^z\}
    \leq e^{-\Omega(t)} \cdot \sum_{p \in S_i^*} r_p^z.\]
\end{enumerate}
\end{restatable}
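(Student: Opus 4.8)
The plan is to verify the seven listed events one at a time and close with a union bound; recall the lemma only needs constant ($0.99$) success probability, so each event may be granted a failure probability of roughly $0.01/8$. The events split into two groups according to the tool used. The ``pointwise'' events \ref{item:bound_c}, \ref{item:bound_d}, \ref{item:bound_e}, \ref{item:bound_f} will be obtained by applying one of the concentration lemmas to a single doubling set and union-bounding over the relevant net points; the ``averaged'' events \ref{item:bound_a}, \ref{item:bound_g}, \ref{item:bound_h} will be obtained by bounding an expectation and then applying Markov's inequality --- this second mechanism is what replaces a union bound over all $\Theta(\log n)$ levels and is the source of the $\log\log n$ savings.

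For the pointwise group I would first record two counting facts: by the packing property (\Cref{lemma:packing}) each $N_\ell^i$ has size $\epsilon^{-O(\ddim)}$, so $|N_\ell|\le k\,\epsilon^{-O(\ddim)}$; and the $k$ buffers $I_i$ together span at most $k\cdot O(\log(\alpha k))$ distinct levels, so the total number of net points at buffer levels is $k^2\,\epsilon^{-O(\ddim)}\log(\alpha k)$. Event \ref{item:bound_c}: for fixed $\ell\in I_i$ and $u\in N_\ell$, the set $\{(v-u)/(\epsilon^3 r_\ell):v\in (P\cup Q)\cap B(u,\epsilon^3 r_\ell)\}$ lies in the unit ball and has doubling dimension at most $\ddim$, so \Cref{lem:IN07_expansion} with $D=10$ fails with probability $e^{-\Omega(t)}$; union-bounding over the net points above still leaves a large margin since $\epsilon^2 t=\Omega(\ddim\log(1/\epsilon)+\log k+\log\log\alpha)$. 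Event \ref{item:bound_d} is \Cref{lemma:preserve_sum} applied with $c=u$ for each such net point $u$, using $\optcont_k^z(P)\le\opt(P,Q)$ to rewrite the additive term, again union-bounded over the same set of net points. Events \ref{item:bound_e} and \ref{item:bound_f} are the analogous applications centered at $c_i^*$: \Cref{lem:IN07_expansion} with $D=10$ on $\{(y-c_i^*)/(40L r_{\ell_i}):y\in (P\cup Q)\cap B(c_i^*,40L r_{\ell_i})\}$, and \Cref{lem:IN07_variant_contraction} on $\{(y-c_i^*)/(2000L^2 r_{\ell_i}):y\in (P\cup Q)\setminus B(c_i^*,2000L^2 r_{\ell_i})\}$, each union-bounded over just $i\in[k]$.

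For the averaged group I would work levelwise. For event \ref{item:bound_a}: fixing a level $\ell$, \Cref{lem:gaussian_concentration} gives $\Pr[\beta_\ell>s]\le |N_\ell|^2 e^{-\epsilon^2(1+s)^2 t/8}\le |N_\ell|^2 e^{-\epsilon^2 t/8}\,e^{-\epsilon^2 s^2 t/8}$, so integrating in $s$ yields $\mathbb{E}[\beta_\ell]\le |N_\ell|^2 e^{-\epsilon^2 t/8}\le e^{-\Omega(\epsilon^2 t)}$ (using $\epsilon^2 t=\Omega(\ddim\log(1/\epsilon)+\log k)$); and \Cref{lem:IN07_expansion} gives $\Pr[\gamma_\ell>D]\le |N_\ell|e^{-A_2 tD^2}$ for $D\ge10$, so $\mathbb{E}[\gamma_\ell^z]\le 10^z+1$. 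Since $\beta_p=\beta_{j_p}$ and $\gamma_p=\gamma_{j_p}$, linearity of expectation together with $\sum_{p\in P}r_p^z\le 2^z\opt(P,Q)$ (\Cref{lemma:sum_of_rp}) gives $\mathbb{E}\big[\sum_p\beta_p r_p^z\big]\le e^{-\Omega(\epsilon^2 t)}\opt(P,Q)$ and $\mathbb{E}\big[\sum_p\gamma_p^z r_p^z\big]=O(10^z)\opt(P,Q)$, and Markov's inequality then gives event \ref{item:bound_a}. For events \ref{item:bound_g} and \ref{item:bound_h}: for a fixed $p$, \Cref{lem:IN07_variant_contraction} applied to $\{(y-p)/(9L r_{\ell_i}):y\in P\cup Q,\ \|y-p\|>9L r_{\ell_i}\}$ (respectively with $r_p$ in place of $r_{\ell_i}$) shows $\xi_p\ge 9 r_{\ell_i}$ (resp.\ $\eta_p\ge 9 r_p$) except with probability $e^{-A_2 t}$, so the expected truncated deficit is $\mathbb{E}[\max\{0,(9r_{\ell_i})^z-\xi_p^z\}]\le (9r_{\ell_i})^z e^{-A_2 t}$; summing over $p$ and applying Markov (with a union bound over $i$) yields event \ref{item:bound_h} at once, and for event \ref{item:bound_g} one writes $\sum_{p\in P_{\ell_i}^i}\xi_p^z\ge |P_{\ell_i}^i|(9r_{\ell_i})^z-Z_i$ with $Z_i$ the summed deficit. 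The key quantitative point is that $|P_{\ell_i}^i|\,r_{\ell_i}^z=O(2^z\alpha\,\opt(P,Q))$ rather than the naive $O(n\,\opt(P,Q))$: by the maximality of $\ell_i$ one has $|P_{\ell_i+1}^i|\,r_{\ell_i+1}^z\le\alpha\opt(P,Q)$, while the points of $S_i^*$ in the annulus $B(c_i^*,r_{\ell_i})\setminus B(c_i^*,r_{\ell_i+1})$ satisfy $\|p-c_i^*\|^z\ge(r_{\ell_i}/2)^z$ and thus number at most $2^z\opt(P,Q)/r_{\ell_i}^z$ because $\sum_{p\in S_i^*}\|p-c_i^*\|^z\le\opt(P,Q)$. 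Hence $\mathbb{E}[Z_i]=O(2^z\alpha\,\opt(P,Q))\,e^{-A_2 t}$, Markov gives $Z_i\le 8\alpha\,\opt(P,Q)$ with high probability, the $\alpha$ cancels, and $\sum_{p\in P_{\ell_i}^i}\xi_p^z>\alpha\opt(P,Q)$ follows since $|P_{\ell_i}^i|(9r_{\ell_i})^z>9^z\alpha\opt(P,Q)$.

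A last union bound over the seven events (eight, counting the two halves of \ref{item:bound_a}) gives total failure probability at most $0.01$, since $\epsilon^2 t=\Omega\!\big(z^2(\ddim\log(z/\epsilon)+z\log(z/\epsilon)+\log k+\log\log\alpha)\big)$ dominates the logarithm of every prefactor encountered. I expect event \ref{item:bound_a} to be the main obstacle --- and the conceptual heart of the proof --- because one must allow individual net levels to be badly distorted and control only the $r_p^z$-weighted average distortion, which is exactly what eliminates the $\log\log n$ term; a secondary subtlety is the annulus-counting bound on $|P_{\ell_i}^i|\,r_{\ell_i}^z$ that underlies event \ref{item:bound_g}.
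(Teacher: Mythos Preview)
Your proposal is correct and follows essentially the same approach as the paper: each event is handled with the same concentration lemma (\Cref{lem:gaussian_concentration}, \Cref{lem:IN07_expansion}, \Cref{lem:IN07_variant_contraction}, or \Cref{lemma:preserve_sum}), the pointwise events are closed by a union bound over buffer-level net points, and the averaged events are closed by Markov's inequality after a levelwise expectation bound.

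The only divergence is in event~\ref{item:bound_g}. You go out of your way to bound $|P_{\ell_i}^i|\,r_{\ell_i}^z\le O(2^z\alpha)\opt(P,Q)$ via an annulus-counting argument, in order to control the deficit $Z_i$ in absolute terms. The paper takes a shorter route: it applies Markov to bound $Z_i$ as a \emph{fraction} of $9^z|P_{\ell_i}^i|\,r_{\ell_i}^z$ (needing only $k\cdot 9^z e^{-A_2 t}\ll 1$), so that $\sum_{p\in P_{\ell_i}^i}\xi_p^z\ge 8|P_{\ell_i}^i|\,r_{\ell_i}^z>\alpha\opt(P,Q)$ follows directly from the definition of $\ell_i$ without any upper bound on $|P_{\ell_i}^i|\,r_{\ell_i}^z$. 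Your annulus bound is correct but unnecessary.
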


The proof proceeds by a careful case analysis.

\paragraph{Case 1, one cluster with no cover: $\max_{1 \leq i \leq k}\{\|c_i^* - C(c_i^*)\| - 10L \cdot r_{\ell_i}\} > 0$.}
Then there exists $i \in [k]$, such that $\|c_i^* - C(c_i^*)\| > 10L \cdot r_{\ell_i}$.
Intuitively, this means all points in $C$ are far away from $c_i^*$.
Write \begin{align}
    \cost(G(P), G(C)) \geq \cost(G(P_{\ell_i}^i), G(C))
    = \sum_{p \in P_{\ell_i}^i} \|Gp - Gf(p)\|^z. \label{eqn:cost_GP_GC_lb_new}
\end{align}

Note that for every $p \in P_{\ell_i}^i$, 
\begin{align*}
    \|p - f(p)\| &\geq 
    \|p - C(p)\| \\ 
    &\geq \|C(p) - c_i^*\| - \|p - c_i^*\| \\
    &\geq \|c_i^* - C(c_i^*)\| - \|p - c_i^*\| \\
    &> 10L \cdot r_{\ell_i} - r_p \\
    &\geq 9L \cdot r_{\ell_i}.
\end{align*}

Therefore, $\|Gp - Gf(p)\| \geq \xi_p$.
Combining with \eqref{eqn:cost_GP_GC_lb_new} yields 
\begin{align*}
    \cost(G(P), G(C)) 
    \geq \sum_{p \in P_{\ell_i}^i} \|Gp - Gf(p)\|^z
    \geq \sum_{p \in P_{\ell_i}^i} \xi_p^z
    > \alpha \opt(P, Q).
\end{align*}
where the last inequality follows from event \ref{item:bound_g}.

\paragraph{Case 2, $\max_{1 \leq i \leq k}\{\|c_i^* - C(c_i^*)\| - 10L \cdot r_{\ell_i}\} \leq 0$.}
Then for every $i \in [k]$, $\|c_i^* - C(c_i^*)\| \leq 10L \cdot r_{\ell_i}$, which intuitively means every center in $C^*$ has a nearby neighbor in $C$.

\paragraph{Comparing ``fake'' centers to optimal centers.}
Let $i\in [k]$. For every $p\in S^*_i$, we consider the distance of $p$'s ``fake'' center $f(p)$ (recall, $Gf(p)$ realizes $\dist(Gp,G(C))$) from $p$'s optimal center $c^*_i$.
There are three ranges we consider for $\|f(p)-c^*_i\|$.

Define $R_i := \{p \in S_i^* \colon r_{\ell_i} / (\alpha k) \leq \|f(p) - c_i^*\| \leq 2000 L^2 \cdot r_{\ell_i}\}$, and denote $R := \bigcup_{i = 1}^k R_i$ (called ``the middle range''). 
Moreover, define $T_i:=\{p \in S_i^* \colon \|f(p) - c_i^*\| \leq r_{\ell_i} / (\alpha k)\}$, and denote $T := \bigcup_{i = 1}^k T_i$ (called ``the close range'').

\paragraph{Case 2.1, the middle range $p \in R$.}
Let us first lower bound $\|Gp - Gf(p)\|$ for $p \in R$.
Assume $C^*(p) = c_i^*$ and $f(p) = c_j$, where $i, j \in [k]$.
Since $p \in R_i$, we can assume $r_{\ell + 1} < \|c_j - c_i^*\| \leq r_\ell$ for some level $\ell \in I_i$.
Let $u_{i, j}$ be the net point in $N_\ell$ closest to $c_j$.
Then
\begin{align*}
    \|Gp - Gf(p)\|^z 
    &\geq (1 - z \epsilon)\|Gp - Gu_{i, j}\|^z - \epsilon^{-z} \|Gc_j - Gu_{i, j}\|^z && \text{by \Cref{lemma:triangle}}\\
    &\geq (1 - z \epsilon)\|Gp - Gu_{i, j}\|^z - \epsilon^{-z} (10 \epsilon^3 r_\ell)^z && \text{by event \ref{item:bound_c}} \\
    & \geq (1 - z \epsilon)\|Gp - Gu_{i, j}\|^z - O(\epsilon)^{2z} \|c_j - c_i^*\|^z \\
    &\geq (1 - z \epsilon)\|Gp - Gu_{i, j}\|^z - O(\epsilon)^{2z} \|p - c_j\|^z - O(\epsilon)^{2z} \|p - c_i^*\|^z &&\text{by \Cref{lemma:triangle}}\\
    &= (1 - z \epsilon)\|Gp - Gu_{i, j}\|^z - O(\epsilon)^{2z} \|p - f(p)\|^z - O(\epsilon)^{2z} \|p - C^*(p)\|^z.
\end{align*}
Summing over $p \in R$, we have 
\begin{align}
    &\qquad \sum_{p \in R} \|Gp - Gf(p)\|^z \notag \\
    &= \sum_{i = 1}^k \sum_{j = 1}^k \sum_{p \in R_i \cap S_j} \|Gp - Gc_j\|^z \notag \\
    &\geq \sum_{i = 1}^k \sum_{j = 1}^k \sum_{p \in R_i \cap S_j} \Big((1 - z \epsilon)\|Gp - Gu_{i, j}\|^z - O(\epsilon)^{2z} \|p - f(p)\|^z - O(\epsilon)^{2z} \|p - C^*(p)\|^z\Big) \notag \\
    &\geq \sum_{i = 1}^k \sum_{j = 1}^k \sum_{p \in R_i \cap S_j} (1 - z\epsilon) \|Gp - Gu_{i, j}\|^z - O(\epsilon)^{2z} \sum_{p \in R} \|p - f(p)\|^z - O(\epsilon)^{2z} \opt(P, Q) \notag \\
    &\text{Applying event \ref{item:bound_d} to net point $u_{i, j}$ and subset $R_i \cap S_j$, we have } \notag \\
    &\geq \sum_{i = 1}^k \sum_{j = 1}^k \left(
    (1 - O(z\epsilon)) \sum_{p \in R_i \cap S_j} \|p - u_{i, j}\|^z - \frac{\epsilon}{k^2} \opt(P, Q)
    \right) - O(\epsilon) \sum_{p \in R} \|p - f(p)\|^z - O(\epsilon) \opt(P, Q) \notag \\
    & \geq (1 - O(z \epsilon)) \sum_{p \in R} \|p - f(p)\|^z - O(\epsilon) \opt(P, Q) \notag\\
    & \geq (1 - O(z \epsilon)) \sum_{p \in R} \|p - C(p)\|^z - O(\epsilon) \opt(P, Q). \label{eqn:p_in_R}
\end{align}

\paragraph{Case 2.2, the close range $p\in T$.}
This is somewhat of a special case of Case 2.1.
Assume $C^*(p) = c_i^*$ and $f(p) = c_j$, where $i, j \in [k]$.
Since $p \in R_i$, we have $\|c_j - c_i^*\| \leq r_{\ell}$ for $\ell=\ell_i+\log(\alpha k)$.
Let $u_{i, j}$ be the net point in $N_\ell$ closest to $c_j$.
we have,
\begin{align*}
    \|Gp - Gf(p)\|^z &\geq (1 - z \epsilon) \|Gp - Gu_{i, j}\|^z - \epsilon^{-z} \|Gc_j - Gu_{i, j}\|^z \\
    &\geq (1 - z \epsilon) \|Gp - Gu_{i, j}\|^z - O(\epsilon)^{2z} r_\ell^z.
\end{align*}
If $p\notin B(c^*_i,r_{\ell_i+1})$, then
\[
r_\ell\leq \tfrac{1}{2k} \|p-c^*_i\|\leq \tfrac{1}{2k}(\|p-c_j\|+\|c_j-c_i^*\|)\leq \tfrac{1}{2k} (\|p-c_j\|+r_\ell).
\]
Rearranging, we obtain $r_\ell\leq \|p-c_j\|$.
Summing over $p \in T$, we have 
\begin{align}
    &\qquad \sum_{p \in T} \|Gp - Gf(p)\|^z \notag \\
    &= \sum_{i = 1}^k \sum_{j = 1}^k \sum_{p \in T_i \cap S_j} \|Gp - Gc_j\|^z \notag \\
    &\geq \sum_{i = 1}^k \sum_{j = 1}^k \sum_{p \in T_i \cap S_j} \Big((1 - z\epsilon) \|Gp - Gu_{i, j}\|^z 
    - O(\epsilon)^{2z} \|p - c_j\|^z \Big) 
    - O(\eps)^{2z} \Big|P\cap B(c^*_i,r_{\ell_{i}+1})\Big|\cdot r_\ell^z   \notag \\
    &\geq \sum_{i = 1}^k \sum_{j = 1}^k \sum_{p \in T_i \cap S_j} \Big((1 - z \epsilon) \|Gp - Gu_{i, j}\|^z - O(\epsilon)^{2z} \|p - c_j\|^z \Big) - O(\eps\opt) && \text{by choice of $\ell_i$}   \notag \\
    &\text{Applying event \ref{item:bound_d} to net point $u_{i, j}$ and subset $T_i \cap S_j$, we have } \notag \\
    &\geq \sum_{i = 1}^k \sum_{j = 1}^k \sum_{p \in T_i \cap S_j} \Big(
        (1-O(z\eps))\|p - u_{i, j}\|^z - \frac{\epsilon}{k^2} \opt - O(\epsilon)^{2z} \|p - c_j\|^z
    \Big) - O(\eps\opt)  \notag \\
    &\geq \sum_{i = 1}^k \sum_{j = 1}^k \sum_{p \in T_i \cap S_j} \Big((1-O(z \eps))\|p - c_j\|^z - \frac{\epsilon}{k^2} \opt\Big) - O(\eps\opt)  \notag \\
    & \geq (1 - O(z \epsilon)) \sum_{p \in T} \|p - C(p)\|^z - O(\epsilon) \opt(P, Q). \label{eqn:p_in_T}
\end{align}

\paragraph{Case 2.3, the far range $p \notin R\cup T$.}
We now consider points $p \in S^*_i\setminus (R\cup T)$, i.e., $\|f(p)-c^*_i\|\geq 2000L^2 r_{\ell_i}$.
Suppose $f(p)=c_j$.
By \ref{item:bound_f}, $\|Gc_j-Gc^*_i\|\geq 2000L r_{\ell_i}$.
\begin{claim}
    In this case, $r_p\geq 10Lr_{\ell_i}$.    
\end{claim}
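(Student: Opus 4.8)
The plan is to prove the claim by contradiction: I would assume $r_p < 10L \cdot r_{\ell_i}$ and derive that $\|f(p)-c_i^*\| \leq 2000 L^2 \cdot r_{\ell_i}$, which contradicts the defining property of this range, namely $p \notin R_i \cup T_i$ and hence $\|f(p)-c_i^*\| > 2000 L^2 \cdot r_{\ell_i}$.

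First I would place $p$ and its original-space center $C(p)$ close to the optimal center $c_i^*$. Since $p \in S_i^*$ we have $\|p-c_i^*\| = \|p - C^*(p)\| \leq r_p < 10L \cdot r_{\ell_i}$ by the definition of $j_p$ (hence $r_p$), and since we are in Case 2 we have $\|c_i^* - C(c_i^*)\| \leq 10L \cdot r_{\ell_i}$; plugging these two bounds into \Cref{lemma:bound_dist_p_cp} gives $\|C(p)-c_i^*\| \leq 4\max\{r_p, 10L\cdot r_{\ell_i}\} = 40L \cdot r_{\ell_i}$. Thus both $p$ and $C(p)$ lie in $B(c_i^*, 40L \cdot r_{\ell_i})$, so event~\ref{item:bound_e} yields $\|Gp - Gc_i^*\| \leq 400L \cdot r_{\ell_i}$ and $\|GC(p) - Gc_i^*\| \leq 400L \cdot r_{\ell_i}$.

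Then I would transfer the bound to the ``fake'' center $f(p)$ and close the argument. Since $Gf(p)$ realizes $\dist(Gp, G(C))$ and $C(p) \in C$, we get $\|Gp - Gf(p)\| \leq \|Gp - GC(p)\| \leq \|Gp - Gc_i^*\| + \|GC(p) - Gc_i^*\| \leq 800L \cdot r_{\ell_i}$, whence $\|Gf(p) - Gc_i^*\| \leq \|Gf(p) - Gp\| + \|Gp - Gc_i^*\| \leq 1200L \cdot r_{\ell_i} < 2000L \cdot r_{\ell_i}$. Applying the contrapositive of event~\ref{item:bound_f} (legitimate since $f(p) \in Q \subseteq P \cup Q$) forces $f(p) \in B(c_i^*, 2000 L^2 \cdot r_{\ell_i})$, i.e.\ $\|f(p)-c_i^*\| \leq 2000 L^2 \cdot r_{\ell_i}$, contradicting $p \notin R_i \cup T_i$. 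I do not expect any real obstacle here; the only thing to watch is constant bookkeeping so that the chain of triangle inequalities lands strictly below the $2000L\cdot r_{\ell_i}$ threshold of event~\ref{item:bound_f}, which the generous constants built into events~\ref{item:bound_e}--\ref{item:bound_f} and into the buffer $I_i$ are precisely calibrated to guarantee.
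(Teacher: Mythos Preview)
Your proof is correct and follows essentially the same route as the paper: assume $r_p < 10L\,r_{\ell_i}$, use \Cref{lemma:bound_dist_p_cp} together with the Case~2 hypothesis to place $p$ and $C(p)$ in $B(c_i^*, 40L\,r_{\ell_i})$, apply event~\ref{item:bound_e} to bound $\|Gp-Gc_i^*\|$ and $\|GC(p)-Gc_i^*\|$, use that $f(p)$ minimizes $\dist(Gp,G(C))$ to bound $\|Gf(p)-Gc_i^*\|$, and finish via event~\ref{item:bound_f}. The only cosmetic difference is that the paper first records $\|Gc_j-Gc_i^*\|\geq 2000L\,r_{\ell_i}$ from event~\ref{item:bound_f} and then derives the contradiction directly in the target space, whereas you invoke the contrapositive of event~\ref{item:bound_f} at the end to return to the original space; these are the same argument (and your constant bookkeeping, arriving at $1200L\,r_{\ell_i}$, is in fact slightly cleaner).
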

\begin{proof}
    Assume by contradiction that $r_p< 10Lr_{\ell_i}$.
    By \Cref{lemma:bound_dist_p_cp}, $\|C(p)-c_i^*\|\leq 4\max\{r_p,\|c^*_i-C(c^*_i)\|\} \leq 40 Lr_{\ell_i}$.
    Thus by \ref{item:bound_e}, $Gp,GC(p)\in B(Gc^*_i,400Lr_{\ell_i})$.
    Therefore,
    \[
    \|Gc_j-Gc^*_i\|\leq \|Gc_j-Gp\| + \|Gp-Gc^*_i\| \leq \|GC(p)-Gp\|+ \|Gp-Gc^*_i\|\leq 800Lr_{\ell_i},
    \]
    contradiction.
\end{proof}
Therefore, by \Cref{lemma:bound_dist_p_cp}, $\|C(p)-c_i^*\|\leq 4r_p$ and hence
\begin{equation}\label{eqn:p_Cp_leq_rp}
    \|p - C(p)\| \leq \|p - c_i^*\| + \|C(p) - c_i^*\| \leq 5 r_p.
\end{equation}
On a high level, as can be seen by the claim, we have that both $f(p)$ and $p$ are far from $c^*_i$. 
We split into cases depending which of $p$ or $f(p)$ is farther from $c^*_i$ (up to a constant), as follows.

\paragraph{Case 2.3.1, $p \in S_i^* \setminus (R\cup T)$, and $\|f(p) - c_i^*\| > 10 L r_p$.}
By triangle inequality,  
\[\|p - f(p)\| \geq \|f(p) - c_i^*\| - \|p - c_i^*\| \geq 9L \cdot r_p.\]
By the definition of $\eta_p$, we have $\|Gp - Gf(p)\| \geq \eta_p$.
Therefore,
\begin{align}
    &\qquad \sum_{\substack{p \in S_i^* \setminus (R\cup T) \\ \|f(p) - c_i^*\| > 10L r_p }} \|Gp - Gf(p)\|^z \notag\\
    & \geq \sum_{\substack{p \in S_i^* \setminus (R\cup T) \\ \|f(p) - c_i^*\| > 10L r_p }} \eta_p^z && \text{since $\|p - f(p)\| \geq 9L r_p$} \notag \\
    & \geq \sum_{\substack{p \in S_i^* \setminus (R\cup T) \\ \|f(p) - c_i^*\| > 10L r_p}} (9 r_p)^z - e^{-\Omega(t)} \cdot \sum_{p \in S_i^*} r_p^z
    && \text{by event \ref{item:bound_h}} \notag \\
    & \geq \sum_{\substack{p \in S_i^* \setminus (R\cup T) \\ \|f(p) - c_i^*\| > 10L r_p}} (5 r_p)^z - e^{-\Omega(t)} \cdot \sum_{p \in S_i^*} r_p^z \notag \\
    & \geq \sum_{\substack{p \in S_i^* \setminus (R\cup T) \\ \|f(p) - c_i^*\| > 10L r_p}} \|p - C(p)\|^z - e^{-\Omega(t)} \cdot \sum_{p \in S_i^*} r_p^z
    && \text{by \eqref{eqn:p_Cp_leq_rp}} %
    \label{eqn:low_level_chrge_ro_rp}
\end{align}

\paragraph{Case 2.3.2, $p \in S_i^* \setminus (R\cup T)$, and $\|f(p) - c_i^*\| \leq 10 L r_p$.}

Denote $u_p$ and $u_{f(p)}$ to be the net points in $N_{j_p}$ that are closest to $p$ and $f(p)$, respectively.
Then
\begin{align*}
    &\quad \|Gp - Gf(p)\|^z  \\
    &\geq (1 - 2z \epsilon) \|Gu_p - Gu_{f(p)}\|^z - \epsilon^{-z} \|Gp - Gu_p\|^z - \epsilon^{-z} \|Gf(p) - Gu_{f(p)}\|^z &&\text{by triangle inequality}\\
    & \geq (1 - 2 z \epsilon)(1 - \epsilon - \beta_p \epsilon)^z \|u_p - u_{f(p)}\|^z
    - 2 \epsilon^{-z} (\gamma_p \epsilon^3 r_p)^z 
    &&\text{by definitions of $\beta_p, \gamma_p$}\\
    & \geq (1 - 3z \epsilon - \beta_p z\epsilon) \|u_p - u_{f(p)}\|^z - O(\epsilon)^{2z} \gamma_p^z r_p^z \\
    & \geq (1 - 3z \epsilon - \beta_p z\epsilon) \|p - f(p)\|^z - O(\epsilon)^{2z} r_p^z - O(\epsilon)^{2z} \gamma_p^z r_p^z \\
    &\text{Since $\|p - f(p)\| \leq \|p - c_i^*\| + \|f(p) - c_i^*\| \leq r_p + 10L r_p \leq 20L r_p$, we have} \\
    & \geq (1 - 3z \epsilon) \|p - f(p)\|^z
    - \beta_p z \epsilon \cdot (20L)^z r_p^z 
     - O(\epsilon)^{2z} r_p^z - O(\epsilon)^{2z} \gamma_p^z r_p^z.
\end{align*}
Therefore,
\begin{align}
    & \qquad \sum_{\substack{p \in S_i^* \setminus (R\cup T) \\ \|f(p) - c_i^*\| \leq 10 L r_p}} \|Gp - Gf(p)\|^z \notag \\
    & \geq (1 - 3z \epsilon) \sum_{\substack{p \in S_i^* \setminus (R\cup T) \\ \|f(p) - c_i^*\| \leq 10 L r_p}} \|p - f(p)\|^z
    - z \epsilon (20 L)^z \sum_{p \in S_i^*} \beta_p r_p^z 
    - O(\epsilon)^{2z} \sum_{p \in S_i^*} (1 + \gamma_p^z) r_p^z.
    \label{eqn:high_level_new}
\end{align}

\paragraph{Wrap Up.}
Combining \eqref{eqn:high_level_new} and \eqref{eqn:low_level_chrge_ro_rp}, we have 
\begin{align*}
    \sum_{p \in S_i^* \setminus (R\cup T)} \|Gp - Gf(p)\|^z
    \geq (1 - 3z \epsilon) \sum_{p \in S_i^* \setminus (R\cup T)} \|p - C(p)\|^z
    - z \epsilon (20 L)^z \sum_{p \in S_i^*} \beta_p r_p^z 
    - O(\epsilon)^{2z} \sum_{p \in S_i^*} (1 + \gamma_p^z) r_p^z.
\end{align*}
Summing over $i \in [k]$ yields 
\begin{align}
    &\qquad \sum_{p \in P\setminus (R\cup T)} \|Gp - Gf(p)\|^z \notag \\
    &\geq (1 - 3z \epsilon) \sum_{p \in P \setminus (R\cup T)} \|p - C(p)\|^z
    - z \epsilon (20L)^z \sum_{p \in P} \beta_p r_p^z
    - O(\epsilon)^{2z} \sum_{p \in P} (1 + \gamma_p^z) r_p^z \notag \\
    &\geq (1 - 3z \epsilon) \sum_{p \in P \setminus (R\cup T)} \|p - C(p)\|^z
    - z \epsilon (20L)^z e^{-\Omega(\epsilon^2 t)} \cdot \opt(P, Q)
    - O(\epsilon)^{2z} \cdot \opt(P, Q) \notag \\
    & \geq (1 - 3z \epsilon) \sum_{p \in P \setminus (R\cup T)} \|p - C(p)\|^z
    - O(\epsilon) \cdot \opt(P, Q), \label{eqn:p_in_P_minus_R}
\end{align}
where the second last inequality follows from event \ref{item:bound_a} and \Cref{lemma:sum_of_rp}.
Finally, we combine \eqref{eqn:p_in_R},\eqref{eqn:p_in_T} and \eqref{eqn:p_in_P_minus_R} and obtain 
\begin{align*}
    \cost(G(P), G(C)) \geq (1 - O(z \epsilon)) \cost(P, C) - O(\epsilon) \cdot \opt(P, Q)
    \geq (1 - O(z \epsilon)) \cost(P, C).
\end{align*}
Rescaling $\epsilon \to \epsilon/z$ concludes the proof.
\end{proof}

\subsection{Proof of \Cref{lemma:good_events}}
\label{sec:prob_good_events}
\lemmagoodevents*

\begin{proof}
It suffices to show that each of the events happens with probability at least $0.999$, then a union bound concludes the proof.
\paragraph{Event \ref{item:bound_a}.}
We first show that $\forall \ell \in [m]$, $\E(\beta_\ell) = e^{-\Omega(\epsilon^2 t)}$.
Recall we define $\beta_\ell$ to be the minimum real, such that $\forall u, v \in N_\ell, \ \|Gu - Gv\| \geq (1 - \epsilon - \beta_\ell \epsilon) \|u - v\|$.
Then for every $h \geq 0$, we have 
\begin{align*}
    \Pr(\beta_\ell > h) &\leq 
    \Pr\Big(\exists u, v \in N_\ell, \ \|Gu - Gv\| < (1 - (h+1) \epsilon) \|u - v\|\Big) \\
    & \leq \epsilon^{-O(\ddim)} e^{-(h+1)^2 \epsilon^2 t / 8} \\
    & \leq e^{-a \cdot \epsilon^2 t (h + 1)^2}.
\end{align*}
The last inequality holds since the target dimension $t = \Omega(\epsilon^{-2} \ddim \log \epsilon^{-1})$, and thus $a$ is a constant.
Therefore, 
\begin{align*}
    \E(\beta_\ell) &= \int_0^{+ \infty} \Pr(\beta_\ell > h) \ dh 
    \leq \int_0^{+\infty} e^{-a \epsilon^2 t (h +1)^2 } \ dh 
    = \int_1^{+\infty} e^{-a\epsilon^2 t h^2} \ dh 
    \leq \int_{1}^{+\infty} h e^{-a \epsilon^2 t h^2} \ dh 
    = \frac{1}{2 a \epsilon^2 t} e^{-a \epsilon^2 t}.
\end{align*}
Hence, 
\begin{align*}
    \E\left(\sum_{p \in P} \beta_p r_p^z \right)
    = \sum_{p \in P} r_p^z \cdot \E(\beta_p) = e^{-\Omega(\epsilon^2 t)} \cdot \sum_{p \in P} r_p^z 
    = e^{-\Omega(\epsilon^2 t)} 2^z \cdot \opt(P, Q)
    \leq e^{-\Omega(\epsilon^2 t)} \cdot \opt(P, Q).
\end{align*}
By Markov's inequality, with probability at least $0.999$, $\sum_{p \in P} \beta_p r_p^z = e^{-\Omega(\epsilon^2 t)} \cdot \opt(P, Q)$.

We bound $\sum_{p \in P} \gamma_p^z r_p^z$ by the similar argument.
Recall we define $\gamma_\ell$ to be the minimum real, such that 
$\forall u \in N_\ell, v \in B(u, \epsilon^3 r_\ell), \ \|Gu - Gv\| \leq \gamma_\ell \epsilon^3 r_\ell$.
Then for every $h > 10$, we have 
\begin{align*}
    \Pr(\gamma_\ell > h) 
    &\leq \Pr\Big(\exists u \in N_\ell, v \in B(u, \epsilon^3 r_\ell), \ \|Gu - Gv\| > h \epsilon^3 r_\ell\Big) \\
    & \leq \epsilon^{-O(\ddim)} e^{-A_3 h^2 t} &&\text{by \Cref{lem:IN07_expansion}} \\
    & \leq e^{-bt h^2}, 
\end{align*}
where $b$ is a constant.
Therefore, 
\begin{align*}
    \E(\gamma_\ell^z) &= \int_0^{\infty} \Pr(\gamma_\ell^z > h) \ dh 
    \leq \int_0^{10^z} dh + \int_{10^z}^{+ \infty} e^{-bt h^{2/z}} \ dh 
    = 10^z + \int_{10}^{+\infty}z h^{z-1} e^{-bth^2} \ dh \\
    & = 10^z + z \int_{10}^{+\infty} \exp(z \ln h - b t h^2) \ dh
    \leq 10^z + \int_{10}^{+\infty} e^{-b' t h^2} \ dh \\
    & \leq 10^z + \int_{10}^{+\infty} \frac{h}{10} e^{-b' t h^2} \ dh 
    = 10^z + \frac{1}{20 b' t} e^{-100b' t} \\
    & = O(10^z).
\end{align*}
Hence, 
\begin{align*}
    \E\left(\sum_{p \in P} \gamma_p^z r_p^z \right)
    = \sum_{p \in P} r_p^z \cdot \E(\gamma_p^z) = O(10^z) \cdot \sum_{p \in P} r_p^z 
    \leq O(10^z) \cdot \opt(P, Q).
\end{align*}
By Markov's inequality, with probability at least $0.999$, $\sum_{p \in P} \gamma_p^z r_p^z = O(10^z) \cdot \opt(P, Q)$.

\paragraph{Event \ref{item:bound_c}.}
Fix $i \in [k]$, level $\ell \in I_i$ and a net point $u \in N_\ell$.
By \Cref{lem:IN07_expansion}, 
$\Pr(\exists v \in B(u, \epsilon^3 r_\ell), \ \|Gu - Gv\|> 10\epsilon^3 r_\ell) \leq e^{-\Omega(t)}$.
Noting that $|I_i| = O(\log(\alpha k))$, a union bound over all $k \cdot (\log k + \log \alpha) \cdot \epsilon^{-O(\ddim)}$ tuples $(i, \ell, u)$ completes the proof.

\paragraph{Event \ref{item:bound_d}.}
Fix a point $u \in P$.
By \Cref{lemma:preserve_sum}, with probability $1 - \epsilon^{-O(z)} k^2 e^{- \Omega(\epsilon^2 t)}$,
\begin{align*}
    \forall P'\subseteq P, \qquad \sum_{p\in P'} \|Gp-Gu\|^z 
    &\geq (1- \eps)^{3z} \sum_{p\in P'} \|p-u\|^z - \frac{\eps}{k^2} \optcont_k^z(P) \\
    &\geq (1- \eps)^{3z} \sum_{p\in P'} \|p-u\|^z - \frac{\eps}{k^2} \opt(P, Q).
\end{align*}
A union bound over all $k \cdot (\log k + \log \alpha) \cdot \epsilon^{-O(\ddim)}$ tuples $(i, \ell, u)$ completes the proof.

\paragraph{Event \ref{item:bound_e}.}
For every $i \in [k]$, by \Cref{lem:IN07_expansion}, $\Pr(\exists y \in B(c_i^*, 40L \cdot r_{\ell_i}), \ \|Gy - Gc_i^*\| > 400 L \cdot r_{\ell_i}) \leq e^{-\Omega(t)}$.
A union bound over $i \in [k]$ completes the proof.

\paragraph{Event \ref{item:bound_f}.}
For every $i \in [k]$, by \Cref{lem:IN07_variant_contraction}, $\Pr(\exists y \in (P \cup Q) \setminus B(c_i^*, 2000 L^2 \cdot r_{\ell_i}), \ \|Gy - Gc_i^*\| \leq 2000 L \cdot r_{\ell_i}) \leq e^{-\Omega(t)}$.
A union bound over $i \in [k]$ completes the proof.

\paragraph{Event \ref{item:bound_g}.}
By \Cref{lem:IN07_variant_contraction}, $\Pr(\xi_p < 9 r_{\ell_i}) \leq e^{-A_2 t}$.
Hence,
\begin{align*}
    \E(\max\{0, (9 r_{\ell_i})^z - \xi_p^z\}) \leq (9 r_{\ell_i})^z \cdot \Pr(\xi_p < 9 r_{\ell_i}) 
    \leq (9 r_{\ell_i})^z \cdot e^{-A_2 t}.
\end{align*}
By Markov's inequality and a union bound over $i \in [k]$, with probability $0.999$,
\begin{align*}
    \sum_{p \in P_{\ell_i}^i} \xi_p
    \geq \sum_{p \in P_{\ell_i}^i} (9 r_{\ell_i})^z - \sum_{p \in P_{\ell_i}^i} O(k \cdot e^{-A_2 t}) \cdot (9 r_{\ell_i})^z
    \geq 8 |P_{\ell_i}^i| \cdot r_{\ell_i}^z 
    > \alpha \opt(P, Q), 
    \qquad \forall i \in [k].
\end{align*}

\paragraph{Event \ref{item:bound_h}.}
By \Cref{lem:IN07_variant_contraction}, we have $\Pr(\eta_p < 9 r_p) \leq e^{-A_2 t}$.
Hence,
\begin{align*}
    \E(\max\{0, (9 r_p)^z - \eta_p^z \}) \leq (9 r_p)^z \cdot \Pr(\eta_p < 9 r_p) 
    \leq (9 r_p)^z \cdot e^{-A_2 t}.
\end{align*}
Summing up over $p \in S_i^*$, we have 
\begin{align*}
    \E\left(\sum_{p \in S_i^*} \max\{0, (9 r_p)^z - \eta_p^z \}\right)
    \leq O(e^{-A_2 t}) \cdot \sum_{p \in S_i^*} (9 r_p)^z.
\end{align*}
By Markov's inequality and a union bound over $i \in [k]$, with probability $0.999$,
\begin{align*}
    \sum_{p \in S_i^*} \max\{0, 9 r_p - \eta_p\}
    \leq O(k \cdot e^{-A_2 t}) \cdot \sum_{p \in S_i^*} 9^z r_p^z
    \leq e^{-\Omega(t)} \sum_{p \in S_i^*} r_p^z,
    \qquad \forall i \in [k].
\end{align*}

\end{proof}

\section{Lower bounds}
In this section, we provide our lower bounds. For simplicity, we do not try to optimize the dependence on $z$. All lower bounds are presented for $z=1$.
\subsection{Continuous, for all centers}
Denote by $0_d$ the origin of $\R^d$. For ease of presentation, we allow $P$ to be a multi-set.
\begin{theorem}\label{thm:LB_continuos_forall_centers}
    Let $n,d\in \N$, and $P=\{0_d\}^n$. Let $G\in \R^{(d-1)\times d}$ be any linear map. Then, there exists $c\in \R^{d}$ such that $\sum_{p\in P} \|Gp-Gc\|=0$ and $\sum_{p\in P} \|p-c\|=n$.
\end{theorem}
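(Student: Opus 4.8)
The plan is to exploit the fact that a linear map from $\R^d$ to $\R^{d-1}$ cannot be injective, so it must kill some direction, and then place the single center along that direction. Concretely, first I would invoke rank--nullity: since $G\in\R^{(d-1)\times d}$ has rank at most $d-1$, its kernel has dimension at least $d-(d-1)=1$, so there is a nonzero vector $v\in\R^d$ with $Gv=0$. Normalizing, we may assume $\|v\|=1$.

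Next I would take $c:=v$ (recall that every point of $P$ equals $0_d$, so there is nothing else to account for). Then for every $p\in P$ we have $Gp=G(0_d)=0_{d-1}$ and $Gc=Gv=0_{d-1}$, hence $\|Gp-Gc\|=0$, and therefore $\sum_{p\in P}\|Gp-Gc\|=0$. On the other hand, for every $p\in P$ we have $\|p-c\|=\|0_d-v\|=\|v\|=1$, and since $|P|=n$ this gives $\sum_{p\in P}\|p-c\|=n$. This establishes both claimed identities and completes the argument.

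I do not expect any genuine obstacle here: the only ``idea'' is the dimension count showing $\ker G\neq\{0\}$, and the rest is a direct computation. The statement is essentially a sanity check illustrating that in the continuous setting no linear map into dimension $d-1$ can have a ``for all centers'' contraction guarantee (not even the relaxed one), since the chosen $c$ has $\cost(G(P),Gc)=0$ while $\cost(P,c)=n>0=\opt(P)$. If one wanted the center to be a data point (the discrete setting), this construction would fail, which is precisely the contrast the paper highlights.
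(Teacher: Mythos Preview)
Your proposal is correct and matches the paper's proof essentially verbatim: the paper also takes a unit vector $c\in\ker(G)$ and notes that the conclusion is immediate. Your write-up simply spells out the rank--nullity argument and the two computations that the paper leaves implicit.
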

\begin{proof}
    Take a unit length vector $c\in \ker(G)$, i.e., $Gc=0$ and $\|c\|=1$. The proof follows immediately.
\end{proof}

\subsection{Discrete, for all centers}
Here, we show that in order to bound the (multiplicative) contraction for all centers, we need either dimension $\Omega(\log\log n)$, or to relax the definition of contraction (as is done in \Cref{thm:k-medoid_no_loglogn_improved}).

\begin{theorem}\label{thm:lowerbound_loglogn_medoid_forall_c}
    Let $n,d\in \N$ and $\eps\in(0,\tfrac{1}{2})$. There exists $P\subset\R^d$ of size $|P|=n$ and $\ddim(P)=\Theta(1)$, such that if $G$ is a Gaussian JL map onto dimension $t\leq a\eps^{-2}\log\log n$ for a sufficiently small constant $a>0$, then with probability at least $2/3$, there exists $c\in P$ such that $\sum_{p\in P} \|Gp-Gc\|\leq (1-\eps)\sum_{p\in P} \|p-c\|$.
\end{theorem}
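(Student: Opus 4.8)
The plan is to take $P$ to be a \emph{geometric comb}. Fix a scale $\lambda := \eps/(100C_0\sqrt{\log n})$, where $C_0$ is the universal constant from the tail bound below, assume $d\ge n$ (some lower bound on $d$ is unavoidable, since in dimension $o(\log\log n)$ a Gaussian map with $t>d$ is whp near-isometric and admits no contraction), and set
\[
p_m := \textstyle\sum_{l=0}^{m-1}\lambda^l e_l\in\R^d\quad(m=0,1,\dots,n-1),\qquad P := \{p_0,\dots,p_{n-1}\}.
\]
Since $\|p_j-p_m\|\in[\lambda^{\min(j,m)},\,\lambda^{\min(j,m)}/\sqrt{1-\lambda^2}]$ for $j\ne m$, a direct check via nets shows $\ddim(P)$ is bounded by an absolute constant, uniformly in $\lambda\in(0,\tfrac12)$, so $\ddim(P)=\Theta(1)$ and $|P|=n$. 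The centers we will use are $c=p_m$ for $m\le M:=\lfloor\tfrac{\log_2 n}{10\log_2(1/\lambda)}\rfloor$ (so $\lambda^M\ge n^{-1/10}$); for such $m$, $\sum_{p\in P}\|p-p_m\|=\Theta(n\lambda^m)=\Omega(n^{9/10})$ and is dominated by the $\approx n$ points $p_j$, $j>m$, which all lie at distance $\approx\lambda^m$ from $p_m$.

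Write $v_l := Ge_l$ for the $l$-th column of $G$ (so $v_0,\dots,v_{n-1}$ are i.i.d.\ with $\|v_l\|^2\sim\chi^2_t/t$). For $j>m$ we have $Gp_j-Gp_m = \lambda^m v_m + \sum_{l=m+1}^{j-1}\lambda^l v_l$, and on the single good event $\mathcal G := \{\max_{l<n}\|v_l\|\le V_0\}$ with $V_0 := C_0\sqrt{(\log n)/t}$ (which holds with probability $\ge 0.99$ by the Gaussian upper tail and a union bound over the $n$ columns, once $C_0$ is a large enough universal constant) the tail term has norm at most $V_0\lambda^{m+1}/(1-\lambda)\le 2\lambda V_0\,\lambda^m=\tfrac{\eps}{50\sqrt t}\lambda^m\le\tfrac{\eps}{50}\lambda^m$; hence $\bigl|\,\|Gp_j-Gp_m\|-\lambda^m\|v_m\|\,\bigr|\le\tfrac{\eps}{50}\lambda^m$ \emph{simultaneously for all $j>m$}. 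Summing, and bounding the $\le m$ terms with $j<m$ crudely by $\sum_{j<m}\|Gp_j-Gp_m\|\le 2V_0/(1-\lambda)=O(\sqrt{\log n})$, gives $\sum_{p\in P}\|Gp-Gp_m\|\le O(\sqrt{\log n})+(n-1-m)\bigl(\|v_m\|+\tfrac{\eps}{50}\bigr)\lambda^m$. Now define $E_m := \{\|v_m\|\le 1-\tfrac{51}{50}\eps-n^{-0.8}\}$: on $\mathcal G\cap E_m$ the bracket is $\le 1-\eps-n^{-0.8}$, and since $n^{-0.8}(n-1-m)\lambda^m\ge\tfrac12 n^{0.1}\gg\sqrt{\log n}$ for $n$ large, we get $\sum_{p\in P}\|Gp-Gp_m\|\le(1-\eps)(n-1-m)\lambda^m\le(1-\eps)\sum_{p\in P}\|p-p_m\|$, as desired.

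It remains to amplify. The events $E_0,\dots,E_M$ are independent (distinct columns of $G$), and by the standard lower bound on the left tail of a chi-squared, $\Pr[E_m]=\Pr[\chi^2_t\le(1-\tfrac{51}{50}\eps-n^{-0.8})^2 t]\ge c\,t^{-1/2}e^{-2\eps^2 t}$ for a universal constant $c$, all $\eps\in(0,\tfrac12)$ and $n$ large (here $2\eps^2$ dominates $\tfrac12(a-1-\ln a)$ with $a\approx(1-\tfrac{51}{50}\eps)^2$ throughout the range). Choose the constant $a<\tfrac12$; then for $t\le a\eps^{-2}\log\log n$ we have $e^{-2\eps^2 t}\ge(\log n)^{-2a}$, so $(M{+}1)\Pr[E_m]=\Omega\!\bigl((\log n)^{1-2a}/\bigl((\log\log n+\log\tfrac1\eps)\sqrt t\bigr)\bigr)\to\infty$ as $n\to\infty$, whence $\Pr[\bigcup_{m\le M}E_m]\ge 1-e^{-(M+1)\Pr[E_m]}\ge 0.9$ for $n$ large. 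Combined with $\Pr[\mathcal G]\ge0.99$ this yields $\Pr[\mathcal G\cap\bigcup_{m\le M}E_m]\ge 2/3$, and on that event $c:=p_m$ for any witnessing $m$ gives $\sum_{p\in P}\|Gp-Gc\|\le(1-\eps)\sum_{p\in P}\|p-c\|$. (When $n$ is below a threshold depending on $\eps$ the argument degenerates, but the constraint $t\le a\eps^{-2}\log\log n$ then essentially forces $t$ to a small value for which a single center $m=0$ already works; we omit this routine case.)

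The main obstacle — and the reason the bound is $\log\log n$ rather than $\log n$ — is that $\sum_{p\in P}\|Gp-Gc\|$ has $\approx n$ summands that we cannot afford to control by a union bound (that would force $t=\Omega(\eps^{-2}\log n)$). The comb geometry circumvents this by collapsing all $\approx n$ relevant distances onto the single scale $\lambda^m$ up to a $1\pm\lambda$ factor, so that their $G$-images are pinned to the single scalar $\|Ge_m\|$ up to a negligible error handled by one max-norm event; the only genuinely random ingredient is then a single large-deviation event per candidate center, and we have $\approx\log n/\log(1/\eps)$ essentially independent candidates — exactly enough to boost the per-center failure probability $e^{-\Theta(\eps^2 t)}$ to a constant precisely at $t=\Theta(\eps^{-2}\log\log n)$. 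A secondary nuisance is calibrating the slack in $E_m$ (the $\tfrac{51}{50}\eps$ and $n^{-0.8}$ terms) so that the $(1-\eps)$-contraction survives the lower-order errors while keeping $\Pr[E_m]=e^{-\Theta(\eps^2 t)}$ uniform over $\eps\in(0,\tfrac12)$.
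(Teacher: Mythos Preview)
Your proof is correct and uses the same core mechanism as the paper: expose roughly $\log n$ candidate centers whose contraction events depend on \emph{independent} columns $Ge_m$ of $G$, so that when $t\le a\eps^{-2}\log\log n$ at least one event $\{\|Ge_m\|\le 1-O(\eps)\}$ fires, and arrange that this single column norm controls essentially the entire cost relative to that center.

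The constructions, however, differ. The paper allows $P$ to be a multiset and takes
\[
P=\{2^{-i}e_i\}_{i=0}^{\frac12\log n}\cup\{0_d\}^{\,n-\frac12\log n}.
\]
With center $c=2^{-i^*}e_{i^*}$ the bulk cost is \emph{exactly} $(n-\tfrac12\log n)\,2^{-i^*}$ before projection and $(n-\tfrac12\log n)\,2^{-i^*}\|Ge_{i^*}\|$ after, so no perturbation analysis is needed and the whole argument fits in a few lines. Your geometric comb $p_m=\sum_{l<m}\lambda^l e_l$ reproduces the same effect with \emph{distinct} points: all $p_j$ with $j>m$ sit at distance $\lambda^m(1+O(\lambda))$ from $p_m$ along a direction dominated by $e_m$, so again one column norm governs the bulk cost. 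The payoff is that you avoid the multiset convention; the price is tuning $\lambda$ against the max-norm event $\mathcal G$, carrying several lower-order terms, and having only $M=\Theta(\log n/(\log\log n+\log\tfrac1\eps))$ independent candidates rather than $\tfrac12\log n$ --- which still suffices for the amplification once $a$ is small enough.

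Two minor slips that do not affect correctness: the crude bound on $\sum_{j<m}\|Gp_j-Gp_m\|$ should be $V_0/(1-\lambda)^2$ rather than $2V_0/(1-\lambda)$ (a second geometric sum over $j$ is needed), still $O(\sqrt{\log n})$; and the paper's chi-squared lower-tail lemma gives $\Pr[E_m]\ge e^{-O(\eps^2 t)}/t$ rather than your sharper $c\,t^{-1/2}e^{-2\eps^2 t}$, but either form is enough for the argument.
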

To prove the theorem, we will use the following lemma, whose proof appears in \Cref{appendix:chi_square} for completeness (see \cite{ZZ20_lower_tail_bound} for a stronger bound).
\begin{restatable}{lemma}{LemmaChiSquareLowerTail}\label{lem:chi_sqaure_lower_tail}
    Let $t > 2$ be an integer, let $X_t$ be a chi-squared random variable with $t$ degrees of freedom, and  let $\eps \in (0, \tfrac{1}{2})$. We have
    \[ \Pr\left(X_t <  \frac{t}{1+\eps}\right) \ge \frac{e^{- O(\eps^2 t)}}t.\]
\end{restatable}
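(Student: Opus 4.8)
The plan is to estimate the lower tail directly from the chi-squared density $f(x) = \frac{x^{t/2-1}e^{-x/2}}{2^{t/2}\Gamma(t/2)}$ of $X_t$. Set $b := \tfrac{t}{1+\eps}$. Since $0<\eps<\tfrac12$ and $t>2$ we have $\tfrac23 t \le b < t$ and $b-1 \ge \tfrac t3 > 0$, so the interval $[b-1,b]$ sits inside $[0,b]$ and
\[
  \Pr\!\Bigl(X_t < \tfrac{t}{1+\eps}\Bigr) \ \ge\ \int_{b-1}^{b} f(x)\,dx \ \ge\ \min_{x\in[b-1,b]} f(x).
\]
It therefore suffices to lower bound $f$ on this short interval by $\tfrac{1}{t}\,e^{-O(\eps^2 t)}$; in fact the argument produces the stronger $\tfrac{1}{\sqrt t}\,e^{-O(\eps^2 t)}$, and the weaker form follows from $\sqrt t\le t$.

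First I would lower bound $f(b)$ itself. Writing $c := \tfrac{1}{1+\eps}$ (so $b=ct$) and plugging in the Stirling upper bound $\Gamma(t/2)\le C_0\, t^{-1/2}\,(t/(2e))^{t/2}$ for an absolute constant $C_0$, a short simplification gives
\[
  f(b)\ \ge\ \frac{c^{\,t/2-1}\,e^{(1-c)t/2}}{C_0'\sqrt t}\ =\ \frac{1+\eps}{C_0'\sqrt t}\,\bigl(c\,e^{1-c}\bigr)^{t/2}\ \ge\ \frac{1}{C_0'\sqrt t}\,\exp\!\Bigl(\tfrac t2\bigl(\tfrac{\eps}{1+\eps}-\ln(1+\eps)\bigr)\Bigr).
\]
The key elementary fact is $h(\eps):=\tfrac{\eps}{1+\eps}-\ln(1+\eps)\ge -\tfrac{\eps^2}{2}$, which follows by integrating $h'(\eps)=\tfrac{-\eps}{(1+\eps)^2}\ge -\eps$ from $0$ to $\eps$; hence $f(b)\ge \tfrac{1}{C_0'\sqrt t}\,e^{-\eps^2 t/4}$. (If one prefers to avoid Stirling's formula: $f$ is unimodal with mode $t-2$, and Chebyshev's inequality together with $\mathrm{Var}(X_t)=2t$ gives $\max_x f(x)=f(t-2)=\Omega(1/\sqrt t)$; then $\log f(b)-\log f(t-2)=\int \tfrac{t-2-y}{2y}\,dy$ is an explicit elementary integral, and the same bound on $h$ shows it is at least $-\eps^2 t/4-O(1)$, yielding the identical conclusion.)

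Finally, $f$ changes by at most a constant factor on $[b-1,b]$: since $\tfrac{d}{dy}\log f(y)=\tfrac{t-2-y}{2y}$ and on this interval $y\ge \tfrac t3$ while $|t-2-y|\le t$, we have $\bigl|\tfrac{d}{dy}\log f(y)\bigr|=O(1)$, so $\min_{[b-1,b]}f\ge e^{-O(1)}f(b)$. Combining, $\Pr(X_t<\tfrac{t}{1+\eps})\ge \tfrac{1}{C_1\sqrt t}\,e^{-\eps^2 t/4}\ge \tfrac1t\,e^{-O(\eps^2 t)}$, where in the last step the absolute constant $C_1$ is absorbed into the exponent once $\eps^2 t$ exceeds an absolute constant, and for $\eps^2 t$ below that constant the claim is immediate because the left-hand side is then $\Omega(1)$ (seen by running the same density estimate over a window of width $\Theta(\sqrt t)$, on which $f$ is flat up to a constant). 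The main obstacle is exactly this quantitative control of the small chi-squared density near $b$; once $f(b)\ge \tfrac{1}{O(\sqrt t)}\,(c\,e^{1-c})^{t/2}$ is in hand, everything reduces to the one-line inequality $\tfrac{\eps}{1+\eps}-\ln(1+\eps)\ge -\tfrac{\eps^2}{2}$ and a routine estimate of the variation of $\log f$.
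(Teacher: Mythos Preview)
Your proof is correct and follows essentially the same approach as the paper's: both lower-bound the chi-squared density directly via a Stirling-type estimate on $\Gamma(t/2)$ and reduce to the identical elementary inequality $\tfrac{1}{1+\eps}+\ln(1+\eps)\le 1+O(\eps^2)$. The only difference is cosmetic --- the paper integrates the density over the whole interval $[0,b]$ (bounding $e^{-x/2}\ge e^{-b/2}$ uniformly and then evaluating $\int_0^b x^{t/2-1}\,dx$ exactly), whereas you integrate over the short window $[b-1,b]$ and bound the density pointwise there, which incidentally yields the slightly sharper prefactor $1/\sqrt{t}$ in place of $1/t$.
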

\begin{proof}[Proof of \Cref{thm:lowerbound_loglogn_medoid_forall_c}]

    Denote by $e_i$ the $i$-th standard basis vector. Pick $P=\{2^{-i} e_i\}_{i=0}^{\frac{1}{2}\log n} \cup \{0_d\}^{n-\frac{1}{2}\log n}$. 
    For each $i\in [0,\tfrac{1}{2}\log n]$, by \Cref{lem:chi_sqaure_lower_tail},
    \[
    \Pr(\|Ge_i\|\leq 1-\eps)\geq \tfrac{e^{-O(\eps^2 t)}}{t}\geq \tfrac{10}{\log n}.
    \]
    Therefore, the probability that for all $i\in [0,\tfrac{1}{2}\log n]$, we have $\|Ge_i\|> 1-\eps$, is at most $(1-\tfrac{10}{\log n})^{\frac{1}{2}\log n}\leq \tfrac{1}{10}$.
    Suppose this event does not happen, therefore there exists $i^*\in [0,\tfrac{1}{2}\log n]$ such that $\|Ge_{i^*}\|\leq 1-\eps$.
    Moreover, assume that $\max_{i\in [0,\tfrac{1}{2}\log n]}\|Ge_i\|\leq \log\log n$, which holds with high probability.
    Pick $c=2^{-i^*} e_{i^*}$.
    We have
    \begin{align*}
        \sum_{p\in P} \|Gp-Gc\|&\leq (n-\tfrac{1}{2}\log n)\cdot\|Gc\|+\tfrac{1}{2}\log n\cdot 2\log\log n\\
        &\leq (1-\eps)(n-\tfrac{1}{2}\log n)2^{-i^*}+\log n \cdot\log\log n \\
        &\leq (1-\tfrac{\eps}{2})(n-\tfrac{1}{2}\log n)2^{-i^*},
    \end{align*}
    where we used $2^{-i^*}\geq 2^{-\frac{1}{2}\log n}=\tfrac{1}{\sqrt{n}}$, so for sufficiently large $n$, we have $\log n\log\log n\leq \tfrac{\eps}{2\sqrt{n}}(n-\tfrac{1}{2}\log n)$.
    Moreover,
    \[
    \sum_{p\in P} \|p-c\|\geq (n-\tfrac{1}{2}\log n)2^{-i^*}.
    \]
    The proof follows by rescaling $\eps$ and combining the two inequalities.
\end{proof}

\subsection{Discrete, for all partitions and centers}
In this section, we show that dimension $\Omega(\log\log n)$ is necessary, even for the relaxed notion of contraction, for preserving all partitions and centers.

\begin{theorem}\label{thm:lowerbound_loglogn_medoid_forall_partitions_and_c}
    Let $n,d\in \N$ and $\eps\in(0,\tfrac{1}{2})$. 
    There exists $P\subset\R^d$ of size $|P|=n$ and $\ddim(P)=\Theta(1)$, such that if $G$ is a Gaussian JL map onto dimension $\tfrac{1}{a}\eps^{-2}\leq t\leq a\eps^{-2}\log\log n$ for a sufficiently small constant $a>0$, then with probability at least $2/3$, there exists $(c_1,c_2)\subset P$ and a partition $(P_1,P_2)$ of $P$ such that 
    \[
    \sum_{i\in\{1,2\}}\sum_{p\in P_i}\|Gp-Gc_i\|< \min\Big\{(1-\eps)\sum_{i\in\{1,2\}}\sum_{p\in P_i}\|p-c_i\|, 100\opt(P)\Big\}.
    \]
\end{theorem}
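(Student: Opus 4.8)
The plan is to adapt the one–center hard instance of \Cref{thm:lowerbound_loglogn_medoid_forall_c} to the two–center, partitions–and–centers setting, the new ingredient being that for \emph{relaxed} contraction the violating solution must have cost within a factor $100$ of $\opt(P)$. Recall that in \Cref{thm:lowerbound_loglogn_medoid_forall_c} the set $P$ was a huge bulk of copies of $0_d$ together with the geometrically shrinking vectors $q_i=2^{-i}e_i$, and the violation used some $q_{i^*}$ as the sole center, charging the whole bulk to it at the shrunken distance $\|Gq_{i^*}\|<(1-\epsilon)2^{-i^*}$; the resulting solution had cost $\Theta(n\cdot 2^{-i^*})$, a $\mathrm{poly}(n)$ factor above $\opt(P)$, which is fine for multiplicative contraction but useless for relaxed contraction. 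So here I would use the \emph{second} center to absorb (part of) the bulk ``for free'' in the optimum, and arrange the scales/multiplicities of the $q_i$ so that there remain enough admissible choices of $i^*$ for which the bulk charge stays of order $\opt(P)$. Concretely, I would take $P$ to be a bulk $\{0_d\}^{n-m}$ together with $q_i=2^{-i}e_i$ for $i$ ranging over an interval of length $m=\Theta(\log n)$ positioned around $\log n$ (possibly with multiplicities to inflate $\opt(P)$), so that $\ddim(P)=\Theta(1)$ (a ball around $0_d$ in this metric is recursively covered by two half–balls, and balls around the $q_i$ contain a single point), while placing one center at $0_d$ witnesses $\opt(P)$ of the intended order.

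The argument then splits into a probabilistic step and a deterministic step. Probabilistically I would show that with probability at least $2/3$ the following all hold. (i) Some admissible index $i^*$ has $\|Ge_{i^*}\|<1-\epsilon$: the $Ge_i$ are independent (disjoint supports), each such event has probability $\ge e^{-O(\epsilon^2 t)}/t\ge (\log n)^{-O(a)}/t$ by \Cref{lem:chi_sqaure_lower_tail}, so a small power of $\log n$ many admissible indices suffice when $t\le a\epsilon^{-2}\log\log n$. (ii) No $q_i$ expands by more than a constant factor under $G$, by \Cref{lem:IN07_expansion} applied to the doubling set $\{q_i\}$ rescaled into the unit ball. (iii) The aggregate image length $\sum_i 2^{-i}\|Ge_i\|$ stays within $O(\epsilon)$ of its mean, by a Chebyshev bound — its variance is $O(1/t)=O(a\epsilon^2)$, which is exactly where the hypothesis $t\ge\tfrac1a\epsilon^{-2}$ is used — and (iv) $\max_i\|Ge_i\|\le\log\log n$, by a routine union bound. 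Deterministically, on this event I would set $c_1=q_{i^*}$, $c_2$ equal to a large–scale $q_i$, and take $(P_1,P_2)$ to be the nearest–center partition; then the bulk contributes $(1-\epsilon)(n-m)2^{-i^*}$ in the image versus $(n-m)2^{-i^*}$ in the original, and all remaining contributions (the $q_i$'s charged to $q_{i^*}$) are controlled by (ii)–(iv), giving $\cost(G(\P),G(C))<\min\{(1-\epsilon)\cost(\P,C),\,100\,\opt(P)\}$ after adjusting constants.

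The step I expect to be the main obstacle is the construction itself — specifically, reconciling two opposing requirements. On one hand, we need enough admissible candidate centers $q_{i^*}$ that one of them contracts with probability $\ge 2/3$ even for $t$ as large as $a\epsilon^{-2}\log\log n$; this pushes the $q_i$'s to span many scales \emph{below} the bulk's break–even scale $\approx 1/n$. On the other hand, every admissible choice must yield a solution of cost $\le 100\,\opt(P)$, which caps how far below break–even one may go and therefore forces $\opt(P)$ to be correspondingly large (hence the multiplicities). Moreover, one must ensure that the non-contracting part of the cost — the shrinking vectors charged to $q_{i^*}$ — does not eat up the $\Theta(\epsilon)$ gain produced by the contracted bulk; here one exploits that this part has cost negligible relative to the bulk charge and concentrates tightly (by (iii)), so that $\cost(G(\P),G(C))\le(1-\epsilon)\cdot(\text{bulk charge})+o(\epsilon\cdot\text{bulk charge})$. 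Getting the scales, multiplicities, and bulk size to simultaneously open a window of the right width and keep the bulk charge dominant is the delicate quantitative core; the probabilistic bounds (i)–(iv) themselves are then fairly mechanical, mirroring the proofs of \Cref{thm:lowerbound_loglogn_medoid_forall_c} and \Cref{lemma:good_events}.
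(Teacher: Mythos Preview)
There is a genuine gap. With $c_1=q_{i^*}$, $c_2$ a large-norm $q_{i_0}$, and the nearest-center partition, the entire bulk of copies of $0_d$ is sent to $q_{i^*}$, giving bulk charge $B=(\text{bulk size})\cdot 2^{-i^*}$. Your two requirements then collide. For the total cost to stay below $100\,\opt(P)$ you need $B\le 100\,\opt(P)$. For $(1-\Omega(\epsilon))$-contraction of the \emph{whole} solution you need the saving $\epsilon B$ to be a constant fraction of $\epsilon\cdot(\text{total})$, hence $B\ge c\cdot(\text{non-bulk cost})$ for a fixed constant $c>0$; and the non-bulk cost is at least $\sum_{i\ne i^*,i_0}\mu_i\|q_i\|$, which is at least a constant fraction of $\opt(P)$ since each $q_i$ is orthogonal to both centers and hence at distance at least $\|q_i\|$ from either. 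Together these force $B\in[\Omega(\opt(P)),\,100\,\opt(P)]$, pinning the admissible $i^*$ to an interval of length $O(1)$---far too few independent directions for one to contract with probability $\ge 2/3$ when $t$ may be as large as $a\epsilon^{-2}\log\log n$. Multiplicities only rescale $\opt(P)$ and do not widen this window.

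The paper sidesteps this by putting one center \emph{at the origin} and abandoning the nearest-center partition. With $c_1=0_d\in P$, $c_2=q_{i^*}$, and $P_2$ consisting of exactly $N=\Theta(\opt(P)/\|q_{i^*}\|)$ copies of $0_d$ (the remaining bulk sitting in $P_1$ at zero cost), the bulk charge $N\|q_{i^*}\|$ is a \emph{fixed} multiple of $\opt(P)$, independent of $i^*$. All $\tfrac12\log n$ indices are then simultaneously admissible, and the probabilistic step is the same as in \Cref{thm:lowerbound_loglogn_medoid_forall_c}. The paper also scales the points as $(1-\epsilon)^ie_i$ rather than $2^{-i}e_i$, which makes $\opt(P)\approx\epsilon^{-1}$ and keeps the arithmetic clean, but the decisive idea you are missing is the calibrated assignment of only a \emph{slice} of the bulk to the contracted center, made possible by placing the other center at $0_d$.
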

\begin{proof}
    Pick $P=\{(1-\eps)^{i} e_i\}_{i=0}^{\frac{1}{2}\log n} \cup \{0_d\}^{n-\frac{1}{2}\log n}$.
    Observe that $\opt = \sum_{i\in [1,\frac{1}{2}\log n]} (1-\eps)^{i}\leq\eps^{-1}$.
    For each $i\in [0,\tfrac{1}{2}\log n]$, by \Cref{lem:chi_sqaure_lower_tail},
    \[
    \Pr(\|Ge_i\|\leq 1-\eps)\geq \tfrac{e^{-O(\eps^2 t)}}{t}\geq \tfrac{10}{\log n}.
    \]
    Therefore, the probability that for all $i\in [0,\tfrac{1}{2}\log n]$, we have $\|Ge_i\|> 1-\eps$, is at most $(1-\tfrac{10}{\log n})^{\frac{1}{2}\log n}\leq \tfrac{1}{10}$.
    Suppose this event does not happen, therefore there exists $i^*\in [0,\tfrac{1}{2}\log n]$ such that $\|Ge_{i^*}\|\leq 1-\eps$.
    Moreover, assume that $\sum_{i\in [0,\frac{1}{2}\log n]} \|Ge_i\|(1-\eps)^{i}\leq (1+\eps)\sum_{i\in [0,\frac{1}{2}\log n]} (1-\eps)^{i}= (1+\eps)\opt$, which holds with high probability by \cite[Theorem A.3.1]{danos2021thesis} since $t=\Omega(\eps^{-2})$ (alternatively, one can use \Cref{lemma:1median_mmr} with $k=1$, and get that this bound holds w.h.p. if $t=\Omega(\eps^{-2}\log\tfrac{1}{\eps})$).
    Pick $c_1=0,c_2=(1-\eps)^{i^*}e_{i^*}$ and $P_2=\{0_d\}^{50\eps^{-1}\cdot (1-\eps)^{-i^*}},P_1=P\setminus P_2$. (Note that $50\eps^{-1}\cdot (1-\eps)^{-i^*}\leq 50\eps^{-1}\cdot 2^{\frac{1}{2}\log n}\leq \tfrac{n}{2}$, so this assignment is feasible.)
    We have,
    \[
    \sum_{i\in\{1,2\}}\sum_{p\in P_i}\|p-c_i\|=(1-\eps)^{i^*}\cdot 50\eps^{-1}\cdot (1-\eps)^{-i^*} + \sum_{i\neq i^*} (1-\eps)^{i}\geq 51\eps^{-1}-2;
    \]
    and
    \begin{align*}
        \sum_{i\in\{1,2\}}\sum_{p\in P_i}\|Gp-Gc_i\|&\leq (1-\eps)^{i^*}\cdot 50\eps^{-1}\cdot (1-\eps)^{-i^*}\cdot (1-\eps) + \sum_{i\neq i^*}\|Ge_i\|(1-\eps)^{i}\\
        &\leq (1-\eps)50\eps^{-1} + (1+\eps)\sum_{i\in[0,\frac{1}{2}\log n]} (1-\eps)^{i} \\
        &\leq (1-\eps)50\eps^{-1} +(1+\eps)\eps^{-1} \\
        &= 51\eps^{-1} - 49,
    \end{align*}
    which is smaller than both $\sum_{i\in\{1,2\}}\sum_{p\in P_i}\|p-c_i\|$ and $100\opt=100\eps^{-1}$, concluding the proof.
\end{proof}

\subsection{Discrete, for all centers, with candidate center set}
\begin{theorem}\label{thm:lowerbound_logn_medoid_forall_c_candidate}
    Let $n, s, d\in \N$ and $\eps\in(0,\tfrac{1}{2})$.
    There exists $P, Q \subset\R^d$ of sizes $|P|=n, |Q| = s$, and $\ddim(P \cup Q) = O(1)$,
    such that if $G$ is a Gaussian JL map onto dimension $t\leq a\eps^{-2}\log s$ for a sufficiently small constant $a>0$, then with probability at least $2/3$, there exists $c\in Q$ such that $\sum_{p\in P} \|Gp-Gc\|\leq (1-\eps)\sum_{p\in P} \|p-c\|$.
\end{theorem}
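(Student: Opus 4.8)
The plan is to mirror the construction behind \Cref{thm:lowerbound_loglogn_medoid_forall_c}, but to \emph{decouple} the dataset from the candidate centers, which is precisely what turns the $\log\log n$ barrier into a $\log s$ barrier. I would take $P:=\{0_d\}^n$ (the origin with multiplicity $n$, which is permitted as in the other lower bounds) and $Q:=\{2^{-i}e_i : i\in[s]\}$, where $e_i$ is the $i$-th standard basis vector; this uses $d\ge s$ coordinates, which we may assume. The geometric scaling $2^{-i}$ is what keeps the doubling dimension small: the metric on $P\cup Q$ is bi-Lipschitz equivalent, with a universal distortion, to the subset $\{0\}\cup\{2^{-i}:i\in[s]\}$ of the real line, and therefore $\ddim(P\cup Q)=O(1)$.

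The reduction is then immediate. Since every point of $P$ equals $0_d$ and $G$ is linear (so $Gp=0$ for all $p\in P$), for any candidate center $c=2^{-i}e_i\in Q$ we have $\sum_{p\in P}\|Gp-Gc\|=n\|Gc\|=n\,2^{-i}\|Ge_i\|$ and $\sum_{p\in P}\|p-c\|=n\,2^{-i}$. Hence the event ``$\sum_{p\in P}\|Gp-Gc\|\le(1-\eps)\sum_{p\in P}\|p-c\|$ for some $c\in Q$'' holds exactly when $\|Ge_i\|\le 1-\eps$ for some $i\in[s]$. Note that, unlike in \Cref{thm:lowerbound_loglogn_medoid_forall_c}, we need not separately control an upper tail such as $\max_i\|Ge_i\|$, because once $P$ sits at the origin it contributes nothing to either sum.

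It remains to show $\Pr\big(\exists i\in[s]:\|Ge_i\|\le 1-\eps\big)\ge 2/3$. The crucial observation is that the columns $Ge_1,\dots,Ge_s$ of $G$ are mutually independent (the entries of $G$ are i.i.d.), so these are $s$ independent trials. For each fixed $i$, $t\|Ge_i\|^2$ has the chi-squared distribution with $t$ degrees of freedom, and since $(1-\eps)^2(1+\eps)<1$ we have $\{\|Ge_i\|\le 1-\eps\}\supseteq\{t\|Ge_i\|^2<t/(1+\eps)\}$, so \Cref{lem:chi_sqaure_lower_tail} gives $\Pr(\|Ge_i\|\le 1-\eps)\ge e^{-O(\eps^2 t)}/t$. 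Plugging in $t\le a\eps^{-2}\log s$ (so that $\eps^2 t\le a\log s$) yields $\Pr(\|Ge_i\|\le 1-\eps)\ge s^{-O(a)}/t$, and hence, by independence, $\Pr\big(\forall i:\|Ge_i\|>1-\eps\big)\le\big(1-s^{-O(a)}/t\big)^s$. For $a$ a sufficiently small constant this is at most $1/3$, in the same way as in the proof of \Cref{thm:lowerbound_loglogn_medoid_forall_c} with $s$ here playing the role of $\tfrac12\log n$ there, which completes the proof.

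The main technical point to watch is the quantitative bookkeeping in the last step when $\eps$ is tiny and $t$ is correspondingly large: then the $1/t$ factor produced by \Cref{lem:chi_sqaure_lower_tail} is too lossy to push the union bound through directly. One should instead use a sharper chi-squared lower-tail estimate (with prefactor of order $1/\sqrt{\eps^2 t}$ rather than $1/t$, available as noted in the paper from \cite{ZZ20_lower_tail_bound}), or split into the regimes $\eps^2 t=O(1)$ --- where each column already succeeds with a universal constant probability --- and $\eps^2 t=\omega(1)$ --- where $\eps^2 t\le a\log s$ forces $s$ to be large and the sharper bound gives per-trial probability $\gtrsim s^{-O(a)}$, which is more than enough over $s$ independent trials. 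This is the only place the argument is more delicate than a verbatim repetition of \Cref{thm:lowerbound_loglogn_medoid_forall_c}; the construction and the reduction are routine.
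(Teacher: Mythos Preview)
Your proof is correct and follows essentially the same approach as the paper's: the paper takes $P=\{0_d\}^n$ and $Q=\{2^{i}e_i\}_{i=1}^{s}$ (your $2^{-i}$ versus their $2^{i}$ is immaterial), reduces to finding some $i\in[s]$ with $\|Ge_i\|\le 1-\eps$, and invokes \Cref{lem:chi_sqaure_lower_tail} together with independence of the columns of $G$. Your final paragraph is actually more careful than the paper, which simply writes $e^{-O(\eps^2 t)}/t\ge 10/s$ without commenting on the regime where $t$ is large relative to $s$.
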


\begin{proof}
    Consider $P = \{0_d\}^n$ and $Q = \{2^i e_i\}_{i = 1}^{s}$.
    For each $i \in [s]$, by \Cref{lem:chi_sqaure_lower_tail},
    \[
    \Pr(\|Ge_i\|\leq 1-\eps)\geq \tfrac{e^{-O(\eps^2 t)}}{t}\geq \tfrac{10}{s}.
    \]
    With probability at least $1 - (1 - 10/s)^s \geq 1 - e^{-10}$, there exists $i^* \in [s]$ such that $\|G e_{i^*}\| \leq 1 - \epsilon$.
    Pick $c = 2^{i^*} e_{i^*}$. Then 
    \begin{align*}
        \cost(G(P), G c) = n \cdot 2^{i^*} \|G e_{i^*}\|
        \leq n \cdot 2^{i^*} \cdot (1 - \epsilon)
        = (1 - \epsilon) \cost(P, c).
    \end{align*}
\end{proof}

\bibliographystyle{alphaurl}
\bibliography{references}

\appendix
\section{Proof of \Cref{lemma:fixed_centers_partiton}}
\label{appendix:fixed_centers_partition}

\lemmafixedcenterspartitions*

We state the following lemma in~\cite{MakarychevMR19}, which bounds the expected distance distortion of a fixed pair of points under Gaussian JL maps.

\begin{lemma}[{\cite[Eq. (5)]{MakarychevMR19}}]
    \label{lemma:expected_distortion_MMR}
    Let $\epsilon \in (0, 1), z \geq 1$ and $G \in \R^{t \times d}$ be a Gaussian JL map.
    For $p, q \in \R^d$, 
    \begin{align*}
        \E\left(
        \max\left\{0, \frac{\|Gp - Gq\|^z}{\|p - q\|^z} - (1 + \epsilon)^z \right\}
        \right)
        \leq e^{-\Omega(\epsilon^2 t)}.
    \end{align*}
\end{lemma}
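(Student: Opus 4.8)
The plan is to reduce the statement to a one-dimensional estimate for a chi-squared variable and then evaluate the expectation through the layer-cake (tail-integral) formula. By homogeneity of both sides in $\|p-q\|$, I would first assume $\|p-q\|=1$; set $x:=p-q$ (a unit vector) and $D:=\|Gx\|$, so the goal becomes $\E\bigl[\max\{0,\,D^z-(1+\eps)^z\}\bigr]\le e^{-\Omega(\eps^2 t)}$. Since the rows of $\sqrt{t}\,G$ are i.i.d.\ $N(0,I_d)$ and $\|x\|=1$, the coordinates of $\sqrt{t}\,Gx$ are i.i.d.\ $N(0,1)$, hence $tD^2$ is a $\chi^2_t$ variable. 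The standard chi-squared moment-generating-function bound then yields, for every $\theta>0$,
\[
  \Pr(D>1+\theta)=\Pr\!\bigl(tD^2>t(1+\theta)^2\bigr)\le \exp\!\Bigl(-\tfrac t2\bigl((1+\theta)^2-1-2\ln(1+\theta)\bigr)\Bigr)\le e^{-\theta^2 t/2},
\]
where the last inequality is just $\theta\ge\ln(1+\theta)$, which gives $(1+\theta)^2-1-2\ln(1+\theta)\ge\theta^2$.

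Next I would apply $\E[Y]=\int_0^\infty\Pr(Y>s)\,ds$ with $Y:=\max\{0,\,D^z-(1+\eps)^z\}$. Since $\{Y>s\}=\{D>((1+\eps)^z+s)^{1/z}\}$ for $s\ge 0$, the substitutions $a=(1+\eps)^z+s$ and then $a=(1+\theta)^z$ (so $da=z(1+\theta)^{z-1}\,d\theta$ and the lower limit becomes $\theta=\eps$) give
\[
  \E[Y]=\int_{(1+\eps)^z}^{\infty}\Pr\!\bigl(D>a^{1/z}\bigr)\,da
  = z\int_{\eps}^{\infty}\Pr(D>1+\theta)\,(1+\theta)^{z-1}\,d\theta
  \le z\int_{\eps}^{\infty}e^{-\theta^2 t/2}\,e^{(z-1)\theta}\,d\theta,
\]
using $1+\theta\le e^{\theta}$. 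Completing the square, $-\tfrac t2\theta^2+(z-1)\theta=-\tfrac t2\bigl(\theta-\tfrac{z-1}{t}\bigr)^2+\tfrac{(z-1)^2}{2t}$, so in the (only relevant) regime $t=\Omega(z^2\eps^{-2})$ one has $\tfrac{z-1}{t}\le\tfrac\eps2$ and $e^{(z-1)^2/(2t)}=O(1)$, whence
\[
  \E[Y]\ \le\ O(z)\int_{\eps/2}^{\infty}e^{-t u^2/2}\,du
  \ \le\ O(z)\cdot\frac{2}{t\eps}\,e^{-t\eps^2/8}
  \ =\ \frac{O(z)}{t\eps}\,e^{-t\eps^2/8}.
\]

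It then remains to absorb the prefactor: since $\ln\!\bigl(z/(t\eps)\bigr)\le\ln z\le O(\eps^2 t)$ in this regime — comfortably so when $t=\Theta(z^2\eps^{-2}\log\eps^{-1})$, the setting in which \Cref{lemma:fixed_centers_partiton} invokes this bound — we conclude $\E[Y]\le e^{-\Omega(\eps^2 t)}$, with the hidden constant allowed to depend on $z$. I expect the difficulty to be entirely in the bookkeeping rather than in any idea: one must (i) obtain a sub-Gaussian tail bound for $D$ that is uniform over all thresholds $1+\theta$, not merely at $\theta=\eps$, and (ii) ensure that the $\mathrm{poly}(z)$ (and, with a cruder bound $(1+\theta)^{z-1}\le 2^{z-1}$, potentially $2^{z}$) factors produced by the change of variables $a=(1+\theta)^z$ and the integration of $(1+\theta)^{z-1}$ against the Gaussian tail are dominated by $e^{\Omega(\eps^2 t)}$; this is precisely why the estimate is meaningful, and is stated, in the regime where $\eps^2 t$ exceeds any constant depending on $z$.
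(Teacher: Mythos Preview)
The paper does not prove this lemma; it is quoted from \cite[Eq.~(5)]{MakarychevMR19} and used as a black box in the proof of \Cref{lemma:fixed_centers_partiton}. Your argument is a correct self-contained derivation via the layer-cake formula and the standard sub-Gaussian tail for $\|Gx\|$ (itself a consequence of the chi-squared MGF bound, cf.\ \Cref{lem:gaussian_concentration}), and this is the natural route. One minor remark on your final step: absorbing the $O(z)/(t\eps)$ prefactor into the exponent so that the bound reads $e^{-\Omega(\eps^2 t)}$ with an \emph{absolute} constant requires $\eps^2 t=\Omega(\log z)$; you correctly flag this caveat, and it is harmless in the paper's only application, where $t=\Theta(z^2\eps^{-2}\log\eps^{-1})$.
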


\begin{proof}[Proof of \Cref{lemma:fixed_centers_partiton}]
    For every $i \in [k]$ and $p \in S_i$, by \Cref{lemma:expected_distortion_MMR},
    \begin{align*}
        \E\left(
        \max\left\{0, \frac{\|Gp - Gc_i\|^z}{\|p - c_i\|^z} - (1 + \epsilon)^z \right\}
        \right)
        \leq e^{-\Omega(\epsilon^2 t)}.
    \end{align*}
    Therefore, 
    \begin{align*}
        \E\left(
        \max\left\{0, \|Gp - Gc_i\|^z - (1 + \epsilon)^z \|p - c_i\|^z \right\}
        \right)
        \leq e^{-\Omega(\epsilon^2 t)} \|p - c_i\|^z.
    \end{align*}
    Summing up $i \in [k]$ and $p \in S_i$, 
    \begin{align*}
        \E\left(
        \sum_{i = 1}^k \sum_{p \in S_i} \max\left\{0, \|Gp - Gc_i\|^z - (1 + \epsilon)^z \|p - c_i\|^z \right\}
        \right)
        \leq e^{-\Omega(\epsilon^2 t)} 
        \sum_{i = 1}^k \sum_{p \in S_i}\|p - c_i\|^z.
    \end{align*}
    By Markov's inequality, with probability at least $9/10$, 
    \begin{align*}
        \sum_{i = 1}^k \sum_{p \in S_i} \max\left\{0, \|Gp - Gc_i\|^z - (1 + \epsilon)^z \|p - c_i\|^z \right\}
        \leq O(e^{-\Omega(\epsilon^2 t)})
        \sum_{i = 1}^k \sum_{p \in S_i}\|p - c_i\|^z,
    \end{align*}
    which further implies
    \begin{align*}
        \cost(G(\P), G(C)) \leq ((1 + \epsilon)^z + O(e^{-\Omega(\epsilon^2 t)})) \cost(\P, C)
        \leq (1 + O(z \epsilon + e^{-\Omega(\epsilon^2 t)})) \cost(\P, C). 
    \end{align*}
    Rescaling $\epsilon$ completes the proof.
\end{proof}

\section{Proof of \Cref{lem:chi_sqaure_lower_tail}}\label{appendix:chi_square}
\LemmaChiSquareLowerTail*
\begin{proof}
Without loss of generality, we assume $t$ is even. To see why there is no loss of generality, observe that if $t$ is odd, then one can consider $t'=t+1$, which is even; we have $X_{t'}\geq X_t$ and thus $\Pr(X_{t'}<x)\leq \Pr(X_t<x)$, hence we can bound with respect to $t'$.

    The density function of $X_t$ is 
    \[ \frac{1}{2^{t/2} \Gamma(t/2)} x^{t/2-1}e^{-x/2}. \]
    Thus,
    \begin{align*}
        \Pr\left(X < \frac{t}{1-\eps}\right) &\ge \int_0^{t/(1-\eps)} \frac{1}{2^{t/2} \Gamma(t/2)} x^{t/2-1}e^{-x/2} \ dx \\
        &\ge \frac{e^{-\frac{t}{2(1-\eps)}}}{2^{t/2}\Gamma(t/2)} \int_0^{t/(1-\eps)}  x^{t/2-1} \ dx \\
        &=  \frac{e^{-\frac{t}{2(1-\eps)}}}{2^{t/2}\Gamma(t/2) t/2}  \cdot \left( \frac{t}{1-\eps} \right)^{t/2} \\
        &= \frac{t^{t/2}}{2^{t/2} \Gamma(t/2) t/2} \cdot \left( e^{-\frac{t}{2(1-\eps)}}\cdot  \frac{1}{(1-\eps)^{t/2}}  \right).
    \end{align*}

Now we analyze the first term. From \cite{proofwiki:factorialbounds}, we have that 
\[\Gamma(t/2) = (t/2-1)! \le \frac{\left(\frac{t}{2} - 1\right)^{t/2}}{e^{t/2-2}}, \]
so 
\[ \frac{t^{t/2}}{2^{t/2} \Gamma(t/2)t/2}  \ge \frac{e^{t/2-2}t^{t/2}}{2^{t/2} (t/2-1)^{t/2} t/2} \ge e^{t/2-2} \cdot \frac{t^{t/2}}{2^{t/2} (t/2)^{t/2}t/2} \ge \Omega(e^{t/2}/t).\]
Furthermore,
\[  e^{-\frac{t}{2(1-\eps)}}\cdot  \frac{1}{(1-\eps)^{t/2}} = \exp\left(- \frac{t}2 \left( \frac{1}{1-\eps} + \log(1-\eps) \right) \right) \ge \exp\left(- \frac{t}2 \left( 1 + \tfrac{\eps^2}{2} \right) \right), \]
where we used $\log(1-\eps)\leq -\eps$ and $\tfrac{1}{1-\eps}\leq 1+\eps+\tfrac{\eps^2}{2}$, which hold for $\eps\leq \tfrac{1}{2}$.

Multiplying the above two bounds and canceling the $e^{t/2}$ term, we have 
\[ \Pr\left(X < \frac{t}{1-\eps}\right) \ge \frac{e^{-O(t \eps^2)}}t, \]
as desired.
\end{proof}
\section{Tightening the known lower bounds}\label{sec:tighten_known_lowerbound}
The lower bounds of \cite{NarayananSIZ21,CW22} were stated without the dependence on $\eps$. For completeness, we add this dependence in this section. The proof is by applying \Cref{lem:chi_sqaure_lower_tail} to their hard instances.
These lower bounds hold even for just preserving the optimum cost.

\begin{theorem}
    Let $d,k\in \N$ and $\eps\in(0,\tfrac{1}{2})$. There exists $P\subset\R^d$ of size $|P|=O(k)$ and $\ddim(P)=\Theta(1)$, such that if $G$ is a Gaussian JL map onto dimension $t\leq a\eps^{-2}\log k$ for a sufficiently small constant $a>0$, then with probability at least $2/3$, $\opt(G(P))<(1-\eps)\opt(P)$.
\end{theorem}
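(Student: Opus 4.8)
The plan is to reuse the $\Omega(\log k)$ lower‑bound instance of \cite{NarayananSIZ21} --- a set $P$ of $O(k)$ points with $\ddim(P)=O(1)$ --- and to sharpen the probabilistic core of their argument so that its (implicit) $\eps$‑dependence becomes the tight $\eps^{-2}$ factor, exactly as was done for the $\log\log n$ bound in \Cref{thm:lowerbound_loglogn_medoid_forall_c}. Their construction has the following shape: $\opt(P)$ is forced to ``pay'' for connecting certain point‑pairs of $P$, and one reads off from $P$ a family of $\Theta(k)$ unit vectors $v_1,\dots,v_{\Theta(k)}$ that live on disjoint coordinate blocks, so that $Gv_1,\dots,Gv_{\Theta(k)}$ are mutually independent. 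The heart of their analysis is that whenever \emph{some} index $i$ has $\|Gv_i\|\le 1-c\eps$ (for an absolute constant $c$ dictated by the Cauchy--Schwarz slack in their rearrangement), the clustering of $G(P)$ obtained by reassigning points through this contracted direction costs at most $(1-\eps)\opt(P)$; and a few complementary events (uniform upper bounds on the $\|Gv_i\|$ and on the cost of one fixed near‑optimal solution) hold with probability $1-o(1)$ via \Cref{lem:gaussian_concentration}. So it remains only to show that, for $t\le a\eps^{-2}\log k$, the event ``some $v_i$ contracts by a factor $1-c\eps$'' has probability at least, say, $0.9$.

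To this end I would first lower bound the single‑vector probability via \Cref{lem:chi_sqaure_lower_tail}: since $\tfrac{1}{\sqrt{1+\eps}}\le 1-c\eps$ for $\eps\in(0,\tfrac12)$ and a suitable small $c$, and $t\|Gv_i\|^2$ is $\chi^2_t$‑distributed,
\[
  \Pr\!\big(\|Gv_i\|\le 1-c\eps\big)\ \ge\ \Pr\!\Big(t\|Gv_i\|^2<\tfrac{t}{1+\eps}\Big)\ \ge\ \frac{e^{-O(\eps^2 t)}}{t}.
\]
Plugging in $t\le a\eps^{-2}\log k$ gives $e^{-O(\eps^2 t)}\ge k^{-O(a)}$, hence a per‑vector success probability of at least $k^{-O(a)}/t$, which for $a$ small enough --- and $k$ large enough in terms of $\eps$, just as in \Cref{thm:lowerbound_loglogn_medoid_forall_c} --- exceeds $k^{-2/3}$. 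By independence of $Gv_1,\dots,Gv_{\Theta(k)}$,
\[
  \Pr(\text{no }v_i\text{ contracts})\ \le\ \big(1-k^{-2/3}\big)^{\Theta(k)}\ \le\ e^{-\Omega(k^{1/3})}\ =\ o(1),
\]
so a union bound with the $O(1)$ auxiliary events (each of probability $1-o(1)$) yields the failure event with probability $\ge 2/3$; on that event $\opt(G(P))<(1-\eps)\opt(P)$.

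The step I expect to carry the most weight is the first one: faithfully re‑establishing the combinatorial reduction of \cite{NarayananSIZ21}, i.e.\ verifying that in their instance the contraction of a \emph{single} direction $v_i$ already forces a strictly cheaper clustering of $G(P)$, so that the \emph{only} $\eps$‑lossy inequality in their proof is the union bound over per‑direction contraction probabilities --- which the estimate above replaces by the sharp $\tfrac{1}{t}e^{-O(\eps^2 t)}$. Everything else --- the translation between the threshold $1-c\eps$ and the $\chi^2$ tail, the case where a constant number (rather than one) of the $v_i$ must contract, and the mild ``$k$ large enough'' caveat needed when $\eps$ is tiny relative to $k$ --- is routine and parallels the proof of \Cref{thm:lowerbound_loglogn_medoid_forall_c}.
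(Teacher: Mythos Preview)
Your proposal is correct and follows essentially the same approach as the paper: reuse the \cite{NarayananSIZ21} instance and sharpen the per-direction contraction probability via \Cref{lem:chi_sqaure_lower_tail}. The paper's argument is in fact even simpler than you anticipate --- the instance consists of $\tfrac{k+1}{2}$ pairs $(10i\,e_1,\,10i\,e_1+e_i)$, so $\opt(P)=1$ exactly (cover all but one pair with two centers each), and the contraction of any single $e_i$ to norm below $1-\eps$ immediately yields a solution for $G(P)$ of cost below $1-\eps$; no auxiliary events or Cauchy--Schwarz slack are needed.
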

\begin{proof}
    The proof is based on an instance by \cite{NarayananSIZ21}.
    Without loss of generality, assume $k$ is odd. Otherwise, just include a far-away point and use $k'=k-1$.
    The hard dataset instance is made of $\tfrac{k+1}{2}$ pairs $(10 i e_1, 10 i e_1 + e_i)$, where $e_i$ are the standard basis vectors. An optimal solution is to cover all but one pair using two centers, and cover one pair using one center. Thus the optimum value is just $1$ from this cost of covering a pair with one center. The instance clearly has $\ddim=O(1)$. Therefore, to preserve the cost, one has to preserve the distance for all pairs. However, the pairs are in different directions, and this is precisely the hard instance for the JL Lemma. Hence, when using target dimension $t=o(\eps^{-2}\log k)$, by \Cref{lem:chi_sqaure_lower_tail}, one pair contracts to distance below $1-\eps$ with high probability. Thus the cost reduces and the proof is concluded.
\end{proof}

\begin{theorem}
    Let $n\in \N$ and $\eps\in(0,\tfrac{1}{2})$, with $n=\Omega(\eps^{-2})$. Fix $k=2$. There exists $P\subset\R^d$ of size $|P|=n$, such that if $G$ is a Gaussian JL map onto dimension $t\leq a\eps^{-2}\log n$ for a sufficiently small constant $a>0$, then with probability at least $2/3$, $\opt(G(P))<(1-\eps)\opt(P)$.
\end{theorem}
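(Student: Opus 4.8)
The plan is to realize the Charikar--Waingarten hard instance sketched in the introduction and apply the chi-squared lower-tail bound to it; since that instance has doubling dimension $\log n$, this yields the $\eps$-dependent strengthening of the $\Omega(\eps^{-2}\ddim)$ bound. Take $d=n$ and $P=\{e_1,\dots,e_n\}$, the standard basis of $\R^n$. Any two distinct points of $P$ form an optimal center pair, so $\opt(P)=\sqrt2\,(n-2)$. Split $[n]$ into halves $A=\{1,\dots,n/2\}$ and $B=\{n/2+1,\dots,n\}$, and note that $Ge_1,\dots,Ge_n$ are mutually independent with $\|Ge_i\|^2$ distributed as $X_t/t$ for $X_t$ chi-squared with $t$ degrees of freedom; in particular the columns indexed by $A$ are independent of those indexed by $B$. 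The goal is to show that, with probability $\ge 2/3$, each half contains a ``short'' column and that re-clustering with two such columns as centers already drops the cost below $(1-\eps)\opt(P)$. Throughout I restrict to $\eps$ below a small absolute constant, which is the regime in which the bound is meaningful.

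First I would produce the short columns. By \Cref{lem:chi_sqaure_lower_tail} (applied with parameter $\Theta(\eps)$), there is a universal constant $c_1$ with $p:=\Pr(\|Ge_i\|\le 1-10\eps)\ge e^{-c_1\eps^2 t}/\poly(t)$; since $t\le a\eps^{-2}\log n$ this is at least $n^{-c_1 a}/\poly(\tfrac1\eps,\log n)$, which for $a$ a small enough constant and $n=\Omega(\eps^{-2})$ exceeds $C/n$ for any prescribed constant $C$. Because the $n/2$ columns in each half are independent, the probability that $A$ contains no index $j_A$ with $\|Ge_{j_A}\|\le 1-10\eps$ is at most $(1-p)^{n/2}\le e^{-pn/2}=o(1)$, and likewise for $B$; moreover these two ``short column exists'' events are independent, being determined by disjoint sets of columns.

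Next I would bound $\opt(G(P))$. Condition on the $B$-columns and suppose the short index $j_B\in B$ exists (so $\|Ge_{j_B}\|\le 1-10\eps$); then for $i\in A$ the variables $\|Ge_i-Ge_{j_B}\|$ are i.i.d.\ functions of the fresh, unconditioned columns $Ge_i$, with $\E\|Ge_i-Ge_{j_B}\|^2=1+\|Ge_{j_B}\|^2\le 2-\Omega(\eps)$, hence $\E\|Ge_i-Ge_{j_B}\|\le \sqrt2\,(1-\Omega(\eps))$ by concavity of the square root, and variance at most $2$. So Chebyshev gives $\sum_{i\in A}\|Ge_i-Ge_{j_B}\|\le \sqrt2\,(1-\Omega(\eps))\tfrac n2+O(\sqrt n)$ with probability at least $0.99$, and symmetrically (conditioning on the $A$-columns) the same bound holds for $\sum_{i\in B}\|Ge_i-Ge_{j_A}\|$. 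Take $\{Ge_{j_A},Ge_{j_B}\}$ as the center set, assign each $A$-point to $Ge_{j_B}$ and each $B$-point to $Ge_{j_A}$; a union bound over the two ``exists'' events and the two concentration events keeps the total failure probability below $1/3$, and on the good event
\[
  \opt(G(P))\le \sqrt2\,(1-\Omega(\eps))\,n+O(\sqrt n).
\]
Comparing with $(1-\eps)\opt(P)=(1-\eps)\sqrt2\,(n-2)$, the claim $\opt(G(P))<(1-\eps)\opt(P)$ reduces to $O(\sqrt n)<\Omega(\eps n)$, which holds once $n\ge C\eps^{-2}$ for a sufficiently large constant $C$ --- and this is the only place the hypothesis $n=\Omega(\eps^{-2})$ is used.

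The main obstacle is the probabilistic bookkeeping, in two respects. First, $\Pr(\|Ge_i\|\le 1-10\eps)$ must be comfortably larger than $1/n$; this needs the chi-squared lower tail to be of order $e^{-\Theta(\eps^2 t)}$ up to a benign polynomial factor (the sharper estimate referenced after \Cref{lem:chi_sqaure_lower_tail} is convenient), and together with $t\le a\eps^{-2}\log n$ it forces $a$ to be a small constant. Second, one must decouple ``a short column exists'' from the concentration of the distance sums --- this is exactly why the two-halves split is used: the center used for one half is a function only of the other half's (independent) columns, so each half's sum is a genuine sum of i.i.d.\ terms with no conditioning to fight against. The remaining ingredients --- the first two moments of $\|Ge_i-Ge_{j_B}\|$ and the closing arithmetic --- are routine.
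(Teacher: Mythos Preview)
Your proposal is correct and follows essentially the same route as the paper: the standard-basis instance, the two-halves split with each half's center drawn from the other half, the chi-squared lower-tail lemma to produce a short column in each half, and a Chebyshev-plus-union-bound finish to control the cost sums. Your conditioning on the $B$-columns before applying Chebyshev is in fact more explicit about the independence structure than the paper's own proof, which works with unconditional expectations of $\|Ge_{j_2}-Ge_i\|$ and leaves the variance-of-the-sum step implicit.
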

\begin{proof}
    The proof is based on \cite{CW22} and a sketch is given in \Cref{sec:main_results}. 
    Recall that the instance is the first $n$ standard basis vectors.
    We now complete the proof sketch into a full proof.
    Let $j_1\in[n/2]$ and $j_2\in [\tfrac{n}{2}+1,n]$ be the indices minimizing $\|Ge_j\|$ in their regime.
    By \Cref{lem:chi_sqaure_lower_tail}, $\E \|Ge_{j_i}\|^2<1-\eps$ and $\E(\sqrt{\|Ge_{j_i}\|^2+1})<(1-\eps)\sqrt{2}$ for $i=1,2$.
    Next, consider $i\in [\tfrac{n}{2}]$. We have
    \begin{align*}
        \E \|Ge_{j_2}-Ge_i\| &= \E\Big(\E\big(\|Ge_{j_2}-Ge_i\| \ \big| \ Ge_{j_2}\big)\Big) && \text{law of total expectation} \\ 
        &\leq \E\left(\sqrt{\E\big(\|Ge_{j_2}-Ge_i\|^2 \ \big| \ Ge_{j_2}\big)}\right) && \text{Jensen's inequality} \\
        &= \E(\sqrt{\|Ge_{j_2}\|^2+1}) && \text{independence}
        \\
        &<  (1-\eps)\sqrt{2},
    \end{align*}
    and
    \[
    \var(\|Ge_{j_2}-Ge_i\|)\leq \E \|Ge_{j_2}-Ge_i\|^2 = \E(\|Ge_{j_2}\|^2+1)<2-\eps,
    \]
    and the same holds for $i\in [\tfrac{n}{2}+1,n]$ and $j_1$. Therefore, by Chebyshev's inequality and a union bound, with probability $2/3$, 
    \[
    \opt(G(P))\leq \sum_{i\in [\frac{n}{2}]} \|Ge_i-Ge_{j_2}\| + \sum_{i\in [\frac{n}{2}+1,n]} \|Ge_i-Ge_{j_1}\|<(1-\Omega(\eps))\sqrt{2}\cdot n.
    \]
    The proof is concluded by rescaling $\eps$.
\end{proof}
\end{document}